\title{On multidimensional inverse\\
scattering under the Stark effect}
\author{Tadayoshi ADACHI and Yuta TSUJII\\
{\footnotesize Division of Mathematical and Information Sciences,}\\
{\footnotesize Graduate School of
Human and Environmental Studies,
Kyoto University}\\
{\footnotesize Yoshida-Nihonmatsu-cho, Sakyo-ku, Kyoto-shi, Kyoto 606-8501, Japan}}
\date{ }
\theoremstyle{plain}
\newtheorem{thm}{Theorem}[section]
\newtheorem{prop}[thm]{Proposition}
\newtheorem{lem}[thm]{Lemma}
\theoremstyle{definition}
\newtheorem{rem}{Remark}[section]
\numberwithin{equation}{section}
\DeclareMathOperator*{\slim}{s-lim}
\begin{document}
\maketitle

\begin{abstract}
We study one of multidimensional inverse scattering problems
for quantum systems in a constant electric field,
by utilization of the Enss-Weder time-dependent method.
The main purpose of this paper is to propose
some methods of sharpening
key estimates in the analysis, which are much simpler
than those in the previous works.
Our methods give an appropriate class
of short-range potentials which can be determined uniquely
by scattering operators, that seems natural in terms of
direct scattering problems.
\end{abstract}

\section{Introduction}

In this paper, we consider one of inverse scattering problems for
quantum systems in a constant electric field $E\in
\boldsymbol{R}^n$,
by applying the Enss-Weder time-dependent method.
Throughout this paper, we assume that
$n\ge2$, and suppose $E=e_1=(1,0,\ldots,0)$. The Hamiltonian $H$
under consideration is given by
\begin{equation}
H=H_0+V;\quad
H_0=p^2/2-E\cdot x=p^2/2-x_1,
\end{equation}
acting on $L^2(\boldsymbol{R}^n)$, where
$x=(x_1,x_2,\ldots,x_n)=(x_1,x_\perp)\in\boldsymbol{R}^n$ and
$p=-i\nabla=(p_1,p_2,\ldots,p_n)=(p_1,p_\perp)$.
We suppose that the potential $V$ is the multiplication
operator by the real-valued time-independent function $V(x)$.
$H_0$ is called the free Stark Hamiltonian, and
$H$ is called a Stark Hamiltonian.
It is well-known that $H_0$ is essentially self-adjoint on $\mathscr{S}(\boldsymbol{R}^n)$ (see e.g. Avron-Herbst~\cite{AH}).
The self-adjoint realization of $H_0$ is also denoted by $H_0$.
Under a certain appropriate condition on $V$, the self-adjointness
of $H$ can be guaranteed. As is well-known, if $V$ satisfies
a short-range condition under the Stark effect that
$|V(x)|\le C\langle x\rangle^{-\gamma_0}$ holds for some $\gamma_0>1/2$,
then the wave operators
\begin{equation}
W^\pm=\slim_{t\to\pm\infty}e^{itH}e^{-itH_0}
\label{1.2}
\end{equation}
exist (see e.g. \cite{AH}). Here $\langle x\rangle=\sqrt{1+x^2}$.
Then the scattering operator $S=S(V)$ is defined by
\begin{equation}
S=(W^+)^*W^-.
\label{1.3}
\end{equation}
Roughly speaking,
we are interested in the \textit{widest} class of short-range potentials
which can be determined uniquely by scattering operators.
In this paper, via the Enss-Weder time-dependent method
(see Enss-Weder~\cite{EW}), we would like to give a certain 
appropriate class
of short-range potentials which may not necessarily be
the widest one but be very close to it, and
is natural also in terms of
direct scattering problems.

In order to state our results precisely, we
make some preparations:
We assume that $V(x)$ is represented as
a sum of a very short-range part $V^\mathrm{vs}(x)$,
a short-range part $V^\mathrm{s}(x)$ and
a long-range part $V^\mathrm{l}(x)$ under the Stark effect:
\begin{equation}
V(x)=V^\mathrm{vs}(x)+V^\mathrm{s}(x)+V^\mathrm{l}(x).
\end{equation}

We say that
$V^\mathrm{vs}\in\mathscr{V}^\mathrm{vs}$ if $V^\mathrm{vs}(x)$
is a real-valued time-independent function and is
decomposed into a sum of a singular part $V_1^\mathrm{vs}(x)$
and a regular part $V_2^\mathrm{vs}(x)$, and
$V_1^\mathrm{vs}$ is compactly
supported and belongs to $L^{q_0}(\boldsymbol{R}^n)$,
where $q_0$ satisfies that $q_0>n/2$ and $q_0\ge2$,
and $V_2^\mathrm{vs}\in C^0(\boldsymbol{R}^n)$ is bounded in
$\boldsymbol{R}^n$ and satisfies
\begin{equation}
\int_0^\infty\|V_2^\mathrm{vs}(x)F(|x|\ge R)\|_{\mathscr{B}(L^2)}\,dR<\infty.
\end{equation}
Here $F(|x|\ge R)$ is the characteristic function of
$\bigl\{x\in\boldsymbol{R}^n\bigm||x|\ge R\bigr\}$.
This condition yields
\begin{equation}
\int_0^\infty\|V^\mathrm{vs}(x)F(|x|\ge R)(1+K_0)^{-1}\|_{\mathscr{B}(L^2)}\,dR<\infty,
\end{equation}
where
\begin{equation}
K_0=p^2/2=-\Delta/2
\end{equation}
is the free Schr\"odinger operator. As is well-known,
this is equivalent to
\begin{equation}
\int_0^\infty\|V^\mathrm{vs}(x)(1+K_0)^{-1}F(|x|\ge R)\|_{\mathscr{B}(L^2)}\,dR<\infty\label{1.8}
\end{equation}
(see e.g. Reed-Simon~\cite{RS}).
Under the condition \eqref{1.8}, the wave operators
\begin{equation}
\varOmega^\pm=\slim_{t\to\pm\infty}e^{itK}e^{-itK_0}
\label{1.9}
\end{equation}
exist, and are asymptotically complete (see e.g. Enss~\cite{E1}).
Here $K=K_0+V^\mathrm{vs}$ is a
Schr\"odinger operator with a short-range potential $V^\mathrm{vs}$.
Then the scattering operator $\Sigma=\Sigma(V^\mathrm{vs})$ is defined by
\begin{equation}
\Sigma=(\varOmega^+)^*\varOmega^-.
\label{1.10}
\end{equation}
By virtue of the results of \cite{EW}, we know that the following holds:
Let $V_1,\,V_2\in\mathscr{V}^\mathrm{vs}$.
If $\Sigma(V_1)=\Sigma(V_2)$, then $V_1=V_2$.

We say that $V^\mathrm{s}\in\mathscr{V}^\mathrm{s}(\tilde{\gamma}_0,\tilde{\gamma}_1)$
with $\tilde{\gamma}_0\ge1/2$ and $\tilde{\gamma}_1\ge1$ if
$V^\mathrm{s}(x)$
is a real-valued time-independent function,
belongs to $C^1(\boldsymbol{R}^n)$ and
satisfies
\begin{equation}
|(\partial^\beta V^\mathrm{s})(x)|\le C_\beta
\langle x\rangle^{-\gamma_{|\beta|}},\quad |\beta|\le1,
\end{equation}
with some $\gamma_0,\,\gamma_1$ such that
$\tilde{\gamma}_0<\gamma_0\le 1$ and
$\tilde{\gamma}_1<\gamma_1\le1+\gamma_0$.
In Weder~\cite{We1} and Adachi-Maehara~\cite{AM},
$\gamma_0$ and $\gamma_1$ were represented as
$\gamma$ and $1+\alpha$, respectively.
However, in this paper, we will
use the above notation for the sake of clarification of
our theory. If $K$ is equal to $K_0+V^\mathrm{vs}+V^\mathrm{s}$,
then $\varOmega^\pm$ in \eqref{1.9} do not exist generally,
because of the long-range condition $\gamma_0\le1$ for $K_0$.
On the other hand, when $H$ is equal to $H_0+V^\mathrm{vs}+V^\mathrm{s}$,
by virtue of the short-range condition $\gamma_0>1/2$ for $H_0$,
it can be shown that $W^\pm$ in \eqref{1.2} exist, and
are asymptotically complete (see e.g. Herbst~\cite{H} and Yajima~\cite{Ya1}).
Hence, we can introduce $S$ in \eqref{1.3}
instead of $\Sigma$ in \eqref{1.10}, as the scattering operator.
Thus we mainly consider the case where $\tilde{\gamma}_0=1/2$.
We note that if $a<b$, then $\mathscr{V}^\mathrm{s}(\tilde{\gamma}_0,b)\subsetneq\mathscr{V}^\mathrm{s}(\tilde{\gamma}_0,a)$.
The condition $\gamma_1>1$ is
necessary for introducing 
the $v$-dependent Graf-type modifier
$M_{G,v}^\mathrm{s}(t)=e^{-i\int_0^tV^\mathrm{s}(vs+e_1s^2/2)\,ds}$,
which was first introduced in \cite{AM}.
Hence we assume $\tilde{\gamma}_1\ge1$ beforehand.

Finally we say that
$V^\mathrm{l}\in\mathscr{V}_D^\mathrm{l}(\tilde{\gamma}_{D,0})$ with
$\tilde{\gamma}_{D,0}\ge1/4$ if $V^\mathrm{l}(x)$ is
a real-valued time-independent function, belongs to $C^2(\boldsymbol{R}^n)$
and satisfies
\begin{equation}
|(\partial^\beta V^\mathrm{l})(x)|\le C_\beta
\langle x\rangle^{-\gamma_D-|\beta|/2},\quad |\beta|\le2,
\end{equation}
with some $\gamma_D$ such that $\tilde{\gamma}_{D,0}<\gamma_D\le1/2$.
If $H$ is equal to $H_0+V^\mathrm{vs}+V^\mathrm{s}+V^\mathrm{l}$,
then $W^\pm$ in \eqref{1.2} do not exist generally, because of
the long-range condition $\gamma_D\le1/2$ for $H_0$. However,
by virtue of the condition $\gamma_D>1/4$,
the Dollard-type modified wave operators
\begin{equation}
W_D^\pm=\slim_{t\to\pm\infty}e^{itH}e^{-itH_0}e^{-i\int_0^tV^\mathrm{l}(ps+e_1s^2/2)\,ds}
\end{equation}
exist,
and are asymptotically complete (see e.g. Jensen-Yajima~\cite{JY},
White~\cite{W} and Adachi-Tamura~\cite{AT2}).
$\mathscr{V}_D^\mathrm{l}(\tilde{\gamma}_{D,0})$'s
are classes of long-range potentials which are appropriate for
the introduction of $W_D^\pm$.

We first consider the short-range case, that is, the case
where $V^\mathrm{l}=0$.
As mentioned above, then $W^\pm$ in \eqref{1.2}
exist, and $S=S(V)$ in \eqref{1.3} can be defined.
The following result is one of those which
we would like to report on in this paper:

\begin{thm}\label{thm1.1}
Let $V_1,\,V_2\in\mathscr{V}^\mathrm{vs}+
\mathscr{V}^\mathrm{s}(1/2,5/4)$.
If $S(V_1)=S(V_2)$, then $V_1=V_2$.
\end{thm}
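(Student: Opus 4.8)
The plan is to use the Enss--Weder time-dependent method to show that the scattering operator determines a line integral (an X-ray transform) of the potential, and then to invoke the injectivity of that transform. Since both $V_1$ and $V_2$ satisfy the Stark short-range condition $\gamma_0>1/2$, the wave operators $W^\pm$ and the scattering operators $S(V_1),\,S(V_2)$ are well defined, so it suffices to prove a reconstruction formula recovering $V$ from $S(V)$ and then to apply it to $V_1-V_2$. First I would fix $\Phi,\Psi\in\mathscr{S}(\boldsymbol{R}^n)$ and introduce the high-velocity states $\Phi_v=e^{iv\cdot x}\Phi$, $\Psi_v=e^{iv\cdot x}\Psi$, letting $|v|\to\infty$. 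The boost $e^{iv\cdot x}$ shifts momentum by $v$, so under the free Stark flow the center of $e^{-itH_0}\Phi_v$ travels along the classical trajectory $x+vt+e_1t^2/2$; this is precisely the curve appearing in the $v$-dependent Graf modifier $M^{\mathrm s}_{G,v}(t)=e^{-i\int_0^tV^{\mathrm s}(vs+e_1s^2/2)\,ds}$.

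Next I would extract the leading high-velocity asymptotics. Writing $i(S-1)=i(W^+)^*(W^--W^+)$ and using Cook's method, $W^--W^+$ is expressed through $\int_{-\infty}^\infty e^{itH}Ve^{-itH_0}\,dt$, with the short-range modifier inserted to render the time integral absolutely convergent. The key step is to show that, as $|v|\to\infty$, the wave operators act as the identity on $\Phi_v,\Psi_v$ up to errors of order $o(|v|^{-1})$, so that the leading term of $(i(S-1)\Phi_v,\Psi_v)$ reduces to the free-free expression $\int_{-\infty}^\infty(Ve^{-itH_0}\Phi_v,e^{-itH_0}\Psi_v)\,dt$. Evaluating this integral with the substitution $s=|v|t$ and taking $|v|\to\infty$ reproduces, after multiplication by $|v|$,
\begin{equation}
\lim_{|v|\to\infty}|v|\,(i(S-1)\Phi_v,\Psi_v)
=\int_{-\infty}^\infty\bigl(V(x+s\hat v)\Phi,\Psi\bigr)_{L^2}\,ds,
\qquad \hat v=v/|v|.
\end{equation}
Since this holds for a rich enough family of directions $\hat v$, the right-hand side is the X-ray transform of $V$; applying it to the hypothesis $S(V_1)=S(V_2)$ gives equal X-ray transforms of $V_1$ and $V_2$, and injectivity of the X-ray transform for $n\ge2$ yields $V_1=V_2$.

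The decomposition $V=V^{\mathrm{vs}}+V^{\mathrm s}$ must be treated according to its two parts. For the very short-range part, and in particular its singular $L^{q_0}$ component, I would avoid derivatives and control the time integral by H\"older's inequality together with the dispersive decay of $e^{-itH_0}$, in the spirit of the known $\mathscr{V}^{\mathrm{vs}}$ result quoted above. For the regular short-range part $V^{\mathrm s}$, the gradient bound $|(\partial^\beta V^{\mathrm s})(x)|\le C_\beta\langle x\rangle^{-\gamma_1}$ with $|\beta|=1$ enters when one replaces the true trajectory $x+vt+e_1t^2/2$ by its high-velocity limit and estimates the remainder via a first-order Taylor expansion.

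I expect the main obstacle to be exactly this last family of error terms, which is where the Stark acceleration $e_1t^2/2$ is felt: for large $|t|$ the quadratic term dominates the trajectory, the decay of $V^{\mathrm s}$ along the curve is weaker than in the field-free case, and a naive dominated-convergence argument fails because the relevant bound blows up in $|v|$. The content of the sharpened estimates is to show that, after inserting the modifier $M^{\mathrm s}_{G,v}(t)$ (which absorbs the borderline part already when $\gamma_1>1$) and exploiting the extra dispersion generated by the field in the $e_1$-direction, these remainders are still $o(|v|^{-1})$ precisely when $\gamma_1>5/4$. Establishing that this single condition on the gradient suffices --- rather than the stronger decay demanded by the cruder estimates of the previous works --- is the crux of the argument and the step I would expect to require the most care.
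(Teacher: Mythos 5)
Your overall strategy (high-velocity limit via the Enss--Weder method, reduction to a line-integral transform, injectivity of that transform) is the right skeleton, but two of your concrete steps fail for the class $\mathscr{V}^\mathrm{s}(1/2,5/4)$, and they are precisely the points the paper's argument is organized around. First, your proposed reconstruction formula
\begin{equation*}
\lim_{|v|\to\infty}|v|\,(i(S-1)\varPhi_v,\varPsi_v)
=\int_{-\infty}^\infty\bigl(V(x+s\hat v)\varPhi,\varPsi\bigr)\,ds
\end{equation*}
is ill-posed here: $V^\mathrm{s}$ is only assumed to decay like $\langle x\rangle^{-\gamma_0}$ with $\gamma_0$ possibly just above $1/2$, so the line integral $\int_{-\infty}^\infty V^\mathrm{s}(x+s\hat v)\,ds$ diverges, and correspondingly $\int\|V^\mathrm{s}(x)e^{-itH_0}\varPhi_v\|\,dt$ is not absolutely convergent without a subtraction. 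The paper's reconstruction formula (Theorem \ref{thm2.1}) is instead for the commutator $|v|(i[S,p_j]\varPhi_v,\varPsi_v)$, rewritten as $[S-I_{G,v}^\mathrm{s},p_j-v_j]$ with the Graf phase $I_{G,v}^\mathrm{s}=e^{-i\int_{-\infty}^\infty V^\mathrm{s}(vs+e_1s^2/2)\,ds}$ subtracted rather than $1$; what one recovers in the limit is the X-ray transform of $\partial_jV^\mathrm{s}$ (convergent because $\gamma_1>5/4>1$) together with the commutator data $(V^\mathrm{vs}p_j\varPhi_0,\varPsi_0)-(V^\mathrm{vs}\varPhi_0,p_j\varPsi_0)$ for the very short-range part, and uniqueness follows from the Radon--Plancherel formula applied to these derivative and commutator data, not to $V$ itself. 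Your remark that the modifier ``absorbs the borderline part'' gestures at this, but the formula you actually wrote cannot be the output of the argument for this class.

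Second, your error budget is off in a way that matters for the threshold $5/4$. You require the wave operators to act as the identity up to $o(|v|^{-1})$; in this class the best available bound (Lemma \ref{lem2.5}) is $O(|v|^{\max\{-1,-2(\gamma_1-1)+\epsilon\}})$, which for $\gamma_1$ slightly above $5/4$ is only $o(|v|^{-1/2})$. That nevertheless suffices because the remainder $R(v)$ is a product of two such small factors (the wave-operator error paired against the integrated potential norm), so that $|R(v)|=O(|v|^{1+2\max\{-1,-2(\gamma_1-1)+\epsilon\}})$ and the condition for $R(v)\to0$ is exactly $5-4\gamma_1<0$. Demanding $o(|v|^{-1})$ from a single factor would force $\gamma_1>3/2$, i.e., the older Adachi--Maehara class, not $\mathscr{V}^\mathrm{s}(1/2,5/4)$. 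Relatedly, the Stark acceleration helps rather than hurts: for large $|t|$ one has $|vt+e_1t^2/2|\gtrsim|t|^2$, and the interpolation $|vt+e_1t^2/2|\ge c\,|v|^\nu|t|^{2-\nu}$ between the ballistic and accelerated regimes is exactly what drives the sharpened estimate of Lemma \ref{lem2.4}.
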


Theorem \ref{thm1.1} was first proved by Weder~\cite{We1} for
$\mathscr{V}^\mathrm{vs}+\mathscr{V}^\mathrm{s}(3/4,1)$. However,
the short-range parts $V^\mathrm{s}$'s with $1/2<\gamma_0\le3/4$
cannot be treated by the argument of \cite{We1} unfortunately.
Later Nicoleau~\cite{N1} proved this theorem under
the condition that the short-range parts $V^\mathrm{s}$'s
belong to $C^\infty(\boldsymbol{R}^n)$
and satisfy
\begin{equation}
|(\partial^\beta V^\mathrm{s})(x)|\le C_\beta\langle x\rangle^{-\gamma_0-|\beta|}
\end{equation}
with some $\gamma_0>1/2$, and the additional condition $n\ge3$.
\cite{N1} is the first work which treated the case where $1/2<\gamma_0\le3/4$
and suggested that the possible threshold with respect to $\tilde{\gamma}_0$
is equal to $1/2$.
Here we note that $\gamma_1=1+\gamma_0$ is supposed
in \cite{N1}.
After that, Adachi-Maehara~\cite{AM} proved this theorem
for $\mathscr{V}^\mathrm{vs}+\mathscr{V}^\mathrm{s}(1/2,3/2)$
under the condition not $n\ge3$ but $n\ge2$
(see also Adachi-Kamada-Kazuno-Toratani~\cite{AKKT} as for the case where
time-dependent electric fields
are decaying in $|t|$, and Valencia-Weder~\cite{VW}
as for the many body case in a constant electric field).
In \cite{AM}, for the sake of
relaxing the smoothness condition on $V^\mathrm{s}$'s,
the $v$-dependent Graf-type modifier
$M_{G,v}^\mathrm{s}(t)=e^{-i\int_0^tV^\mathrm{s}(vs+e_1s^2/2)\,ds}$
was introduced instead of the Dollard-type modifier
$e^{-i\int_0^tV^\mathrm{s}(p_\perp s+e_1s^2/2)\,ds}$
which was utilized in \cite{N1}.
Hence, how small $\tilde{\gamma}_1$ of $\mathscr{V}^\mathrm{s}(1/2,\tilde{\gamma}_1)$ in Theorem \ref{thm1.1} can be taken has become a problem
to be studied. In Adachi-Fujiwara-Ishida~\cite{AFI},
which treated also the case where the electric fields are time-dependent,
$\tilde{\gamma}_1$ was taken as $(15-\sqrt{17})/8$,
which is greater than $5/4$, by improving the estimates
in a series of lemmas obtained in \cite{AM}.
And, recently Ishida~\cite{I} stated that
$\tilde{\gamma}_1$ could be taken as $1$.
In the direct scattering theory under the Stark effect,
we often suppose that smooth short-range potentials
$V^\mathrm{s}$'s satisfy
\begin{equation}
|(\partial^\beta V^\mathrm{s})(x)|\le C_\beta\langle x\rangle^{-\gamma_0-|\beta|/2}
\end{equation}
with some $\gamma_0>1/2$ (see e.g. \cite{JY}, \cite{W} and \cite{AT2}).
From this viewpoint, one can expect that
the possible threshold with respect to $\tilde{\gamma}_1$ is
equal to $1/2+1/2=1$. But, since
in \cite{I} some misapplications of the H\"older inequality
were made on key points in the argument unfortunately (for the detail,
see Remark \ref{rem2.1} in \S2), it seems difficult to say
that the results of \cite{I} were obtained rigorously.
Nevertheless, \cite{I} does include a nice device
for improving the results obtained in the previous works.
In this paper, we utilize the
device due to \cite{I} as well as our methods which sharpen
the estimates in a series of useful lemmas in obtaining the main results.
By virtue of these, $\tilde{\gamma}_1$ in Theorem \ref{thm1.1}
can be taken as $5/4$ at most.

We next consider the long-range case, that is,
the case where $V^\mathrm{l}\not=0$.
As mentioned above, since $\tilde{\gamma}_{D,0}\ge1/4$,
if $V^\mathrm{l}\in\mathscr{V}_D^\mathrm{l}(\tilde{\gamma}_{D,0})$, then
the Dollard-type modified wave operators
$W_D^\pm$ exist, and the Dollard-type modified
scattering operator $S_D=S_D(V^\mathrm{l};V^\mathrm{vs}+V^\mathrm{s})
=S_D(V^\mathrm{l};V-V^\mathrm{l})$ is defined by
\begin{equation}
S_D=(W_D^+)^*W_D^-.
\end{equation}
Then we also obtain the following result:

\begin{thm}\label{thm1.2}
Suppose that $V^\mathrm{l}\in\mathscr{V}_D^\mathrm{l}(3/8)$ is given.
Let $V_1,\,V_2\in\mathscr{V}^\mathrm{vs}+\mathscr{V}^\mathrm{s}(1/2,$
$5/4)$.
If $S_D(V^\mathrm{l};V_1)=S_D(V^\mathrm{l};V_2)$, then $V_1=V_2$.
Moreover, any one of the Dollard-type modified
scattering operators $S_D$ determines uniquely the total potential
$V$.
\end{thm}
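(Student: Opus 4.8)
The plan is to prove Theorem~\ref{thm1.2} by the same Enss--Weder high-velocity analysis that underlies Theorem~\ref{thm1.1}, now carrying the Dollard-type phase $e^{-i\int_0^tV^\mathrm{l}(ps+e_1s^2/2)\,ds}$ through the computation so that the long-range part is renormalized away to leading order. First I would fix the given $V^\mathrm{l}\in\mathscr{V}_D^\mathrm{l}(3/8)$ and, for $V_1,V_2\in\mathscr{V}^\mathrm{vs}+\mathscr{V}^\mathrm{s}(1/2,5/4)$, reduce the hypothesis $S_D(V^\mathrm{l};V_1)=S_D(V^\mathrm{l};V_2)$ to an identity between directional (Radon-type) transforms of $V_1$ and $V_2$; inverting that transform, exactly as in the short-range situation, then gives $V_1=V_2$.

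To produce the reconstruction formula I would test $i(S_D-1)$ against boosted states $\Phi_v=e^{iv\cdot x}\Phi_0$ and $\Psi_v=e^{iv\cdot x}\Psi_0$ with $\Phi_0,\Psi_0$ localized in phase space, and study a suitably $|v|$-normalized limit of $(\Phi_v,i(S_D-1)\Psi_v)$. Writing $S_D-1=(W_D^+)^*(W_D^--W_D^+)$ and differentiating $\Theta(t)=e^{itH}e^{-itH_0}e^{-i\int_0^tV^\mathrm{l}(ps+e_1s^2/2)\,ds}$ in $t$, the Duhamel integrand produces the effective interaction
\begin{equation}
(V^\mathrm{vs}+V^\mathrm{s})(x+pt+e_1t^2/2)+\bigl[V^\mathrm{l}(x+pt+e_1t^2/2)-V^\mathrm{l}(pt+e_1t^2/2)\bigr],
\end{equation}
in which the Dollard phase has subtracted the on-trajectory value of $V^\mathrm{l}$; to cope with the mere $C^1$-regularity of $V^\mathrm{s}$ I would additionally insert the $v$-dependent Graf-type modifier $M_{G,v}^\mathrm{s}(t)$ as in \cite{AM}. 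After boosting $p\mapsto p+v$ and integrating in $t$, the first term converges, to leading order in $|v|$, to the directional transform of $V^\mathrm{vs}+V^\mathrm{s}$ along the accelerated trajectory and contributes through the factor $(\Phi_0,\Psi_0)$. The bracketed long-range term is a difference whose individually non-integrable parts cancel, leaving a contribution of the form $\int(\Phi_0,x\cdot(\nabla V^\mathrm{l})(\,\cdot\,)\Psi_0)\,dt$; since $\nabla V^\mathrm{l}$ decays like $\langle x\rangle^{-\gamma_D-1/2}$ with $\gamma_D>3/8$, this converges and is controllable. Because the short-range part enters only through $(\Phi_0,\Psi_0)$ while the long-range part carries an explicit factor of $x$, the two are of different order and can be separated; inverting the directional transform over the admissible directions $\hat v$ recovers $V^\mathrm{vs}+V^\mathrm{s}$, which proves the first assertion.

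For the final assertion I would observe that the very same computation extracts the long-range part as well. Once $V^\mathrm{vs}+V^\mathrm{s}$ has been read off from the $(\Phi_0,\Psi_0)$-component, the $x$-linear component of the normalized limit is dominated, because $\nabla V^\mathrm{l}$ decays more slowly than $\nabla(V^\mathrm{vs}+V^\mathrm{s})$, by a first-order directional transform of $V^\mathrm{l}$; inverting it determines $\nabla V^\mathrm{l}$ over all directions and hence $V^\mathrm{l}$ itself, the constant being fixed by decay at infinity. Thus $S_D$ determines $V^\mathrm{vs}+V^\mathrm{s}$ and $V^\mathrm{l}$ separately, and therefore the total potential $V=V^\mathrm{vs}+V^\mathrm{s}+V^\mathrm{l}$; in particular two data $(V_1^\mathrm{l},V_1)$ and $(V_2^\mathrm{l},V_2)$ producing the same operator $S_D$ must satisfy $V_1^\mathrm{l}+V_1=V_2^\mathrm{l}+V_2$.

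The hard part is not the algebra above but the uniform estimation of the Duhamel remainder, that is, the second- and higher-order terms in the expansion of $i(S_D-1)$ together with the error incurred in replacing the true evolution by the free accelerated trajectory and in inserting $M_{G,v}^\mathrm{s}(t)$. These must be shown to vanish after the $|v|$-normalization under the weak hypotheses $\tilde{\gamma}_1>5/4$ and $\gamma_D>3/8$, using only $V^\mathrm{s}\in C^1$, $V^\mathrm{l}\in C^2$ and the singular component $V_1^\mathrm{vs}\in L^{q_0}$. This is exactly where the sharpened versions of the lemmas of \cite{AM} and the device of \cite{I} enter, and where the H\"older-type estimates must be arranged so as to avoid the misapplication pointed out in Remark~\ref{rem2.1}; the margin $\gamma_D>3/8$ above the existence threshold $1/4$ for $W_D^\pm$ is precisely what these estimates demand in order to absorb the long-range modifier error at the level needed to reach $\tilde{\gamma}_1=5/4$.
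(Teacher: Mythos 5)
Your high-level plan (high-velocity Enss--Weder analysis with the Dollard phase carried along, followed by Radon inversion) is the right one, but the central object you propose to study is not the one that works, and this is not a technicality. You take a $|v|$-normalized limit of $(\varPhi_v,i(S_D-1)\varPsi_v)$ and claim the short-range part contributes ``the directional transform of $V^\mathrm{vs}+V^\mathrm{s}$ \dots through the factor $(\varPhi_0,\varPsi_0)$''. For $V^\mathrm{s}\in\mathscr{V}^\mathrm{s}(1/2,5/4)$ one only has $|V^\mathrm{s}(x)|\le C\langle x\rangle^{-\gamma_0}$ with $\gamma_0\le1$, so $\int_{-\infty}^\infty V^\mathrm{s}(x+\hat v\tau)\,d\tau$ diverges and the limit you describe does not exist. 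This is precisely why the paper (following \cite{We1} and \cite{AM}) works with the commutator $i[S_D,p_j]=i[S_D-I_{G,v}^\mathrm{s},p_j-v_j]$, which replaces $V^\mathrm{s}$ by $\partial_jV^\mathrm{s}$ (decay exponent $\gamma_1>5/4>1$, hence integrable along lines) in the limit formula \eqref{3.2}. The Graf-type modifier $M_{G,v}^\mathrm{s}(t)$, which you insert ``to cope with the mere $C^1$-regularity'', is in fact there to renormalize this same non-integrability inside the Duhamel formula; it neither restores convergence of your normalized limit by itself nor has anything to do with regularity. Relatedly, your treatment of the long-range term is backwards: in Theorem \ref{thm3.1} the quantity $\int_{-\infty}^\infty i((\partial_jV^\mathrm{l})(x)U_D(t)\varPhi_v,U_D(t)\varPsi_v)\,dt$ is subtracted at finite $|v|$ as a known quantity (computable because $V^\mathrm{l}$ is given) before the limit is taken, rather than retained in the limit and then ``separated'' from the short-range contribution by counting powers of $x$; in the limit both contributions are merely bilinear forms in $(\varPhi_0,\varPsi_0)$, so your proposed separation has no clear meaning.

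Two further points. For the final assertion you propose to recover $V^\mathrm{l}$ by inverting a ``first-order directional transform'' read off from an ``$x$-linear component'' of the limit; no such component survives in the reconstruction formula, and none is needed, since $V^\mathrm{l}$ enters the definition of $S_D$ through the Dollard modifier and is therefore part of the given data: once the first assertion yields $V-V^\mathrm{l}=V^\mathrm{vs}+V^\mathrm{s}$, the total $V$ is determined trivially. Finally, you correctly locate the real work in the remainder estimates and correctly name the thresholds $\gamma_1>5/4$ and $\gamma_D>3/8$, but you give no argument for them; in the paper they come from Lemmas \ref{lem3.3}--\ref{lem3.6} (in particular the bounds $O(|v|^{\max\{-1,-2(\gamma_1-1)+\epsilon\}})$ of Lemma \ref{lem3.4} and $O(|v|^{-(4\gamma_D-1)+\epsilon})$ of Lemma \ref{lem3.5}), which yield $|R^D(v)|=O(|v|^{\max\{-1,\,5-4\gamma_1+2\epsilon,\,3-8\gamma_D+2\epsilon\}})$ and hence the stated conditions. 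As written, the proposal has a genuine gap at its core and defers all of the substantive estimates.
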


Theorem \ref{thm1.2} was first proved by \cite{AM} for
$\mathscr{V}^\mathrm{vs}+\mathscr{V}^\mathrm{s}(1/2,3/2)$
under the condition that
$V^\mathrm{l}$ belongs to $C^2(\boldsymbol{R}^n)$
and satisfies
\begin{equation}
|(\partial^\beta V^\mathrm{l})(x)|\le C_\beta\langle
x\rangle^{-\gamma_D-\mu|\beta|},\quad
|\beta|\le2
\end{equation}
with $0<\gamma_D\le1/2$ and $1-\gamma_D<\mu\le1$
(see also \cite{AKKT} as for the case where
time-dependent electric fields
are decaying in $|t|$, and \cite{VW}
as for the many body case in a constant electric field).
The condition $1-\gamma_D<\mu$, that is, $\gamma_D+\mu>1$
yields the existence of the Graf-type modified wave operators
\begin{equation}
W_G^\pm=\slim_{t\to\pm\infty}e^{itH}e^{-itH_0}e^{-i\int_0^t
V^\mathrm{l}(e_1s^2/2)\,ds}
\end{equation}
(see Zorbas~\cite{Zo} and Graf~\cite{Gr})
as well as the Dollard-type wave operators $W_D^\pm$
without the additional condition $\gamma_D>1/4$.
For $V^\mathrm{l}\in\mathscr{V}_D^\mathrm{l}(\tilde{\gamma}_{D,0})$,
$W_G^\pm$ do not exist generally because $\gamma_D+1/2\le1$.
Hence, the case where $V^\mathrm{l}\in\mathscr{V}_D^\mathrm{l}(\tilde{\gamma}_{D,0})$ was not treated in \cite{AM}.
Later Theorem \ref{thm1.2} was proved by \cite{AFI} for
$\mathscr{V}^\mathrm{vs}+\mathscr{V}^\mathrm{s}(1/2,\tilde{\gamma}_1)$
with $\tilde{\gamma}_1=(29-\sqrt{41})/16$, which is greater than
$(15-\sqrt{17})/8$, under the assumption that $V^\mathrm{l}\in\mathscr{V}_D^\mathrm{l}(3/8)$, which is the same as the one in our Theorem \ref{thm1.2}.
And, recently \cite{I} stated $\tilde{\gamma}_1$ could be taken as $1$,
but, as mentioned above,
it seems difficult to say
that the results of \cite{I} were obtained rigorously.
We report that also in Theorem \ref{thm1.2},
$\tilde{\gamma}_1$ can be taken as $5/4$ at most.
\medskip

Now we would like to consider the case where $\tilde{\gamma}_1$
is smaller than $5/4$, in particular, $\tilde{\gamma}_1=1$,
which was considered in \cite{I}. To this end,
we need the $C^2$-regularity of $V^\mathrm{s}$'s,
which is stronger than the $C^1$-regularity of $V^\mathrm{s}$'s
imposed in Theorems \ref{thm1.1} and \ref{thm1.2}.
We will introduce the following subclasses of $\mathscr{V}^\mathrm{s}(1/2,\tilde{\gamma}_1)$:
\smallskip

We say that
$V^\mathrm{s}\in\tilde{\mathscr{V}}^\mathrm{s}(1/2,\tilde{\gamma}_1,\tilde{\gamma}_2)$ with $\tilde{\gamma}_1\ge1$ and
$\tilde{\gamma}_2\ge1$ if $V^\mathrm{s}(x)$ is
a real-valued time-independent function, belongs to $C^2(\boldsymbol{R}^n)$
and satisfies
\begin{equation}
|(\partial^\beta V^\mathrm{s})(x)|\le C_\beta
\langle x\rangle^{-\gamma_{|\beta|}},\quad |\beta|\le2,
\end{equation}
with some $\gamma_0,\,\gamma_1,\,\gamma_2$ such that
$\tilde{\gamma}_0=1/2<\gamma_0\le1$,
$\tilde{\gamma}_1<\gamma_1\le\gamma_0+1$
and $\tilde{\gamma}_2<\gamma_2\le\gamma_1+1$.
Since we need the condition $\gamma_2>1$ in our analysis (see \S4),
we assume $\tilde{\gamma}_2\ge1$ beforehand.

Now we would like to state that
Theorems \ref{thm1.1} and \ref{thm1.2}
with replacing $\mathscr{V}^\mathrm{s}(1/2,$
$5/4)$
by $\tilde{\mathscr{V}}^\mathrm{s}(1/2,1,5/4)$ also hold:

\begin{thm}\label{thm1.3}
Let $V_1,\,V_2\in\mathscr{V}^\mathrm{vs}+\tilde{\mathscr{V}}^\mathrm{s}(1/2,1,5/4)$.
If $S(V_1)=S(V_2)$, then $V_1=V_2$.
\end{thm}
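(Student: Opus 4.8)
The plan is to carry out the Enss--Weder time-dependent reconstruction exactly as in the proof of Theorem \ref{thm1.1}, replacing only the one estimate in which the $C^1$-bound forced $\gamma_1>5/4$ by a sharper one that exploits the $C^2$-bound with $\gamma_2>5/4$. First I would reduce the recovery of $V=V^{\mathrm{vs}}+V^{\mathrm{s}}$ from $S$ to a high-velocity analysis: fix test states $\Phi,\Psi$ with suitably localized momenta, form the boosted packets $\Psi_v=e^{iv\cdot x}\Psi$, and insert the $v$-dependent Graf-type modifier $M_{G,v}^{\mathrm{s}}(t)=e^{-i\int_0^tV^{\mathrm{s}}(vs+e_1s^2/2)\,ds}$ into the Duhamel expansions of $W^\pm$. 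In the interaction picture the Heisenberg position is $x(t)=x+pt+e_1t^2/2$, so on $\Psi_v$ it reads $(x+pt)+(vt+e_1t^2/2)$: a deviation $x+pt$ about the classical Stark trajectory $vt+e_1t^2/2$. Expanding $V^{\mathrm{s}}$ about that trajectory, the $c$-number term is cancelled by $M_{G,v}^{\mathrm{s}}(t)$, and a $|v|$-weighted matrix element of $S-1$ is seen to converge, as $|v|\to\infty$, to an X-ray transform of $\nabla V^{\mathrm{s}}$ in the direction of $v$. Letting $\hat v$ range over the unit sphere, this determines $\nabla V^{\mathrm{s}}$, hence $V^{\mathrm{s}}$; the very-short-range part $V^{\mathrm{vs}}$ is recovered by the same limit as in the classical Enss--Weder theory, so that $S(V_1)=S(V_2)$ forces $V_1=V_2$.

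The decisive observation is that this main term already converges once $\gamma_1>1$: along the trajectory the integral $\int|t|\,\langle vt+e_1t^2/2\rangle^{-\gamma_1}\,dt$ is finite precisely when $\gamma_1>1$, the quadratic term in $t$ securing convergence at large $|t|$. Since $\tilde{\mathscr V}^{\mathrm{s}}(1/2,1,5/4)$ admits $1<\gamma_1\le5/4$---and is therefore \emph{not} contained in $\mathscr V^{\mathrm{s}}(1/2,5/4)$, so that Theorem \ref{thm1.1} does not apply---the entire difficulty is transferred to the error terms, the binding one being the second-order remainder of the expansion of $V^{\mathrm{s}}$ about the classical trajectory. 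In the $C^1$ setting of Theorem \ref{thm1.1} this remainder can only be estimated through $\nabla V^{\mathrm{s}}$, and that estimate is what compels $\gamma_1>5/4$.

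The device, taken from \cite{I} and here made rigorous, is to expand one order further, which is legitimate because $V^{\mathrm{s}}\in C^2(\boldsymbol{R}^n)$. The remainder then has the genuine form $\tfrac12\int dt\,\bigl\langle\Phi,\bigl((x+pt)\cdot\nabla\bigr)^2V^{\mathrm{s}}(\xi_t)\Psi\bigr\rangle$, with $\xi_t$ on the segment joining $vt+e_1t^2/2$ to $(x+pt)+vt+e_1t^2/2$, and is governed by $|\partial^2V^{\mathrm{s}}(x)|\le C\langle x\rangle^{-\gamma_2}$. A crude bound of this remainder by $\int|t|^2\,\langle vt+e_1t^2/2\rangle^{-\gamma_2}\,dt$ would already require $\gamma_2>3/2$; the real content of the device is to sharpen this. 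Using propagation estimates for the free Stark evolution to extract additional decay in $|v|$, and splitting the $t$-integral accordingly, one lowers the requirement to the threshold $\gamma_2>5/4$ attained by the present method, so that the remainder becomes subleading. The hypothesis $\gamma_1>1$ then serves only to make $M_{G,v}^{\mathrm{s}}(t)$ well defined and the main term convergent; this is precisely the trade that replaces the hypotheses of Theorem \ref{thm1.1} by those of the present theorem.

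The hard part will be carrying out this sharpened remainder estimate rigorously. As explained in Remark \ref{rem2.1}, the corresponding step in \cite{I} rests on misapplications of the H\"older inequality, so the delicate point is to split the product of the $\partial^2V^{\mathrm{s}}$-factor, the two factors of the deviation operator $x+pt$ acting on the evolved states, and the remaining unitary Stark evolution into $L^p$--$L^q$ pairings that are individually admissible on $L^2(\boldsymbol{R}^n)$, while genuinely retaining the gain in powers of $|v|$ supplied by the propagation estimates. I would coordinate the choice of exponents with a separation of the $t$-integration into the regime $|t|\lesssim|v|$, where $|vt+e_1t^2/2|\sim|v||t|$, and the regime $|t|\gtrsim|v|$, where the Stark term $e_1t^2/2$ dominates, estimating the decay in each; and I would verify that the cancellation of the $c$-number term by $M_{G,v}^{\mathrm{s}}(t)$ leaves no unwanted lower-order contribution. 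Once these estimates are in place, the remainder of the argument is identical to that of Theorem \ref{thm1.1}.
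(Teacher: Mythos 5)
There is a genuine gap, and it lies at the very first step of your plan: you keep the $v$-dependent Graf-type modifier $M_{G,v}^{\mathrm{s}}(t)=e^{-i\int_0^tV^{\mathrm{s}}(vs+e_1s^2/2)\,ds}$ and try to win by Taylor-expanding one order further. This cannot work. In the Cook-type estimate that controls both $\sup_t\|(e^{-itH}\Omega_{G,v}^{\mathrm{s},-}-U_{G,v}^{\mathrm{s}}(t))\varPhi_v\|$ and the error term $R(v)$, what must be made small is the \emph{norm} $\int\|\{V^{\mathrm{s}}(x)-V^{\mathrm{s}}(vt+e_1t^2/2)\}e^{-itH_0}\varPhi_v\|\,dt$. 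Expanding about the classical trajectory, the first-order term $(\nabla V^{\mathrm{s}})(vt+e_1t^2/2)\cdot(x+pt)$ is an operator, not a c-number; it is not cancelled by $M_{G,v}^{\mathrm{s}}(t)$, it cannot be absorbed into the "main term" of a norm estimate, and its size is $\|\nabla V^{\mathrm{s}}(vt+e_1t^2/2)\|\,(\|x\varPhi_0\|+|t|\,\|p\varPhi_0\|)$. The factor $|t|$ coming from $e^{itK_0}xe^{-itK_0}=x+pt$ is exactly what produces the rate $O(|v|^{-2(\gamma_1-1)+\epsilon})$ in \eqref{2.16} and forces $\gamma_1>5/4$; your "decisive observation" that the relevant integral is finite for $\gamma_1>1$ confuses integrability in $t$ with the required \emph{decay in} $|v|$ (one needs $|v|^{1-4(\gamma_1-1)}\to0$). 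Going to second order leaves this first-order obstruction untouched, and moreover your second-order remainder carries $\|(x+pt)\varPhi_0\|^2\sim|t|^2$, so even it would demand $\gamma_2>3/2$ for bare integrability and roughly $\gamma_2>7/4$ after optimization --- worse, not better, than the target $5/4$.

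The idea the paper actually uses, and which your proposal is missing, is to \emph{change the modifier}: replace $M_{G,v}^{\mathrm{s}}(t)$ by the Dollard-type modifier $M_D^{\mathrm{s}}(t)=e^{-i\int_0^tV^{\mathrm{s}}(ps+e_1s^2/2)\,ds}$ (Section 4). Then the quantity to estimate becomes $\|\{\tilde{V}_{|v|,t}^{\mathrm{s}}(x)-\tilde{V}_{|v|,t}^{\mathrm{s}}(pt)\}e^{-itK_0}M_{D,v}^{\mathrm{s}}(t)\varPhi_0\|$, and by the Baker--Campbell--Hausdorff formula the gradient term is paired with $x-pt$, whose free-evolution conjugate is just $x$ --- no growth in $t$ --- so it integrates to $O(|v|^{-1})$ under $\gamma_1>1$ alone. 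The $C^2$-regularity and $\gamma_2>5/4$ enter only through the BCH correction $\tfrac{i}{2}t\,\Delta\tilde{V}_{|v|,t}^{\mathrm{s}}$, which carries a single power of $|t|$ and yields $O(|v|^{-2(\gamma_2-1)+\epsilon})$ (Lemma \ref{lem4.5}). This switch has a price you would also need to pay: $I_D^{\mathrm{s},\pm}=\slim M_D^{\mathrm{s}}(t)$ no longer commute with the other operators, so the proof of Theorem \ref{thm4.1} must additionally establish $\slim_{|v|\to\infty}(I_{D,v}^{\mathrm{s},\pm})^*=\mathrm{Id}$ and control $\|\langle x\rangle^2M_{D,v}^{\mathrm{s}}(t)\varPhi_0\|$ (Lemma \ref{lem4.3}). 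Without the change of modifier, no amount of care with H\"older exponents or splitting of the $t$-integral will get you below $\gamma_1>5/4$.
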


\begin{thm}\label{thm1.4}
Suppose that $V^\mathrm{l}\in\mathscr{V}_D^\mathrm{l}(3/8)$ is given.
Let $V_1,\,V_2\in\mathscr{V}^\mathrm{vs}+\tilde{\mathscr{V}}^\mathrm{s}(1/2,1,\\
5/4)$.
If $S_D(V^\mathrm{l};V_1)=S_D(V^\mathrm{l};V_2)$, then $V_1=V_2$.
Moreover, any one of the Dollard-type modified
scattering operators $S_D$ determines uniquely the total potential
$V$.
\end{thm}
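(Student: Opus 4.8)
The plan is to prove Theorem \ref{thm1.4} by the Enss--Weder time-dependent method, following the same scheme as for Theorem \ref{thm1.3} but with the wave operators $W^\pm$ and scattering operator $S$ replaced by their Dollard-modified counterparts $W_D^\pm$ and $S_D$. Write $V_j=V_j^{\mathrm{vs}}+V_j^{\mathrm{s}}$ for $j=1,2$, so that the Hamiltonians under consideration are $H_j=H_0+V_j+V^{\mathrm{l}}$ and $S_D(V^{\mathrm{l}};V_j)=(W_D^{+,j})^*W_D^{-,j}$. Since $V^{\mathrm{l}}\in\mathscr{V}_D^{\mathrm{l}}(3/8)$ is the \emph{same} for $V_1$ and $V_2$, the Dollard modifier $e^{-i\int_0^tV^{\mathrm{l}}(ps+e_1s^2/2)\,ds}$ is common to both, so the difference of the two scattering operators reflects only the difference $V_1-V_2$ of the short-range and very short-range parts. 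The goal is therefore to reconstruct $V_j$ from the high-velocity asymptotics of $S_D(V^{\mathrm{l}};V_j)$, after which the hypothesis $S_D(V^{\mathrm{l}};V_1)=S_D(V^{\mathrm{l}};V_2)$ forces $V_1=V_2$.

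First I would pass to high-velocity states: for $\Phi,\Psi\in\mathscr{S}(\boldsymbol{R}^n)$ set $\Phi_v=e^{iv\cdot x}\Phi$ and $\Psi_v=e^{iv\cdot x}\Psi$ and let $|v|\to\infty$. To accommodate the \emph{borderline} short-range decay $\gamma_1>1$ of $V^{\mathrm{s}}$ (in place of $\gamma_1>5/4$ used for Theorems \ref{thm1.1} and \ref{thm1.2}), I would insert, alongside the Dollard modifier for $V^{\mathrm{l}}$, the $v$-dependent Graf-type modifier $M_{G,v}^{\mathrm{s}}(t)=e^{-i\int_0^tV^{\mathrm{s}}(vs+e_1s^2/2)\,ds}$ of \cite{AM}. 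The classical Stark trajectory $x(t)=x_0+vt+e_1t^2/2$ shows that for large $|v|$ the motion over the effective interaction time is rectilinear in the direction $\hat v=v/|v|$ to leading order in $|v|^{-1}$, the parabolic term $e_1t^2/2$ entering only as a correction; a suitably normalized matrix element of $S_D-I$ between $\Phi_v$ and $\Psi_v$ then recovers, to leading order, the integral of $V_j$ along the line through the support of $\Phi,\Psi$ in the direction $\hat v$, that is, an X-ray-type transform of $V_j$. Letting $\hat v$ range over a sufficiently rich family of directions, which is available for all $n\ge2$ exactly as in \cite{AM}, the injectivity of that transform recovers $V_j$ as a function.

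The heart of the matter, and the step I expect to be the main obstacle, is the rigorous control of the remainder in this reconstruction when $\gamma_1$ is only slightly larger than $1$; this is precisely where the device of \cite{I} enters, and where its flawed use of the H\"older inequality (see Remark \ref{rem2.1}) must be repaired. The repair exploits the $C^2$-regularity built into $\tilde{\mathscr{V}}^{\mathrm{s}}(1/2,1,5/4)$, namely $|(\partial^\beta V^{\mathrm{s}})(x)|\le C_\beta\langle x\rangle^{-\gamma_2}$ with $\gamma_2>5/4$ for $|\beta|=2$: rather than splitting a product of $V^{\mathrm{s}}$ and a propagator at an inadmissible pair of exponents, I would integrate by parts once more in the oscillatory integral defining the relevant propagation estimate, trading a power of the large velocity for a derivative falling on $V^{\mathrm{s}}$, and then estimate the resulting bounded operators directly through the free Schr\"odinger resolvent $(1+K_0)^{-1}$. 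The surplus $\gamma_2>5/4$ supplies exactly the integrability in time and in the radial variable $R$ needed to make every remainder summable and to insert and remove both modifiers at no net cost as $|v|\to\infty$. The genuinely Dollard-type errors, coming from the discrepancy between the true long-range propagation and its modifier, are controlled by $\gamma_D>3/8$ just as in \cite{AFI} and do not interfere with the short-range reconstruction; thus the presence of $V^{\mathrm{l}}$ adds no essential difficulty beyond that already met in Theorem \ref{thm1.3}.

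Finally, having reconstructed $V_j=V_j^{\mathrm{vs}}+V_j^{\mathrm{s}}$ for $j=1,2$, the hypothesis $S_D(V^{\mathrm{l}};V_1)=S_D(V^{\mathrm{l}};V_2)$ yields $V_1=V_2$; if one prefers, the very short-range parts may first be separated off by the Enss--Weder uniqueness for $\Sigma$ recalled in the Introduction, the short-range parts being identified afterwards. For the \emph{moreover} assertion, recall that $V^{\mathrm{l}}$ is prescribed, so that the total potential decomposes as $V=V^{\mathrm{l}}+(V^{\mathrm{vs}}+V^{\mathrm{s}})$ with $V^{\mathrm{l}}$ known and $V^{\mathrm{vs}}+V^{\mathrm{s}}$ reconstructed from $S_D$; hence any single Dollard-type modified scattering operator $S_D$ determines the total potential $V$ uniquely.
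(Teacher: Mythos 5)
Your overall Enss--Weder framework is right, but the central technical choice is wrong, and it is exactly the point on which Theorem \ref{thm1.4} differs from Theorem \ref{thm1.2}. You propose to handle the short-range part $V^\mathrm{s}$ with the $v$-dependent Graf-type modifier $M_{G,v}^\mathrm{s}(t)=e^{-i\int_0^tV^\mathrm{s}(vs+e_1s^2/2)\,ds}$. With that modifier the remainder in the reconstruction formula is controlled by Lemmas \ref{lem3.4}--\ref{lem3.6}, giving $|R^D(v)|=O(|v|^{\max\{-1,5-4\gamma_1+2\epsilon,3-8\gamma_D+2\epsilon\}})$, which tends to $0$ only if $\gamma_1>5/4$. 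For $V^\mathrm{s}\in\tilde{\mathscr{V}}^\mathrm{s}(1/2,1,5/4)$ one only has $\gamma_1>1$, so this remainder need not vanish and the Graf-type route collapses; this is precisely why the paper abandons $M_{G,v}^\mathrm{s}(t)$ in \S4. The actual proof replaces it with the Dollard-type modifier $M_D^\mathrm{s}(t)=e^{-i\int_0^tV^\mathrm{s}(ps+e_1s^2/2)\,ds}$ for the short-range part, combined with $M_D(t)$ into $M_{D,D}^\mathrm{s}(t)=M_D(t)M_D^\mathrm{s}(t)$ (Lemmas \ref{lem4.7}--\ref{lem4.11}). The gain is concrete: by the Baker--Campbell--Hausdorff formula, $\tilde{V}_{|v|,t}^\mathrm{s}(x)-\tilde{V}_{|v|,t}^\mathrm{s}(pt)$ is a gradient term contracted with $x-pt$ plus a Laplacian term, and since $e^{itK_0}(x-pt)e^{-itK_0}=x$ the dangerous factor $|t|\,\|p\varPhi_0\|$ that produced the exponent $-2(\gamma_1-1)$ in Lemma \ref{lem2.4} disappears; the price is the new term $\tfrac12|t|\,\|\Delta\tilde{V}_{|v|,t}^\mathrm{s}\|_{L^\infty}$, which integrates to $O(|v|^{-2(\gamma_2-1)+\epsilon})$ and needs only $\gamma_2>5/4$ (the remaining conditions being $\gamma_1+\gamma_D>5/4$, automatic here, and $\gamma_D>3/8$). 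Your substitute for this mechanism --- ``integrate by parts once more in the oscillatory integral, trading a power of the large velocity for a derivative falling on $V^\mathrm{s}$'' --- is not an argument: it does not identify where $\gamma_2>5/4$ enters, and as stated it does not repair the failure of the $\gamma_1>5/4$ threshold.

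A second, smaller omission: once $M_D^\mathrm{s}(t)$ is used, the limits $I_D^{\mathrm{s},\pm}=\slim_{t\to\pm\infty}M_D^\mathrm{s}(t)$ are genuine operators (functions of $p$) which, unlike the scalar $I_{G,v}^{\mathrm{s},\pm}$, do not commute with the auxiliary wave operators. The proofs of Theorems \ref{thm4.1} and \ref{thm4.2} therefore must carry the vectors $(I_{D,v}^{\mathrm{s},\pm})^*\varPhi_0$ through the whole computation and establish $\slim_{|v|\to\infty}(I_{D,v}^{\mathrm{s},\pm})^*=\mathrm{Id}$ before the limit of the main term can be identified with the right-hand side of \eqref{3.2}; none of this appears in your outline. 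The final deduction of Theorem \ref{thm1.4} from the reconstruction formula (injectivity of the Radon/X-ray transform, and the ``moreover'' assertion since $V^\mathrm{l}$ is prescribed) is fine.
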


In their proofs, the Dollard-type
modifier $M_D^\mathrm{s}(t)=e^{-i\int_0^tV^\mathrm{s}(ps+e_1s^2/2)\,ds}$,
which is slightly different from
$e^{-i\int_0^tV^\mathrm{s}(p_\perp s+e_1s^2/2)\,ds}$ introduced in \cite{N1},
will be substituted for the $v$-dependent Graf-type
modifier $M_{G,v}^\mathrm{s}(t)=e^{-i\int_0^tV^\mathrm{s}(vs+e_1s^2/2)\,ds}$
used in the proofs of Theorems \ref{thm1.1} and \ref{thm1.2}.
$M_D^\mathrm{s}(t)$ has an advantage
over $M_{G,v}^\mathrm{s}(t)$ for $V^\mathrm{s}$ of $C^2$
by virtue of the Baker-Campbell-Hausdorff formula,
although it may not necessarily do so for $V^\mathrm{s}$ of only $C^1$.
We will report on it in this paper.
\bigskip

The plan of this paper is as follows: In \S2, we consider the case
where $V^\mathrm{s}\in\mathscr{V}^\mathrm{s}(1/2,5/4)$ and
$V^\mathrm{l}=0$.
In \S3, we consider the case
where $V^\mathrm{s}\in\mathscr{V}^\mathrm{s}(1/2,5/4)$
and $V^\mathrm{l}\not=0$.
In \S4, we consider the case where $V^\mathrm{s}\in\tilde{\mathscr{V}}^\mathrm{s}(1/2,1,5/4)$.

In the following sections, $\|\cdot\|$ and $(\cdot,\cdot)$ stand for
the $L^2$-norm and the $L^2$-inner product, respectively,
for the sake of brevity.

\section{The case where $V^\mathrm{s}\in\mathscr{V}^\mathrm{s}(1/2,5/4)$ and $V^\mathrm{l}=0$}

Throughout this section, we suppose
$V^\mathrm{s}\in\mathscr{V}^\mathrm{s}(1/2,1)$
and $V^\mathrm{l}=0$.
The main purpose of this section is showing the following reconstruction
formula, which yields Theorem \ref{thm1.1}.

\begin{thm}\label{thm2.1} Let $\hat{v}\in\boldsymbol{R}^n$ be given
such that $|\hat{v}\cdot e_1|<1$. Put $v=|v|\hat{v}$. Let $\eta>0$ be given,
and $\varPhi_0,\,\varPsi_0\in L^2(\boldsymbol{R}^n)$ be
such that $\hat{\varPhi}_0,\,\hat{\varPsi}_0\in
C_0^\infty(\boldsymbol{R}^n)$ with
$\mathrm{supp}\,\hat{\varPhi}_0,\,
\mathrm{supp}\,\hat{\varPsi}_0\subset
\bigl\{\xi\in\boldsymbol{R}^n\bigm||\xi|<\eta\bigr\}$.
Put $\varPhi_v=e^{iv\cdot x}\varPhi_0$ and $\varPsi_v=e^{iv\cdot x}\varPsi_0$.
Let $V^\mathrm{vs}\in\mathscr{V}^\mathrm{vs}$ and
$V^\mathrm{s}\in\mathscr{V}^\mathrm{s}(1/2,5/4)$.
Then the following holds:
\begin{equation}
\begin{split}
\lim_{|v|\to\infty}{}&|v|(i[S,p_j]\varPhi_v,\varPsi_v)\\
{}&=\int_{-\infty}^\infty[
(V^\mathrm{vs}(x+\hat{v}\tau)p_j\varPhi_0,\varPsi_0)-
(V^\mathrm{vs}(x+\hat{v}\tau)\varPhi_0,p_j\varPsi_0)\\
{}&\qquad\qquad+i((\partial_j V^\mathrm{s})(x+\hat{v}\tau)\varPhi_0,\varPsi_0)]\,d\tau
\end{split}\label{2.1}
\end{equation}
for $1\le j\le n$.
\end{thm}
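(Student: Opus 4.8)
The plan is to run the Enss--Weder time-dependent scheme: represent $S-1$ through the wave operators by a Duhamel expansion, isolate the term linear in the potential, and show that after multiplication by $|v|$ only this term survives as $|v|\to\infty$. Using $(W^+)^*W^+=1$ I would first write
\[
S-1=(W^--1)+(W^+-1)^*+(W^+-1)^*(W^--1),
\]
where, by the fundamental theorem of calculus,
\[
W^\pm-1=i\int_0^{\pm\infty}e^{itH}(V^\mathrm{vs}+V^\mathrm{s})e^{-itH_0}\,dt.
\]
The last, at least quadratic, summand is expected to be of lower order in $1/|v|$ and to drop out, and in the two linear summands I would replace the full propagator $e^{itH}$ by the free one $e^{itH_0}$, the resulting error again being of higher order. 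Since $i[S,p_j]=i[S-1,p_j]$, pairing against $\varPhi_v$ and $\varPsi_v$ and using the self-adjointness of $p_j$ and the unitarity of $e^{itH_0}$ reduces the leading part to
\[
(i[S,p_j]\varPhi_v,\varPsi_v)\simeq\int_{-\infty}^\infty\bigl[(Ve^{-itH_0}p_j\varPhi_v,e^{-itH_0}\varPsi_v)-(Ve^{-itH_0}\varPhi_v,e^{-itH_0}p_j\varPsi_v)\bigr]\,dt,
\]
with $V=V^\mathrm{vs}+V^\mathrm{s}$ now acting as multiplication.

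Next I would exploit the commutator structure. Writing $p_j\varPhi_v=e^{iv\cdot x}(p_j+v_j)\varPhi_0$ and likewise for $\varPsi_v$, the would-be leading term proportional to $v_j$ cancels between the two summands above --- which is exactly why one works with $i[S,p_j]$ rather than with $S-1$ --- so only the genuine factors $p_j\varPhi_0$ and $p_j\varPsi_0$ remain. Splitting $V=V^\mathrm{vs}+V^\mathrm{s}$, I would keep the (possibly singular) very short-range part $V^\mathrm{vs}$ in this distributed form, which yields the first two terms of \eqref{2.1}, whereas for the $C^1$ short-range part $V^\mathrm{s}$ I would recombine the two summands into the commutator $[p_j,V^\mathrm{s}]=-i(\partial_j V^\mathrm{s})$, which yields the third term. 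The limit itself comes from the high-velocity localization of the free Stark evolution: up to an explicit phase and the spreading generated by $K_0$, the state $e^{-itH_0}\varPhi_v$ concentrates at the classical point $vt+e_1t^2/2$, so after the substitution $\tau=|v|t$ the center becomes $\hat v\tau+e_1\tau^2/(2|v|^2)\to\hat v\tau$, the Stark acceleration term disappears, and the prefactor $|v|$ compensates $dt=d\tau/|v|$, leaving the stated integrals in $\tau$ with argument $x+\hat v\tau$.

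A delicate point is the short-range part $V^\mathrm{s}$, for which the Duhamel integrals do not converge absolutely when $\gamma_0\le1$. Here I would bring in the $v$-dependent Graf-type modifier $M_{G,v}^\mathrm{s}(t)=e^{-i\int_0^tV^\mathrm{s}(vs+e_1s^2/2)\,ds}$ as reference phase: in the corresponding modified Duhamel formula $V^\mathrm{s}(x)$ is effectively replaced by $V^\mathrm{s}(x)-V^\mathrm{s}(vt+e_1t^2/2)$, and on the localized state $e^{-itH_0}\varPhi_v$ this difference is governed by $\partial V^\mathrm{s}$ times the spread of the packet, hence integrable in $t$ after the $|v|$-scaling precisely because $\gamma_1>1$. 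This both legitimizes the representation of the $V^\mathrm{s}$-contribution and is consistent with the appearance of $\partial_j V^\mathrm{s}$ in the limit.

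The main obstacle, I expect, is the rigorous and uniform control of the two error terms --- the quadratic summand $(W^+-1)^*(W^--1)$ and the replacement of $e^{itH}$ by $e^{itH_0}$ --- so that each is $o(1/|v|)$ and vanishes after multiplication by $|v|$. This rests on sharp propagation estimates for the free Stark propagator acting on states with momentum support in $\{|\xi|<\eta\}$ (the hypothesis $|\hat v\cdot e_1|<1$ enters here), together with the decay of $V^\mathrm{vs}$ and of $\partial V^\mathrm{s}$. These estimates are borderline for $V^\mathrm{s}$ with $\gamma_0$ just above $1/2$ and $\gamma_1$ just above $5/4$, where the relevant time integrals barely converge; sharpening them --- turning differences of $V^\mathrm{s}$ into integrable $\partial V^\mathrm{s}$-contributions via the modifier and bounding the $L^2\to L^2$ norms of $V^\mathrm{s}$ composed with the free evolution --- is where the real work lies and constitutes the series of lemmas underlying the theorem.
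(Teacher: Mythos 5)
Your proposal follows essentially the same route as the paper: the Enss--Weder scheme with the commutator identity $[S,p_j]=[S-I_{G,v}^{\mathrm{s}},p_j-v_j]$ to cancel the $v_j$-terms, the $v$-dependent Graf-type modifier turning $V^{\mathrm{s}}(x)$ into $V^{\mathrm{s}}(x)-V^{\mathrm{s}}(vt+e_1t^2/2)$, and high-velocity propagation estimates showing that the quadratic remainder is $o(|v|^{-1})$. The one imprecision is in where the threshold $5/4$ comes from: it is not that the time integrals barely converge there (they converge absolutely for $\gamma_1>1$, indeed for $\gamma_0>1/2$), but that the remainder is $|v|$ times the \emph{square} of the $O(|v|^{-2(\gamma_1-1)+\epsilon})$ error in the paper's Lemmas \ref{lem2.4} and \ref{lem2.5}, which forces $5-4\gamma_1<0$.
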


We will make preparations for the proof of Theorem \ref{thm2.1}.
We first need the following proposition due
to Enss~\cite{E2} (see Proposition 2.10 of \cite{E2}):

\begin{prop}\label{prop2.2}
For any $f\in C_0^\infty(\boldsymbol{R}^n)$ with $\mathrm{supp}\,f\subset
\bigl\{x\in\boldsymbol{R}^n\bigm||x|<\eta\bigr\}$ for some $\eta>0$,
and any $l\in\boldsymbol{N}$, there exists a constant $C_l$
dependent on $f$ only such that
\begin{equation}
\|F(x\in\mathscr{M}')e^{-itK_0}f(p-v)F(x\in\mathscr{M})\|_{\mathscr{B}(L^2)}
\le C_l(1+r+|t|)^{-l}
\end{equation}
for $v\in\boldsymbol{R}^n$, $t\in\boldsymbol{R}$ and measurable sets
$\mathscr{M}$, $\mathscr{M}'$ with the property that
$r=\mathrm{dist}(\mathscr{M}',\mathscr{M}+vt)-\eta|t|\ge0$.
Here $F(x\in\mathscr{M})$ stands for the characteristic function
of $\mathscr{M}$.
\end{prop}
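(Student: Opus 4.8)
The plan is to deduce the operator-norm bound from a pointwise estimate on the integral kernel of $e^{-itK_0}f(p-v)$ and then feed that estimate into Schur's test. In the Fourier representation this kernel is
\[
L_t(x,y)=(2\pi)^{-n}\int_{\boldsymbol{R}^n}e^{i\phi(\xi)}f(\xi-v)\,d\xi,\qquad
\phi(\xi)=(x-y)\cdot\xi-t\xi^2/2,
\]
so the operator in the statement has kernel $F(x\in\mathscr{M}')(x)\,L_t(x,y)\,F(x\in\mathscr{M})(y)$. Since $\|e^{-itK_0}f(p-v)\|_{\mathscr{B}(L^2)}\le\|f\|_{L^\infty}$ trivially, the claim is immediate when $r+|t|\le1$; hence it suffices to produce, for $r+|t|\ge1$, a bound $C_l(r+|t|)^{-l}$ for every $l$, and then to recombine the two regimes using $1+r+|t|\le2(r+|t|)$.

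The decisive step is a lower bound on the phase gradient $\nabla_\xi\phi(\xi)=x-y-t\xi$, and the essential input is that $f\in C_0^\infty$ with $\mathrm{supp}\,f\subset\{|x|<\eta\}$ forces, by compactness, $\mathrm{supp}\,f\subset\{|x|\le\eta'\}$ for some $\eta'<\eta$, i.e. a strict margin $\eta-\eta'>0$. For $x\in\mathscr{M}'$ and $y\in\mathscr{M}$ one has $|x-y-tv|\ge\mathrm{dist}(\mathscr{M}',\mathscr{M}+tv)=r+\eta|t|$, so on $\mathrm{supp}\,f(\cdot-v)$ (where $|\xi-v|\le\eta'$),
\[
|\nabla_\xi\phi(\xi)|=|x-y-t\xi|\ge|x-y-tv|-\eta'|t|\ge r+(\eta-\eta')|t|.
\]
Combining this with $|\nabla_\xi\phi|\ge(1-\eta'/\eta)|x-y-tv|$, which follows from $\eta|t|\le|x-y-tv|$, I would obtain a single bound $|\nabla_\xi\phi|\ge c\,(|x-y-tv|+r+|t|)$ with $c>0$ depending only on $f$. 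This is exactly where the strict margin $\eta-\eta'$ converts the ``$\eta|t|$'' hidden in $r$ into genuine decay in $|t|$, and it is why $C_l$ may depend on $f$.

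With this gradient bound the rest is routine. I would integrate by parts $l$ times in $\xi$ using the transpose of $L=(i|\nabla_\xi\phi|^2)^{-1}\nabla_\xi\phi\cdot\nabla_\xi$, which fixes $e^{i\phi}$; since $\nabla_\xi^2\phi=-tI$, each application produces only factors $t/|\nabla_\xi\phi|^2$, factors $|\nabla_\xi\phi|^{-1}$ times a derivative of $f$, and derivatives of these. A short induction shows every resulting term carries a power of $|t|$ in its numerator at most equal to its power of $|\nabla_\xi\phi|$ minus $l$; using $|t|\le|x-y-tv|+r+|t|$ together with the gradient lower bound then yields $|L_t(x,y)|\le C_l(|x-y-tv|+r+|t|)^{-l}$. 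Finally, for $l>n$ the change of variables $u=x-y-tv$ gives $\int_{\boldsymbol{R}^n}(|u|+r+|t|)^{-l}\,du\le C_{n,l}(r+|t|)^{n-l}$ uniformly in $x$, and symmetrically in $y$, so Schur's test yields $\|F(x\in\mathscr{M}')e^{-itK_0}f(p-v)F(x\in\mathscr{M})\|_{\mathscr{B}(L^2)}\le C_{n,l}(r+|t|)^{n-l}$; taking $l$ large relative to the target exponent and recombining with the trivial small-$(r+|t|)$ bound gives the asserted estimate for every $l\in\boldsymbol{N}$.

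The main obstacle is the gradient lower bound of the second step; everything downstream is standard once it is in hand. Its only delicate feature is that one must extract decay in all three quantities $|x-y-tv|$, $r$, and $|t|$ simultaneously --- the first so that the Schur integral converges, the latter two because they appear in the claim --- and this rests entirely on $\mathrm{supp}\,f$ being a compact subset of the \emph{open} ball $\{|\xi|<\eta\}$.
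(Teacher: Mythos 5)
The paper does not prove this proposition; it simply quotes it as Proposition 2.10 of Enss~\cite{E2}. Your argument is correct and is essentially the standard non-stationary-phase proof of that estimate: the key gradient bound $|x-y-t\xi|\ge r+(\eta-\eta')|t|$ on $\mathrm{supp}\,f(\cdot-v)$ is exactly the right input, the integration-by-parts bookkeeping (each step raising the net power of $|\nabla_\xi\phi|^{-1}$ over $|t|$ by one) is sound, and Schur's test closes the argument as you describe.
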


The following lemma was already obtained in \cite{We1}
(see also \cite{AM}):

\begin{lem}\label{lem2.3} Let $v$ and $\varPhi_v$ be as in Theorem \ref{thm2.1}. Then
\begin{equation}
\int_{-\infty}^\infty
\|V^\mathrm{vs}(x)e^{-itH_0}\varPhi_v\|\,dt
=O(|v|^{-1})
\end{equation}
holds as $|v|\to\infty$ for $V^\mathrm{vs}\in\mathscr{V}^\mathrm{vs}$.
\end{lem}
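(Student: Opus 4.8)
The plan is to reduce the estimate to an application of Proposition~\ref{prop2.2} together with the decomposition of $V^\mathrm{vs}$ into its singular part $V_1^\mathrm{vs}$ and regular part $V_2^\mathrm{vs}$. The key point is that the boosted state $\varPhi_v=e^{iv\cdot x}\varPhi_0$ is, in momentum space, a function concentrated near $p=v$, so that $f(p-v)\varPhi_v=\varPhi_v$ for a suitable $f\in C_0^\infty$ with $\mathrm{supp}\,f\subset\{|\xi|<\eta\}$. Under the free Stark evolution $e^{-itH_0}$, a wave packet initially localized near $p=v$ travels with the classical trajectory $x(t)=vt+e_1t^2/2$ (plus spreading of order $\eta|t|$ from the momentum support), and so the mass of $e^{-itH_0}\varPhi_v$ escapes to spatial infinity at a rate governed by $|v|$. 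First I would rewrite $H_0=K_0-x_1$ and pass to the representation in which the Stark part acts; using the Avron--Herbst factorization one can conjugate $e^{-itH_0}$ into $e^{-itK_0}$ times an explicit gauge factor (a translation in $p$ by $e_1t$ and in $x$ by $e_1t^2/2$, times a phase), which turns $f(p-v)$ into a compactly supported function of $p$ centered at the shifted momentum.

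Next I would split the integral $\int_{-\infty}^\infty\|V^\mathrm{vs}(x)e^{-itH_0}\varPhi_v\|\,dt$ according to $V^\mathrm{vs}=V_1^\mathrm{vs}+V_2^\mathrm{vs}$. For each piece I insert the characteristic functions $F(|x|\ge R)$ and $F(|x|<R)$ at an appropriate $R=R(t,v)$, and estimate the two contributions. On the region where $x$ is far from the classical trajectory, Proposition~\ref{prop2.2} gives rapid decay: writing $V_j^\mathrm{vs}(x)F(x\in\mathscr{M}')$ against $e^{-itH_0}f(p-v)F(x\in\mathscr{M})$, with $\mathscr{M}$ the support region of $\varPhi_0$ and $\mathscr{M}'$ where the potential is being evaluated, the quantity $r=\mathrm{dist}(\mathscr{M}',\mathscr{M}+vt)-\eta|t|$ is large, yielding the factor $(1+r+|t|)^{-l}$ for any $l$. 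For the singular part $V_1^\mathrm{vs}\in L^{q_0}$ with $q_0>n/2$, $q_0\ge2$, I would control $\|V_1^\mathrm{vs}(x)F(x\in\mathscr{M}')e^{-itH_0}\varPhi_v\|$ by a Hölder/Sobolev argument (using that $e^{-itH_0}f(p-v)$ maps $L^2$ into a suitable $L^{r}$-space via the smoothing of $f$), again combined with the propagation estimate. For the regular part the hypothesis \eqref{1.8} on $\int_0^\infty\|V^\mathrm{vs}(x)(1+K_0)^{-1}F(|x|\ge R)\|\,dR$ supplies the integrability in the radial variable that converts the pointwise decay into an $L^1$-in-$t$ bound.

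The main obstacle is the bookkeeping of the $|v|^{-1}$ gain. The decay in Proposition~\ref{prop2.2} is in the combined variable $r+|t|$, and one must choose the splitting radius $R$ and exploit that along the classical trajectory $|vt+e_1t^2/2|$ grows, so that for $|t|$ of order $|v|^{-1}$ or larger the state has already escaped the support of the singular part, while for small $|t|$ the factor $|v|$ in front of the trajectory produces the gain. Concretely, after the change of variables $t\mapsto t/|v|$ (or $\tau=|v|t$) the integral should display an explicit $|v|^{-1}$ prefactor, with the remaining integral bounded uniformly in $v$ by dominated convergence; making this substitution rigorous while keeping the propagation estimate uniform in $v$ is the delicate step. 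I expect the regular and singular parts to require slightly different scalings, but in both cases the uniform-in-$v$ integrability ultimately rests on \eqref{1.8} and the large-$l$ decay from Proposition~\ref{prop2.2}, so the $O(|v|^{-1})$ conclusion follows once the scaling is set up correctly.
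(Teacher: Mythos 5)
Your plan follows essentially the same route as the proof the paper refers to (Weder~\cite{We1}, Adachi--Maehara~\cite{AM}, and the detailed analogue given here for Lemma~\ref{lem3.3}): the Avron--Herbst conjugation reducing the integrand to $\|V^\mathrm{vs}(x+vt+e_1t^2/2)e^{-itK_0}\varPhi_0\|$, a splitting by cutoffs at radius of order $\lambda_1|v||t|$ with the far-from-trajectory pieces controlled by Proposition~\ref{prop2.2}, and the condition \eqref{1.8} combined with the change of variables $R=c|v||t|$ to produce the $O(|v|^{-1})$. The one ingredient you leave implicit is the quantitative lower bound $|vt+e_1t^2/2|\ge\sqrt{1-\delta^2}\,|v||t|$ of \eqref{2.6}, which rests on the hypothesis $|\hat{v}\cdot e_1|=\delta<1$ and is precisely what lets the near-trajectory term be dominated by $\|V^\mathrm{vs}(1+K_0)^{-1}F(|x|\ge c|v||t|)\|_{\mathscr{B}(L^2)}$ before invoking \eqref{1.8}.
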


In the proof of this lemma, the estimate of $|vt+e_1t^2/2|$ plays an important
role. Here we recall the argument about the estimate of $|vt+e_1t^2/2|$
in \cite{AM}:
Put $\delta=|\hat{v}\cdot e_1|<1$.
$|vt+e_1t^2/2|^2=|v|^2t^2+t^4/4+v\cdot e_1t^3$
can be estimated as
\begin{equation}
\begin{split}
|vt+e_1t^2/2|^2\ge{}&|v|^2|t|^2+|t|^4/4-\delta|v||t|^3\\
={}&|t|^2(|t|-2\delta|v|)^2/4+(1-\delta^2)|v|^2|t|^2\\
\ge{}&(1-\delta^2)|v|^2|t|^2.
\end{split}
\label{2.4}
\end{equation}
\eqref{2.4} is used in the proof of Lemma \ref{lem2.3}.
Here we note that
\begin{equation}
\begin{split}
|vt+e_1t^2/2|^2\ge{}&|t|^2(\delta|t|-2|v|)^2/4+(1-\delta^2)|t|^4/4\\
\ge{}&(1-\delta^2)|t|^4/4
\end{split}
\label{2.5}
\end{equation}
can be also obtained. Based on the above estimates, we conclude that
\begin{equation}
|vt+e_1t^2/2|\ge\max\{\sqrt{1-\delta^2}|v||t|,(\sqrt{1-\delta^2}/2)|t|^2\}
\label{2.6}
\end{equation}
holds.
\medskip

The following lemma is an improvement of Lemma 2.2 of
\cite{AM} and Lemma 3.4 with $\mu=0$ of \cite{AFI}.
This is one of the keys in this section:

\begin{lem}\label{lem2.4} Let $v$ and $\varPhi_v$
be as in Theorem \ref{thm2.1}, and $\epsilon>0$.
Then
\begin{equation}
\int_{-\infty}^\infty
\|\{V^\mathrm{s}(x)-V^\mathrm{s}(vt+e_1t^2/2)\}e^{-itH_0}\varPhi_v\|\,dt
=O(|v|^{\max\{-1,-2(\gamma_1-1)+\epsilon\}})
\label{2.7}
\end{equation}
holds as $|v|\to\infty$ for $V^\mathrm{s}\in\mathscr{V}^\mathrm{s}(1/2,1)$.
\end{lem}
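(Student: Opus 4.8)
The plan is to strip off the Stark dynamics by conjugation, reducing the whole estimate to an elementary inequality for the fixed packet $\varPhi_0$. Using the Heisenberg relation $e^{itH_0}xe^{-itH_0}=x+pt+e_1t^2/2$ together with $e^{-iv\cdot x}pe^{iv\cdot x}=p+v$, one checks that
\[
\{V^\mathrm{s}(x)-V^\mathrm{s}(y_t)\}e^{-itH_0}\varPhi_v
=e^{-itH_0}e^{iv\cdot x}\{V^\mathrm{s}(B_t+y_t)-V^\mathrm{s}(y_t)\}\varPhi_0,
\]
where $y_t=vt+e_1t^2/2$ and $B_t=x+pt$. The components of the vector operator $B_t$ commute with one another, so $V^\mathrm{s}(B_t+y_t)$ is well defined through the joint functional calculus. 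Since $e^{-itH_0}$ and $e^{iv\cdot x}$ are unitary, the integrand in \eqref{2.7} equals $N(t):=\|\{V^\mathrm{s}(B_t+y_t)-V^\mathrm{s}(y_t)\}\varPhi_0\|$, which is now $|v|$-free except through the scalar $y_t$.

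Next I would expand the difference by the mean value theorem in the joint spectrum of $B_t$,
\[
V^\mathrm{s}(B_t+y_t)-V^\mathrm{s}(y_t)=\int_0^1(\nabla V^\mathrm{s})(y_t+\theta B_t)\cdot B_t\,d\theta,
\]
and insert the spectral cut-offs $F(|B_t|\le|y_t|/2)$ and $F(|B_t|>|y_t|/2)$. On the first range $|y_t+\theta B_t|\ge|y_t|/2$, so the decay of $\nabla V^\mathrm{s}$ gives $\|(\nabla V^\mathrm{s})(y_t+\theta B_t)F(|B_t|\le|y_t|/2)\|_{\mathscr{B}(L^2)}\le C\langle y_t\rangle^{-\gamma_1}$; as $B_t$ commutes with this factor, the near contribution is $\le C\langle y_t\rangle^{-\gamma_1}\|B_t\varPhi_0\|\le C\langle y_t\rangle^{-\gamma_1}\langle t\rangle$, using $\|B_t\varPhi_0\|\le\|x\varPhi_0\|+\eta|t|\|\varPhi_0\|$ (the momentum support of $\varPhi_0$ lies in $|\xi|<\eta$). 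On the second range I only use $\|\nabla V^\mathrm{s}\|_\infty<\infty$, so the far contribution is $\le C\,\||B_t|F(|B_t|>|y_t|/2)\varPhi_0\|\le C_N|y_t|^{-N}\langle t\rangle^{N+1}$ for every $N$, by Chebyshev in the functional calculus together with $\||B_t|^{N+1}\varPhi_0\|\le C_N\langle t\rangle^{N+1}$ (alternatively Proposition~\ref{prop2.2} may be used to bound the far piece of $e^{-itH_0}\varPhi_v$).

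It then remains to integrate these two bounds in $t$. Using \eqref{2.6}, i.e. $\langle y_t\rangle\gtrsim\max\{1,|v||t|,t^2\}$, the near bound produces
\[
\int_{-\infty}^\infty\langle y_t\rangle^{-\gamma_1}\langle t\rangle\,dt=O(|v|^{-1})+O(|v|^{-2(\gamma_1-1)}),
\]
the first term coming from $|t|\lesssim1$ and the second from $|t|\gtrsim1$; choosing $N$ large turns the far bound into $O(|v|^{-M})$ for any $M$. This reproduces exactly the exponent $\max\{-1,-2(\gamma_1-1)\}$, and the $+\epsilon$ in \eqref{2.7} is merely a margin absorbing the logarithmic factors that arise at the borderline exponents (e.g. $\gamma_1=2$, where $-2(\gamma_1-1)<-1$ so the log is in any case dominated by $|v|^{-1}$).

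The delicate point, which I would treat most carefully, is the neighbourhood of $t=0$: there $|y_t|$ is small, the decay $\langle y_t\rangle^{-\gamma_1}$ is useless, and the splitting radius $|y_t|/2$ degenerates, so the ``far'' estimate is unavailable. The remedy is that the bad window is thin: by \eqref{2.6} one has $|y_t|\ge\sqrt{1-\delta^2}|v||t|$, hence $|y_t|\gtrsim1$ already for $|t|\gtrsim|v|^{-1}$. I would therefore fix a constant $A$ and split the $t$-integral at $|t|=A/|v|$. On the inner interval I use only the crude bound $N(t)\le\|\nabla V^\mathrm{s}\|_\infty\|B_t\varPhi_0\|\le C\langle t\rangle\le C'$, whose integral over a set of length $O(|v|^{-1})$ is itself $O(|v|^{-1})$; on the outer region one can choose $A$ (with $|v|$ large) so that $|y_t|\ge2\langle t\rangle$, which is precisely what legitimizes the spectral splitting at radius $|y_t|/2$ and keeps the far estimate effective. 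Matching these two regimes so that every error term is $O(|v|^{-1})$ or smaller is the main bookkeeping obstacle; once it is arranged, \eqref{2.7} follows.
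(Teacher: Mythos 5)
Your argument is correct and lands on the same two exponents, but it is organized rather differently from the paper's proof. The paper also reduces, via the Avron--Herbst formula, to $\|\{V^\mathrm{s}(x+vt+e_1t^2/2)-V^\mathrm{s}(vt+e_1t^2/2)\}e^{-itK_0}\varPhi_0\|$, but it then splits the \emph{potential} rather than the state: it multiplies $V^\mathrm{s}(x+vt+e_1t^2/2)$ by a smooth cut-off $g((\lambda_1|v|\langle t\rangle)^{-1}x)$, handles the truncated part by exactly the mean-value/Heisenberg computation you apply to your near piece --- producing the same bound $\|\nabla\tilde{V}^\mathrm{s}_{|v|,t}\|_{L^\infty}(\|x\varPhi_0\|+|t|\|p\varPhi_0\|)$ with $\|\nabla\tilde{V}^\mathrm{s}_{|v|,t}\|_{L^\infty}\lesssim\langle y_t\rangle^{-\gamma_1}+\dots$ and the same optimization over the interpolation \eqref{2.12}, whence $O(|v|^{-1})+O(|v|^{-2(\gamma_1-1)+\epsilon})$ --- and disposes of the un-truncated remainder with the Enss propagation estimate, Proposition~\ref{prop2.2}. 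Your version replaces Proposition~\ref{prop2.2} by Chebyshev in the joint functional calculus of the commuting family $B_t=x+pt$ plus polynomial moments of the Schwartz function $\varPhi_0$; this is more elementary and self-contained here, and it also avoids the extra term in $\nabla\tilde{V}^\mathrm{s}_{|v|,t}$ where the derivative falls on $g$. The trade-off is that the propagation-estimate route is the one that survives for the singular part $V^\mathrm{vs}$ in Lemma~\ref{lem2.3}, where sup-norm and moment bounds alone are unavailable. Your explicit splitting of the $t$-integral at $|t|=A/|v|$ plays the role of the paper's device of scaling the cut-off by $\langle t\rangle$ instead of $|t|$, which removes the $t=0$ degeneracy without any splitting; both work.

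One quantitative claim should be corrected, though it costs you nothing: the integrated far contribution is $O(|v|^{-1})$, not $O(|v|^{-M})$ for arbitrary $M$. On the window $|t|\sim A/|v|$ one has $|y_t|\sim A$, so the Chebyshev factor $|y_t|^{-N}$ is merely a constant there and the only gain is the $O(|v|^{-1})$ length of that window; for $|t|\ge1$ one must also spend part of $|y_t|^{-N}$ on $t^{-2(N-1)}$ to secure integrability, leaving a single factor $(|v||t|)^{-1}$ and hence again $O(|v|^{-1})$ after integration. Since $O(|v|^{-1})$ is exactly within the bound claimed in \eqref{2.7}, the lemma is unaffected.
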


\begin{proof}
For the sake of brevity, we put $I=
\|\{V^\mathrm{s}(x)-V^\mathrm{s}(vt+e_1t^2/2)\}e^{-itH_0}\varPhi_v\|$.
By virtue of the Avron-Herbst formula
\begin{equation}
e^{-itH_0}=e^{-it^3/6}e^{itx_1}e^{-ip_1t^2/2}e^{-itK_0}
\label{2.8}
\end{equation}
(see e.g. \cite{AH}) and
\begin{equation}
e^{-iv\cdot x}e^{-itK_0}e^{iv\cdot x}=
e^{-iv^2t/2}e^{-ip\cdot vt}e^{-itK_0},
\label{2.9}
\end{equation}
we have
\begin{align*}
&V^\mathrm{s}(x)e^{-itH_0}\varPhi_v\\
={}&(e^{-it^3/6}e^{itx_1}e^{-ip_1t^2/2})V^\mathrm{s}(x+e_1t^2/2)e^{-itK_0}\varPhi_v\\
={}&(e^{-it^3/6}e^{itx_1}e^{-ip_1t^2/2})e^{iv\cdot x}(e^{-iv^2t/2}e^{-ip\cdot vt})
V^\mathrm{s}(x+vt+e_1t^2/2)e^{-itK_0}\varPhi_0.
\end{align*}
Such a relation has been used also in the proof of Lemma \ref{lem2.3},
which is omitted in this paper.
Then $I$ can be written as
$$I=\|\{V^\mathrm{s}(x+vt+e_1t^2/2)-V^\mathrm{s}(vt+e_1t^2/2)\}e^{-itK_0}\varPhi_0\|.$$
Taking $f\in C_0^\infty(\boldsymbol{R}^n)$ such that $0\le f\le1$,
$f\hat{\varPhi}_0=\hat{\varPhi}_0$ and $\mathrm{supp}\,f\subset
\bigl\{\xi\in\boldsymbol{R}^n\bigm||\xi|<\eta\bigr\}$,
we see that $\varPhi_0=f(p)\varPhi_0$.
Now let us take $g\in C_0^\infty(\boldsymbol{R}^n)$ such that
$0\le g\le1$, $g(y)=1\ (|y|\le3)$ and $g(y)=0\ (|y|\ge4)$,
and introduce
$$\tilde{V}_{|v|,t}^\mathrm{s}(x)=V^\mathrm{s}(x+vt+e_1t^2/2)
g((\lambda_1|v|\langle t\rangle)^{-1}x),$$
where $\lambda_1>0$ is a small constant which will be determined below.
In the definition of $\tilde{V}_{|v|,t}^\mathrm{s}(x)$,
taking $(\lambda_1|v|\langle t\rangle)^{-1}x$
as the argument of $g$,
which was taken as $(\lambda_1|v||t|)^{-1}x$ in \cite{AM},
is one of our devices,
and by virtue of this, we can eliminate the singularity of $\tilde{V}_{|v|,t}^\mathrm{s}(x)$ and its derivatives at $t=0$. Since
$\tilde{V}_{|v|,t}^\mathrm{s}(0)=V^\mathrm{s}(vt+e_1t^2/2)$
and $\varPhi_0=f(p)\varPhi_0$, we have
\begin{equation}
I\le\|\bar{V}_{|v|,t}^\mathrm{s}(x)e^{-itK_0}
f(p)\varPhi_0\|+
\|\{\tilde{V}_{|v|,t}^\mathrm{s}(x)-\tilde{V}_{|v|,t}^\mathrm{s}(0)\}e^{-itK_0}
\varPhi_0\|,\label{2.10}
\end{equation}
where $\bar{V}_{|v|,t}^\mathrm{s}(x)=V^\mathrm{s}(x+vt+e_1t^2/2)-\tilde{V}_{|v|,t}^\mathrm{s}(x)$.
As for the first term of the inequality \eqref{2.10},
we estimate it in the same way as in \cite{We1}:
$$\|\bar{V}_{|v|,t}^\mathrm{s}(x)e^{-itK_0}
f(p)\varPhi_0\|\le
I_1+I_2+I_3,$$
where
\begin{align*}
I_1={}&
\|\bar{V}_{|v|,t}^\mathrm{s}(x)F(|x|\ge3\lambda_1|v||t|)e^{-itK_0}f(p)F(|x|\le\lambda_1|v||t|)
\varPhi_0\|,\\
I_2={}&
\|\bar{V}_{|v|,t}^\mathrm{s}(x)F(|x|\ge3\lambda_1|v||t|)e^{-itK_0}f(p)F(|x|>\lambda_1|v||t|)
\varPhi_0\|,\\
I_3={}&
\|\bar{V}_{|v|,t}^\mathrm{s}(x)F(|x|<3\lambda_1|v||t|)e^{-itK_0}f(p)\varPhi_0\|.
\end{align*}
As for $I_1$,
by virtue of Proposition \ref{prop2.2}
and $\|\bar{V}_{|v|,t}^\mathrm{s}(x)\|_{\mathscr{B}(L^2)}
\le\|V^\mathrm{s}\|_{L^\infty}$,
\begin{align*}
I_1\le{}&\|V^\mathrm{s}\|_{L^\infty}\|F(|x|\ge3\lambda_1|v||t|)e^{-itK_0}f(p)F(|x|\le\lambda_1|v||t|)\|_{\mathscr{B}(L^2)}
\|\varPhi_0\|\\
\le{}&C(1+\lambda_1|v||t|)^{-2}
\end{align*}
holds for $|v|>\eta/\lambda_1$.
Here we used $0\le1-g\le1$. As for $I_2$,
we have
\begin{align*}
I_2\le{}&\|V^\mathrm{s}\|_{L^\infty}
\|F(|x|>\lambda_1|v||t|)\langle x\rangle^{-2}\|_{\mathscr{B}(L^2)}
\|\langle x\rangle^2\varPhi_0\|\\
\le{}&C(1+\lambda_1|v||t|)^{-2}.
\end{align*}
As for $I_3$, by virtue of the definition of
$\tilde{V}_{|v|,t}^\mathrm{s}(x)$, we have $I_3=0$,
because $\{1-g((\lambda_1|v|\langle t\rangle)^{-1}x)\}
F(|x|<3\lambda_1|v||t|)\equiv0$.
Based on the above observations, we obtain
\begin{equation}
\int_{-\infty}^\infty\|\bar{V}_{|v|,t}^\mathrm{s}(x)e^{-itK_0}
f(p)\varPhi_0\|\,dt
\le C\int_{-\infty}^\infty(1+\lambda_1|v||t|)^{-2}\,dt
=O(|v|^{-1})
\label{2.11}
\end{equation}
by an appropriate change of variables.
Now we will estimate
the second term of the inequality \eqref{2.10}
by using the device of \cite{I}. By
$$
\tilde{V}_{|v|,t}^\mathrm{s}(x)-\tilde{V}_{|v|,t}^\mathrm{s}(0)
=\int_0^1\frac{d}{d\theta}\{\tilde{V}_{|v|,t}^\mathrm{s}(\theta x)\}\,d\theta
=\left(\int_0^1(\nabla\tilde{V}_{|v|,t}^\mathrm{s})(\theta x)\,d\theta\right)\cdot x,
$$
we have
\begin{align*}
\|\{\tilde{V}_{|v|,t}^\mathrm{s}(x)-\tilde{V}_{|v|,t}^\mathrm{s}(0)\}e^{-itK_0}
\varPhi_0\|
\le{}&\left\|\int_0^1(\nabla\tilde{V}_{|v|,t}^\mathrm{s})(\theta x)\,d\theta\right\|_{\mathscr{B}(L^2)}\|xe^{-itK_0}\varPhi_0\|\\
\le{}&\|\nabla\tilde{V}_{|v|,t}^\mathrm{s}\|_{L^\infty}\|(x+pt)\varPhi_0\|\\
\le{}&\|\nabla\tilde{V}_{|v|,t}^\mathrm{s}\|_{L^\infty}(\|x\varPhi_0\|+|t|\|p\varPhi_0\|).
\end{align*}
Here we used $e^{itK_0}xe^{-itK_0}=x+pt$.
Dealing with $\|xe^{-itK_0}\varPhi_0\|$ directly without using cut-offs is the device of \cite{I}.
Now we will watch $\|\nabla\tilde{V}_{|v|,t}^\mathrm{s}\|_{L^\infty}$, where
\begin{align*}
(\nabla\tilde{V}_{|v|,t}^\mathrm{s})(x)
={}&(\nabla V^\mathrm{s})(x+vt+e_1t^2/2)
g((\lambda_1|v|\langle t\rangle)^{-1}x)\\
&\qquad+V^\mathrm{s}(x+vt+e_1t^2/2)
(\nabla g)((\lambda_1|v|\langle t\rangle)^{-1}x)(\lambda_1|v|\langle t\rangle)^{-1}.
\end{align*}
For a while, suppose $|t|\ge1$. Then $\langle t\rangle
=\sqrt{1+t^2}\le\sqrt{t^2+t^2}=\sqrt{2}|t|$ holds.
Note that $x\in\mathrm{supp}\,\{g(\cdot/(\lambda_1|v|\langle t\rangle))\}$
satisfies $|x|\le4\lambda_1|v|\langle t\rangle$. For such $x$'s,
\begin{align*}
|x+vt+e_1t^2/2|^2={}&|x+vt|^2+t^4/4+t^2(x+vt)\cdot e_1\\
\ge{}&(|vt|-|x|)^2+|t|^4/4-|t|^2(|x_1|+\delta|v||t|)\\
\ge{}&(|v||t|-4\lambda_1|v|\langle t\rangle)^2+|t|^4/4-|t|^2(4\lambda_1|v|\langle t\rangle+\delta|v||t|)\\
\ge{}&\{(1-4\sqrt{2}\lambda_1)|v||t|\}^2+|t|^4/4-|t|^2\{(4\sqrt{2}\lambda_1+\delta)|v||t|\}
\end{align*}
holds. If $\lambda_1$ is so small that $4\sqrt{2}\lambda_1\le(1-\delta)/4$,
then
\begin{align*}
|x+vt+e_1t^2/2|^2
\ge{}&((3+\delta)/4)^2|v|^2|t|^2+|t|^4/4-((1+3\delta)/4)|v||t|^3\\
={}&|t|^2\{|t|-((1+3\delta)/2)|v|\}^2/4+((1-\delta^2)/2)|v|^2|t|^2\\
\ge{}&((1-\delta^2)/2)|v|^2|t|^2
\end{align*}
holds (cf. \eqref{2.4}). We also have
\begin{align*}
|x+vt+e_1t^2/2|^2
\ge{}&|t|^2(((1+3\delta)/\{2(3+\delta)\})|t|-((3+\delta)/4)|v|)^2\\
&\qquad+(2(1-\delta^2)/(3+\delta)^2)|t|^4\\
\ge{}&(2(1-\delta^2)/(3+\delta)^2)|t|^4
\end{align*}
(cf. \eqref{2.5}). Hence, by taking $\lambda_1$ as
$(1-\delta)/(16\sqrt{2})$,
\begin{equation}
|x+vt+e_1t^2/2|\ge
\max\{c_1|v||t|,c_2|t|^2\}\label{2.12}
\end{equation}
holds for $|t|\ge1$, where
$c_1=\sqrt{2(1-\delta^2)}/2$ and
$c_2=\sqrt{2(1-\delta^2)}/(3+\delta)$. \eqref{2.12} yields
\begin{equation}
|x+vt+e_1t^2/2|\ge
(c_1|v||t|)^\nu(c_2|t|^2)^{1-\nu}=c_1^\nu c_2^{1-\nu}|v|^\nu|t|^{2-\nu}
\label{2.13}
\end{equation}
for $0\le\nu\le1$.
Throughout this paper, we will frequently use this interpolation
estimate on $|x+vt+e_1t^2/2|$ parameterized by $\nu$'s such that
$0\le\nu\le1$ in our optimization arguments.
Taking account of the boundedness of 
$\nabla\tilde{V}_{|v|,t}^\mathrm{s}$ for $|t|<1$,
we see that
\begin{equation}
\begin{split}
\|\nabla\tilde{V}_{|v|,t}^\mathrm{s}\|_{L^\infty}
\le{}&
C_1(1+|v|^{\nu_1}|t|^{2-\nu_1})^{-\gamma_1}\\
&+C_2(1+|v|^{\nu_2}|t|^{2-\nu_2})^{-\gamma_0}|v|^{-1}(1+|t|)^{-1}\\
\le{}&
C_1'(1+|v|^{\nu_1/(2-\nu_1)}|t|)^{-\gamma_1(2-\nu_1)}\\
&+C_2'(1+|v|^{\nu_2/(2-\nu_2)}|t|)^{-\gamma_0(2-\nu_2)}\\
&\qquad\quad\times|v|^{\nu_2/(2-\nu_2)-1}(|v|^{\nu_2/(2-\nu_2)}+|v|^{\nu_2/(2-\nu_2)}|t|)^{-1}\\
\le{}&
C_1'(1+|v|^{\nu_1/(2-\nu_1)}|t|)^{-\gamma_1(2-\nu_1)}\\
&+C_2'|v|^{\nu_2/(2-\nu_2)-1}
(1+|v|^{\nu_2/(2-\nu_2)}|t|)^{-\gamma_0(2-\nu_2)-1}
\end{split}\label{2.14}
\end{equation}
holds for $0\le\nu_1,\,\nu_2\le1$ and $|v|\ge1$.
Then
\begin{align*}
&\int_{-\infty}^\infty\|\nabla\tilde{V}_{|v|,t}^\mathrm{s}\|_{L^\infty}
\|x\varPhi_0\|\,dt\\
\le{}&C_1'\int_{-\infty}^\infty(1+|v|^{\nu_1/(2-\nu_1)}|t|)^{-\gamma_1(2-\nu_1)}\,dt\\
&+C_2'|v|^{\nu_2/(2-\nu_2)-1}\int_{-\infty}^\infty(1+|v|^{\nu_2/(2-\nu_2)}|t|)^{-\gamma_0(2-\nu_2)-1}\,dt\\
={}&O(|v|^{-\nu_1/(2-\nu_1)})+
O(|v|^{\nu_2/(2-\nu_2)-1-\nu_2/(2-\nu_2)})\\
={}&O(|v|^{-\nu_1/(2-\nu_1)})+
O(|v|^{-1})
\end{align*}
can be obtained by an appropriate change of variables
under the conditions
$-\gamma_1(2-\nu_1)<-1$ and $-\gamma_0(2-\nu_2)-1<-1$,
i.e. $\nu_1<2-1/\gamma_1$
and $\nu_2<2$.
Since $2-1/\gamma_1>1$ and
$$\min_{0\le\nu_1\le1}(-\nu_1/(2-\nu_1))=-1,$$
we have
\begin{equation}
\int_{-\infty}^\infty\|\nabla\tilde{V}_{|v|,t}^\mathrm{s}\|_{L^\infty}
\|x\varPhi_0\|\,dt
=O(|v|^{-1})
\label{2.15}
\end{equation}
by the optimization argument. In the same way,
\begin{align*}
&\int_{-\infty}^\infty\|\nabla\tilde{V}_{|v|,t}^\mathrm{s}\|_{L^\infty}
|t|\|p\varPhi_0\|\,dt\\
\le{}&C_3'\int_{-\infty}^\infty(1+|v|^{\nu_3/(2-\nu_3)}|t|)^{-\gamma_1(2-\nu_3)}|t|\,dt\\
&+C_4'|v|^{\nu_4/(2-\nu_4)-1}\int_{-\infty}^\infty(1+|v|^{\nu_4/(2-\nu_4)}|t|)^{-\gamma_0(2-\nu_4)-1}|t|\,dt\\
={}&O(|v|^{-2\nu_3/(2-\nu_3)})+
O(|v|^{\nu_4/(2-\nu_4)-1-2\nu_4/(2-\nu_4)})\\
={}&O(|v|^{-2\nu_3/(2-\nu_3)})+
O(|v|^{-1-\nu_4/(2-\nu_4)})
\end{align*}
can be obtained by an appropriate change of variables
under the conditions
$-\gamma_1(2-\nu_3)+1<-1$ and $-\gamma_0(2-\nu_4)-1+1<-1$,
i.e. $\nu_3<2-2/\gamma_1$
and $\nu_4<2-1/\gamma_0$.
Since $0<2-2/\gamma_1\le2-2/(1+\gamma_0)<2/3$,
$0<2-1/\gamma_0\le1$,
\begin{align*}
&\inf_{0\le\nu_3<2-2/\gamma_1}(-2\nu_3/(2-\nu_3))=-2(\gamma_1-1),\\
&\inf_{0\le\nu_4<2-1/\gamma_0}(-1-\nu_4/(2-\nu_4))=-2\gamma_0,
\end{align*}
and $-2\gamma_0\le-2(\gamma_1-1)$, we have
\begin{equation}
\int_{-\infty}^\infty\|\nabla\tilde{V}_{|v|,t}^\mathrm{s}\|_{L^\infty}
|t|\|p\varPhi_0\|\,dt=O(|v|^{-2(\gamma_1-1)+\epsilon})
\label{2.16}
\end{equation}
with $\epsilon>0$, by the optimization argument.
\eqref{2.15} and \eqref{2.16} yield
\begin{equation}
\int_{-\infty}^\infty\|\{\tilde{V}_{|v|,t}^\mathrm{s}(x)-\tilde{V}_{|v|,t}^\mathrm{s}(0)\}e^{-itK_0}
\varPhi_0\|\,dt=O(|v|^{-1})+O(|v|^{-2(\gamma_1-1)+\epsilon}).
\label{2.17}
\end{equation}
Therefore, \eqref{2.7} can be shown by \eqref{2.11} and \eqref{2.17}.
\end{proof}

\begin{rem}\label{rem2.1}
Taking $(\lambda_1|v|\langle t\rangle)^{-1}x$ in place of
$(\lambda_1|v||t|)^{-1}x$ as the argument of $g$
in the definition of $\tilde{V}_{|v|,t}^\mathrm{s}(x)$
yields the regularity of $(\nabla\tilde{V}_{|v|,t}^\mathrm{s})(x)$
also at $t=0$. Hence, we do not have to divide
the integral region $\boldsymbol{R}$ of $t$ into any $v$-dependent
neighborhood of $0$ and its complement (see below) as in \cite{We1},
\cite{AM}, \cite{AFI}, \cite{I} and so on.
This fact makes the optimization argument in the
above proof rather simple.

Here we would like to review the argument
in the proof of Proposition 2.4 of \cite{I}
corresponding to our Lemma \ref{lem2.4}: Let $0<\sigma_1<1$.
Taking account of
$$\int_{-\infty}^\infty I\,dt=
\int_{|t|<|v|^{-\sigma_1}}I\,dt+\int_{|t|\ge|v|^{-\sigma_1}}I\,dt$$
with $I$ in the proof of our Lemma \ref{lem2.4},
we have only to estimate the two terms of the
right-hand side separately, as mentioned above. $I$ can be estimated as
$I\le \hat{I}_1+\hat{I}_2+\hat{I}_3$, where
$\hat{I}_j$'s are defined by replacing $\bar{V}_{|v|,t}^\mathrm{s}(x)$
in the definition of $I_j$'s by
$\hat{V}_{v,t}^\mathrm{s}(x)=V^\mathrm{s}(x+vt+e_1t^2/2)-V^\mathrm{s}(vt+e_1t^2/2)$:
\begin{align*}
\hat{I}_1={}&
\|\hat{V}_{v,t}^\mathrm{s}(x)F(|x|\ge3\lambda_1|v||t|)e^{-itK_0}f(p)F(|x|\le\lambda_1|v||t|)
\varPhi_0\|,\\
\hat{I}_2={}&
\|\hat{V}_{v,t}^\mathrm{s}(x)F(|x|\ge3\lambda_1|v||t|)e^{-itK_0}f(p)F(|x|>\lambda_1|v||t|)
\varPhi_0\|,\\
\hat{I}_3={}&
\|\hat{V}_{v,t}^\mathrm{s}(x)F(|x|<3\lambda_1|v||t|)e^{-itK_0}f(p)\varPhi_0\|.
\end{align*}
$\hat{I}_1+\hat{I}_2\le C(1+\lambda_1|v||t|)^{-2}$ yields
$$\int_{|t|<|v|^{-\sigma_1}}(\hat{I}_1+\hat{I}_2)\,dt=O(|v|^{-1}).$$
On the other hand, $\hat{I}_3\le C(1+\lambda_1|v||t|)^{-\gamma_1}(1+|t|)$
yields
$$\int_{|t|<|v|^{-\sigma_1}}\hat{I}_3\,dt=O(|v|^{-1})
+o(|v|^{-1})=O(|v|^{-1}).$$
Let $0<\sigma_2<1$.
$I$ can be also estimated as
$I\le \hat{I}_4+\hat{I}_5+\hat{I}_6$, where
\begin{align*}
\hat{I}_4={}&
\|\hat{V}_{v,t}^\mathrm{s}(x)F(|x|\ge3|v|^{\sigma_2}|t|)e^{-itK_0}f(p)F(|x|\le|v|^{\sigma_2}|t|)
\varPhi_0\|,\\
\hat{I}_5={}&
\|\hat{V}_{v,t}^\mathrm{s}(x)F(|x|\ge|v|^{\sigma_2}|t|)e^{-itK_0}f(p)F(|x|>|v|^{\sigma_2}|t|)
\varPhi_0\|,\\
\hat{I}_6={}&
\|\hat{V}_{v,t}^\mathrm{s}(x)F(|x|<3|v|^{\sigma_2}|t|)e^{-itK_0}f(p)\varPhi_0\|.
\end{align*}
$\hat{I}_4+\hat{I}_5\le C(1+|v|^{\sigma_2}|t|)^{-2}$ yields
$$\int_{|t|\ge|v|^{-\sigma_1}}(\hat{I}_4+\hat{I}_5)\,dt=O(|v|^{\sigma_1-2\sigma_2}).$$
Here we suppose that $|v|$ is so large that $4\sqrt{2}|v|^{\sigma_2-1}\le (1-\delta)/4$
holds. Then $\hat{I}_6\le C(1+c_1|v||t|)^{-\gamma_1}(1+|t|)$
and $\hat{I}_6\le C(1+c_2|t|^2)^{-\gamma_1}(1+|t|)$
hold with $c_1$ and $c_2$ in the proof of our Lemma \ref{lem2.4}.
These estimates can be obtained rigorously.
In \cite{I}, Ishida stated that the H\"older inequality yields
the estimate (2.44) in the proof of Proposition 2.4 of \cite{I}
$$\int_{|t|\ge|v|^{-\sigma_1}}\hat{I}_6\,dt\le
\left(\int_{|t|\ge|v|^{-\sigma_1}}\hat{I}_6^{q_1}\,dt\right)^{1/q_1}
\left(\int_{|t|\ge|v|^{-\sigma_1}}\hat{I}_6^{q_2}\,dt\right)^{1/q_2}$$
with $q_1>1$ such that $1/q_1+1/q_2=1$ (see also (3.46) in the proof of Proposition 3.8 of \cite{I}).
However, an appropriate application of the H\"older inequality yields
a trivial estimate
$$\int_{|t|\ge|v|^{-\sigma_1}}\hat{I}_6\,dt\le
\left(\int_{|t|\ge|v|^{-\sigma_1}}\hat{I}_6\,dt\right)^{1/q_1}
\left(\int_{|t|\ge|v|^{-\sigma_1}}\hat{I}_6\,dt\right)^{1/q_2}$$
only, because $\hat{I}_6=
\hat{I}_6^{1/q_1}\times\hat{I}_6^{1/q_2}$.
Therefore we think that the results in Propositions 2.4 and 3.8 of
\cite{I} have not been
obtained rigorously yet. Following our argument, we would like
to use the estimate
$$\hat{I}_6\le C_1(1+c_1^{\nu_1}c_2^{1-\nu_1}|v|^{\nu_1}|t|^{2-\nu_1})^{-\gamma_1}+C_2(1+c_1^{\nu_2}c_2^{1-\nu_2}|v|^{\nu_2}|t|^{2-\nu_2})^{-\gamma_1}|t|$$
for $0\le\nu_1,\,\nu_2\le1$. As for $\nu_1$,
$-\gamma_1(2-\nu_1)\le-\gamma_1<-1$
holds; while, as for $\nu_2$,
we need that $\nu_2$ satisfies $-\gamma_1(2-\nu_2)+1<-1$, that is,
$\nu_2<2-2/\gamma_1$. For the sake of simplicity,
we assume $2-2/\gamma_1\le1$, that is, $\gamma_1\le2$.
Then we have the estimate
\begin{align*}
\int_{|t|\ge|v|^{-\sigma_1}}\hat{I}_6\,dt={}&
O(|v|^{-\nu_1\gamma_1-\sigma_1\{1-(2-\nu_1)\gamma_1\}})+
O(|v|^{-\nu_2\gamma_1-\sigma_1\{2-(2-\nu_2)\gamma_1\}})\\
={}&
O(|v|^{-\sigma_1(1-2\gamma_1)-(1+\sigma_1)\nu_1\gamma_1})+
O(|v|^{-\sigma_1(2-2\gamma_1)-(1+\sigma_1)\nu_2\gamma_1}).
\end{align*}
Taking $\nu_1=1$ and $\nu_2=2-2/\gamma_1$, we obtain
a finer estimate
\begin{align*}
\int_{|t|\ge|v|^{-\sigma_1}}\hat{I}_6\,dt={}&
O(|v|^{-\sigma_1(1-2\gamma_1)-(1+\sigma_1)\gamma_1})+
O(|v|^{-\sigma_1(2-2\gamma_1)-(1+\sigma_1)(2-2/\gamma_1)\gamma_1+\epsilon})\\
={}&
O(|v|^{-\gamma_1-\sigma_1(1-\gamma_1)})+
O(|v|^{-(2\gamma_1-2)+\epsilon}).
\end{align*}
Taking $\sigma_2$ such that
$\sigma_1-2\sigma_2=-\gamma_1-\sigma_1(1-\gamma_1)$,
that is, $\sigma_2=\{\gamma_1+\sigma_1(2-\gamma_1)\}/2$,
$$\int_{|t|\ge|v|^{-\sigma_1}}I\,dt=
O(|v|^{-\gamma_1-\sigma_1(1-\gamma_1)})+
O(|v|^{-2(\gamma_1-1)+\epsilon})$$
can be obtained. Since one can take $\sigma_1$ such that
$-\gamma_1-\sigma_1(1-\gamma_1)\le-(2\gamma_1-2)+\epsilon$,
that is, $\sigma_1\le(2+\epsilon-\gamma_1)/(\gamma_1-1)$,
we finally obtain
$$\int_{|t|\ge|v|^{-\sigma_1}}I\,dt=
O(|v|^{-2(\gamma_1-1)+\epsilon}),$$
which yields \eqref{2.17} in the proof of our Lemma \ref{lem2.4}.
\end{rem}

As in \cite{AM}, we introduce auxiliary wave operators
$$\Omega_{G,v}^{\mathrm{s},\pm}=\slim_{t\to\pm\infty}
e^{itH}U_{G,v}^\mathrm{s}(t),$$
where
$U_{G,v}^\mathrm{s}(t)=e^{-itH_0}M_{G,v}^\mathrm{s}(t)$
and $M_{G,v}^\mathrm{s}(t)=e^{-i\int_0^t V^\mathrm{s}(vs+e_1s^2/2)\,ds}$.
We know that 
$$\Omega_{G,v}^{\mathrm{s},\pm}=W^\pm I_{G,v}^{\mathrm{s},\pm},\quad
I_{G,v}^{\mathrm{s},\pm}=\slim_{t\to\pm\infty}M_{G,v}^\mathrm{s}(t)$$
exist, by virtue of \eqref{2.5}.
As emphasized in \cite{AM}, the $v$-dependent Graf-type modifier
$M_{G,v}^\mathrm{s}(t)$ commutes with any operators.
This fact will be used frequently.
Then the following can be obtained as in \cite{AM}, so we omit the proof:

\begin{lem}\label{lem2.5}
Let $v$ and $\varPhi_v$ be as in Theorem \ref{thm2.1},
and $\epsilon>0$. Then
\begin{equation}
\sup_{t\in\boldsymbol{R}}\|(e^{-itH}\Omega_{G,v}^{\mathrm{s},-}-U_{G,v}^\mathrm{s}(t))\varPhi_v\|
=O(|v|^{\max\{-1,-2(\gamma_1-1)+\epsilon\}})
\end{equation}
holds as $|v|\to\infty$ for $V^\mathrm{vs}\in\mathscr{V}^\mathrm{vs}$
and $V^\mathrm{s}\in\mathscr{V}^\mathrm{s}(1/2,1)$.
\end{lem}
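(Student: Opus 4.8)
The plan is to prove this by a Cook-type (Duhamel) argument in which the two preceding lemmas supply the only genuine analytic input, so that the decay rate in $|v|$ is inherited wholesale from Lemmas \ref{lem2.3} and \ref{lem2.4}. First I would introduce the auxiliary vector-valued function $\Theta(\tau)=e^{i\tau H}U_{G,v}^\mathrm{s}(\tau)\varPhi_v$, whose strong limit as $\tau\to-\infty$ is $\Omega_{G,v}^{\mathrm{s},-}\varPhi_v$ by the already-noted existence of the auxiliary wave operators (which rests on \eqref{2.5}). Since $\hat{\varPhi}_0\in C_0^\infty(\boldsymbol{R}^n)$, the vector $\varPhi_v$ lies in the domains of $H$ and $H_0$, so $\Theta$ is strongly differentiable and the fundamental theorem of calculus applies.

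The key computation is the generator $\frac{d}{d\tau}\Theta(\tau)=e^{i\tau H}[iHU_{G,v}^\mathrm{s}(\tau)+\frac{d}{d\tau}U_{G,v}^\mathrm{s}(\tau)]\varPhi_v$. Here I would use two structural facts: that $M_{G,v}^\mathrm{s}(\tau)$ is a unimodular scalar phase commuting with every operator, so that $\frac{d}{d\tau}M_{G,v}^\mathrm{s}(\tau)=-iV^\mathrm{s}(v\tau+e_1\tau^2/2)M_{G,v}^\mathrm{s}(\tau)$ and $e^{-i\tau H_0}$ commutes with it; and that the two copies of $H_0$ (one from $H=H_0+V$, one from $\frac{d}{d\tau}e^{-i\tau H_0}$) cancel exactly. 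With $V=V^\mathrm{vs}+V^\mathrm{s}$ (as $V^\mathrm{l}=0$ throughout this section), the bracket then equals $i[V^\mathrm{vs}(x)+V^\mathrm{s}(x)-V^\mathrm{s}(v\tau+e_1\tau^2/2)]e^{-i\tau H_0}M_{G,v}^\mathrm{s}(\tau)$. This exact matching of the scalar $V^\mathrm{s}(v\tau+e_1\tau^2/2)$ produced by differentiating the modifier with the multiplication operator $V^\mathrm{s}(x)$ coming from $H$ is precisely the reason for introducing $M_{G,v}^\mathrm{s}$.

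Next I would pass to norms. Because $e^{i\tau H}$ is unitary and $M_{G,v}^\mathrm{s}(\tau)$ is a unimodular scalar commuting with $e^{-i\tau H_0}$, both drop out, giving
\[
\left\|\frac{d}{d\tau}\Theta(\tau)\right\|\le\|V^\mathrm{vs}(x)e^{-i\tau H_0}\varPhi_v\|+\|\{V^\mathrm{s}(x)-V^\mathrm{s}(v\tau+e_1\tau^2/2)\}e^{-i\tau H_0}\varPhi_v\|.
\]
From $\Theta(t)-\lim_{\tau\to-\infty}\Theta(\tau)=\int_{-\infty}^t\frac{d}{d\tau}\Theta(\tau)\,d\tau$ and the unitarity of $e^{-itH}$, the quantity to be estimated equals the norm of $e^{-itH}\int_{-\infty}^t\frac{d}{d\tau}\Theta(\tau)\,d\tau$, which is bounded by $\int_{-\infty}^t\|\frac{d}{d\tau}\Theta(\tau)\|\,d\tau\le\int_{-\infty}^\infty\|\frac{d}{d\tau}\Theta(\tau)\|\,d\tau$. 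The decisive point is that this last integral no longer depends on $t$, which is exactly what makes the bound uniform in $t$ and lets me take the supremum for free. Finally Lemma \ref{lem2.3} controls the $V^\mathrm{vs}$-integral by $O(|v|^{-1})$ and Lemma \ref{lem2.4} controls the $V^\mathrm{s}$-difference integral by $O(|v|^{\max\{-1,-2(\gamma_1-1)+\epsilon\}})$; since that exponent is $\ge-1$, the two combine into the asserted rate.

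I do not expect a serious analytic obstacle in this lemma, since all the delicate phase-space and decay estimates have been isolated into Lemmas \ref{lem2.3} and \ref{lem2.4}. The only points requiring care are bookkeeping: justifying the strong differentiability of $\Theta$ on the dense set of such $\varPhi_v$, and the interchange of the strong limit $\tau\to-\infty$ with the norm and the integral (legitimate once the integrability just displayed is established). The conceptual heart to emphasize is that the $t$-uniformity is automatic the moment the integrand is bounded by a $t$-independent integrable function, so no separate work is needed for the supremum over $t$.
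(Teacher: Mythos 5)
Your proposal is correct and is exactly the argument the paper intends: the paper omits the proof, referring to \cite{AM}, where the same Cook--Duhamel computation is carried out --- differentiating $e^{i\tau H}U_{G,v}^\mathrm{s}(\tau)\varPhi_v$, cancelling the scalar $V^\mathrm{s}(v\tau+e_1\tau^2/2)$ from the modifier against part of $V^\mathrm{s}(x)$, and bounding the resulting $t$-independent integral by Lemmas \ref{lem2.3} and \ref{lem2.4}. Your bookkeeping remarks (strong differentiability on $\mathscr{S}$-vectors, uniformity in $t$ from the $t$-independent integrable majorant) are the right ones and present no difficulty here.
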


Now we will show Theorem \ref{thm2.1}:

\begin{proof}[Proof of Theorem \ref{thm2.1}]
Since the proof is quite similar to the one of
Theorem 2.1 of \cite{AM}, we give its sketch only.

Suppose that $V^\mathrm{vs}\in\mathscr{V}^\mathrm{vs}$
and $V^\mathrm{s}\in\mathscr{V}^\mathrm{s}(1/2,5/4)$.
We first note that $S$ is represented as
\begin{align*}
&S=(W^+)^*W^-=I_{G,v}^\mathrm{s}(\Omega_{G,v}^{\mathrm{s},+})^*\Omega_{G,v}^{\mathrm{s},-},\\
&I_{G,v}^\mathrm{s}=I_{G,v}^{\mathrm{s},+}(I_{G,v}^{\mathrm{s},-})^*=e^{-i\int_{-\infty}^\infty V^\mathrm{s}(vs+e_1s^2/2)\,ds}.
\end{align*}
Noting $[S,p_j]=[S-I_{G,v}^\mathrm{s},p_j-v_j]$, $(p_j-v_j)\varPhi_v=(p_j\varPhi_0)_v$ and
\begin{align*}
i(S-I_{G,v}^\mathrm{s})\varPhi_v={}&I_{G,v}^\mathrm{s}i(\Omega_{G,v}^{\mathrm{s},+}-\Omega_{G,v}^{\mathrm{s},-})^*\Omega_{G,v}^{\mathrm{s},-}\varPhi_v\\
={}&I_{G,v}^\mathrm{s}\int_{-\infty}^\infty U_{G,v}^\mathrm{s}(t)^*V_te^{-itH}\Omega_{G,v}^{\mathrm{s},-}\varPhi_v\,dt
\end{align*}
with $V_t=V^\mathrm{vs}(x)+V^\mathrm{s}(x)-V^\mathrm{s}(vt+e_1t^2/2)$,
we have
$$|v|(i[S,p_j]\varPhi_v,\varPsi_v)=I_{G,v}^\mathrm{s}\{I(v)+R(v)\}$$
with
\begin{align*}
I(v)={}&|v|\int_{-\infty}^\infty
[(V_tU_{G,v}^\mathrm{s}(t)(p_j\varPhi_0)_v,U_{G,v}^\mathrm{s}(t)\varPsi_v)\\
{}&\qquad\qquad-(V_tU_{G,v}^\mathrm{s}(t)\varPhi_v,U_{G,v}^\mathrm{s}(t)(p_j\varPsi_0)_v)]\,dt,\\
R(v)={}&|v|\int_{-\infty}^\infty
[((e^{-itH}\Omega_{G,v}^{\mathrm{s},-}-U_{G,v}^\mathrm{s}(t))(p_j\varPhi_0)_v,V_tU_{G,v}^\mathrm{s}(t)\varPsi_v)\\
{}&\qquad\qquad-((e^{-itH}\Omega_{G,v}^{\mathrm{s},-}-U_{G,v}^\mathrm{s}(t))\varPhi_v,V_tU_{G,v}^\mathrm{s}(t)(p_j\varPsi_0)_v)]\,dt.
\end{align*}
By Lemmas \ref{lem2.3}, \ref{lem2.4} and \ref{lem2.5},
we have 
\begin{equation}
|R(v)|=O(|v|^{1+2\max\{-1,-2(\gamma_1-1)+\epsilon\}})=O(|v|^{\max\{-1,5-4\gamma_1+2\epsilon\}}).
\end{equation}
Then we need the condition $5-4\gamma_1<0$
in order to get $\lim_{|v|\to\infty}R(v)=0$, because
one can take $\epsilon>0$ so small
that $5-4\gamma_1+2\epsilon<0$.
This is equivalent to $\gamma_1>5/4$.

The rest of the proof is the same as in \cite{We1} and
\cite{AM}. So we omit it.
\end{proof}

By virtue of Theorem \ref{thm2.1} and the Plancherel formula associated with
the Radon transform (see Helgason~\cite{He}), Theorem \ref{thm1.1} can be shown
in the same way as in the proof of Theorem 1.2 of \cite{We1}
(see also \cite{EW}). Thus we omit its proof.

\section{The case where $V^\mathrm{s}\in\mathscr{V}^\mathrm{s}(1/2,5/4)$ and $V^\mathrm{l}\not=0$}

The main purpose of this section is showing the following
reconstruction formula, which yields Theorem \ref{thm1.2}.
For the sake of brevity, we put
\begin{equation}
U_D(t)=e^{-itH_0}M_D(t),\quad
M_D(t)=e^{-i\int_0^tV^\mathrm{l}(ps+e_1s^2/2)\,ds}.
\end{equation}

\begin{thm}\label{thm3.1}
Let $\hat{v}\in\boldsymbol{R}^n$ be given
such that $|\hat{v}\cdot e_1|<1$. Put $v=|v|\hat{v}$.
Let $\eta>0$ be given,
and $\varPhi_0,\,\varPsi_0\in L^2(\boldsymbol{R}^n)$ be
such that $\hat{\varPhi}_0,\,\hat{\varPsi}_0\in
C_0^\infty(\boldsymbol{R}^n)$ with
$\mathrm{supp}\,\hat{\varPhi}_0,\,
\mathrm{supp}\,\hat{\varPsi}_0\subset
\bigl\{\xi\in\boldsymbol{R}^n\bigm||\xi|<\eta\bigr\}$.
Put $\varPhi_v=e^{iv\cdot x}\varPhi_0$ and $\varPsi_v=e^{iv\cdot x}\varPsi_0$.
Let $V^\mathrm{vs}\in\mathscr{V}^\mathrm{vs}$,
$V^\mathrm{s}\in\mathscr{V}^\mathrm{s}(1/2,5/4)$,
and $V^\mathrm{l}\in\mathscr{V}_D^\mathrm{l}(3/8)$.
Then the following holds:
\begin{equation}
\begin{split}
&\lim_{|v|\to\infty}
\left\{|v|(i[S_D,p_j]\varPhi_v,\varPsi_v)-
\int_{-\infty}^\infty
i((\partial_jV^\mathrm{l})(x)U_D(t)\varPhi_v,U_D(t)\varPsi_v)\,dt\right\}\\
={}&\int_{-\infty}^\infty[
(V^\mathrm{vs}(x+\hat{v}\tau)p_j\varPhi_0,\varPsi_0)-
(V^\mathrm{vs}(x+\hat{v}\tau)\varPhi_0,p_j\varPsi_0)\\
&\qquad\qquad+i((\partial_j V^\mathrm{s})(x+\hat{v}\tau)\varPhi_0,\varPsi_0)]
\,d\tau
\end{split}\label{3.2}
\end{equation}
for $1\le j\le n$.
\end{thm}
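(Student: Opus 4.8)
The plan is to follow the proof of Theorem~\ref{thm2.1} verbatim in structure, replacing the Graf-modified free evolution by the combined evolution $U_D(t)M_{G,v}^\mathrm{s}(t)$, which carries the Dollard-type modifier $M_D(t)$ for the long-range part and the $v$-dependent Graf-type modifier $M_{G,v}^\mathrm{s}(t)$ for the short-range part. First I would introduce the auxiliary wave operators
$$\Omega_{D,G,v}^\pm=\slim_{t\to\pm\infty}e^{itH}U_D(t)M_{G,v}^\mathrm{s}(t),$$
verify that they exist and factor as $\Omega_{D,G,v}^\pm=W_D^\pm I_{G,v}^{\mathrm{s},\pm}$ (the scalar limit $I_{G,v}^{\mathrm{s},\pm}$ of $M_{G,v}^\mathrm{s}(t)$ commuting through everything), so that $S_D=I_{G,v}^\mathrm{s}(\Omega_{D,G,v}^+)^*\Omega_{D,G,v}^-$, exactly as in \S2. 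Using $[S_D,p_j]=[S_D-I_{G,v}^\mathrm{s},p_j-v_j]$ and $(p_j-v_j)\varPhi_v=(p_j\varPhi_0)_v$, I would then expand $i(S_D-I_{G,v}^\mathrm{s})\varPhi_v$ by Cook's method applied to $e^{itH}U_D(t)M_{G,v}^\mathrm{s}(t)$.

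Differentiating, and using $\frac{d}{dt}M_D(t)=-iV^\mathrm{l}(pt+e_1t^2/2)M_D(t)$ together with $V^\mathrm{l}(x)e^{-itH_0}=e^{-itH_0}V^\mathrm{l}(x+pt+e_1t^2/2)$, the integrand acquires an interaction split into three groups: the very-short-range $V^\mathrm{vs}$, the Graf difference $V^\mathrm{s}(x)-V^\mathrm{s}(vt+e_1t^2/2)$, and the long-range Dollard difference $V^\mathrm{l}(x+pt+e_1t^2/2)-V^\mathrm{l}(pt+e_1t^2/2)$. As in \S2 I would split $|v|(i[S_D,p_j]\varPhi_v,\varPsi_v)$ into a main term and a remainder $R(v)$ in which $e^{-itH}\Omega_{D,G,v}^-$ is replaced by $U_D(t)M_{G,v}^\mathrm{s}(t)$. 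Because $M_D(t)$ is a function of $p$ alone, it commutes with $e^{-itK_0}$, with the cut-offs $f(p)$ and with the momentum localizations underlying Proposition~\ref{prop2.2}, hence does not disturb the propagation estimates; consequently the proofs of Lemmas~\ref{lem2.3} and \ref{lem2.4} and the analogue of Lemma~\ref{lem2.5} all go through with $U_D(t)$ in place of $e^{-itH_0}$. This controls the $V^\mathrm{vs}$ and $V^\mathrm{s}$ contributions and forces their share of $R(v)$ to zero under $\gamma_1>5/4$.

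The genuinely new work is the long-range estimate. On $\mathrm{supp}\,\hat\varPhi_0$ the momentum is pinned near $v$, so I would reduce the $p$-dependent $M_D(t)$ and $V^\mathrm{l}(pt+e_1t^2/2)$ to their classical values along the trajectory $vt+e_1t^2/2$, the error being controlled by the $C^2$-regularity built into $\mathscr{V}_D^\mathrm{l}$; this turns the Dollard difference into a Graf-type difference $V^\mathrm{l}(x+vt+e_1t^2/2)-V^\mathrm{l}(vt+e_1t^2/2)$ to leading order. The interpolation estimate \eqref{2.13} on $|x+vt+e_1t^2/2|$, combined with the decay $|(\partial^\beta V^\mathrm{l})(x)|\le C_\beta\langle x\rangle^{-\gamma_D-|\beta|/2}$ and the hypothesis $\gamma_D>3/8$, then supplies the integrability in $t$ and the powers of $|v|$ needed to kill the long-range share of $R(v)$ after multiplication by $|v|$. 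When passing to the limit in the main term, the $V^\mathrm{vs}$ and $V^\mathrm{s}$ contributions reproduce the right-hand side of \eqref{3.2} exactly as in the proof of Theorem~\ref{thm2.1}, the conjugation by the $p$-dependent unitary $M_D(t)$ being absorbed into lower-order errors as $|v|\to\infty$. By contrast, the commutator with $p_j$ turns the long-range $V^\mathrm{l}(x)$ into $(\partial_jV^\mathrm{l})(x)$, and the $V^\mathrm{l}$ contribution converges to $\int_{-\infty}^\infty i((\partial_jV^\mathrm{l})(x)U_D(t)\varPhi_v,U_D(t)\varPsi_v)\,dt$; since $\gamma_D\le1/2$ this cannot be collapsed to a convergent $\tau$-integral of $(\partial_jV^\mathrm{l})(x+\hat v\tau)$, so it is kept as the $v$-dependent modified-free expression and subtracted on the left, which is precisely the shape of \eqref{3.2}.

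I expect the main obstacle to be exactly this long-range estimate: isolating the $(\partial_jV^\mathrm{l})(x)$ leading term from $V^\mathrm{l}(x+pt+e_1t^2/2)-V^\mathrm{l}(pt+e_1t^2/2)$ and showing that the remainder is integrable in $t$ with a gain in $|v|$ that survives the $|v|$ prefactor. This is where the threshold $\gamma_D>3/8$ (rather than the bare existence threshold $1/4$) is consumed, and where one must keep careful track of the non-commutativity of the $p$-dependent modifier $M_D(t)$ with the factor $e^{itx_1}$ in the Avron-Herbst formula \eqref{2.8} when reducing everything to the free evolution $e^{-itK_0}$.
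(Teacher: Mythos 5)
Your overall skeleton matches the paper's: the auxiliary wave operators $\Omega_{D,G,v}^{\mathrm{s},\pm}=\slim_{t\to\pm\infty}e^{itH}U_D(t)M_{G,v}^\mathrm{s}(t)$, the factorization $S_D=I_{G,v}^\mathrm{s}(\Omega_{D,G,v}^{\mathrm{s},+})^*\Omega_{D,G,v}^{\mathrm{s},-}$, the identity $[S_D,p_j]=[S_D-I_{G,v}^\mathrm{s},p_j-v_j]$, Cook's method, the splitting of the interaction into $V^\mathrm{vs}$, the Graf difference for $V^\mathrm{s}$, and the Dollard difference for $V^\mathrm{l}$, and the retention of the $(\partial_jV^\mathrm{l})$ term on the left of \eqref{3.2} are all exactly what the paper does. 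But there are two gaps, one of which is fatal to your treatment of the long-range part.

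The fatal one: you propose to handle the Dollard difference by ``pinning the momentum near $v$'' and reducing it to a Graf-type difference $V^\mathrm{l}(x+vt+e_1t^2/2)-V^\mathrm{l}(vt+e_1t^2/2)$ to leading order. The replacement error is, to first order, $(\nabla V^\mathrm{l})(vt+e_1t^2/2)\cdot pt$ with $|p|<\eta$ on $\mathrm{supp}\,\hat{\varPhi}_0$, i.e.\ of size $|t|\langle vt+e_1t^2/2\rangle^{-\gamma_D-1/2}$. Even using the full interpolation \eqref{2.13} this behaves like $|t|^{1-(2-\nu)(\gamma_D+1/2)}$ for large $|t|$, and integrability requires $(2-\nu)(\gamma_D+1/2)>2$, which is impossible for $\gamma_D\le1/2$ and $\nu\ge0$. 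This non-integrability is precisely why the Graf-type modified wave operators $W_G^\pm$ fail to exist for $\mathscr{V}_D^\mathrm{l}(\tilde{\gamma}_{D,0})$ (as the Introduction points out, $\gamma_D+1/2\le1$), and why the Dollard modifier must be kept as a genuine function of $p$ throughout. The paper's Lemma \ref{lem3.5} instead expands $\tilde{V}_{|v|,t}^\mathrm{l}(x)-\tilde{V}_{|v|,t}^\mathrm{l}(pt)$ by the Baker--Campbell--Hausdorff formula into a gradient term contracted with $(x-pt)$ plus a $t\,\Delta\tilde{V}_{|v|,t}^\mathrm{l}$ term, and then exploits the exact cancellation $e^{itK_0}(x-pt)e^{-itK_0}=x$: the dangerous linear growth in $t$ disappears, leaving $\|xM_{D,v}(t)\varPhi_0\|$, which grows only sublinearly by \eqref{3.5} and \eqref{3.8}. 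That cancellation, which your proposal never invokes, is what produces the rate $O(|v|^{-(4\gamma_D-1)+\epsilon})$ and hence the threshold $\gamma_D>3/8$ after multiplication by $|v|$ and squaring in $R^D(v)$.

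The second, lesser gap: your claim that Lemmas \ref{lem2.3} and \ref{lem2.4} ``go through'' with $U_D(t)$ in place of $e^{-itH_0}$ because $M_D(t)$ is a function of $p$ is not justified as stated. Those proofs pair spatial cut-offs with the weights $\langle x\rangle^2$ and $x$ applied to the state ($\|\langle x\rangle^2\varPhi_0\|$ in the $I_2$ term, $\|xe^{-itK_0}\varPhi_0\|$ in the second term of \eqref{2.10}), and $M_{D,v}(t)$ does \emph{not} commute with $x$. One needs the a priori bound of Lemma \ref{lem3.2} on $\|\langle x\rangle^2M_{D,v}(t)\varPhi_0\|$, which grows in $|t|$, followed by a fresh optimization in the $\nu$-parameters to check that the extra growth is still absorbed; this is the content of Lemmas \ref{lem3.3} and \ref{lem3.4}. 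The conclusions you assert are correct, but the step is not free.
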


We first need the following lemma:

\begin{lem}\label{lem3.2}
Let $v$ and $\varPhi_v$ be as in Theorem \ref{thm3.1},
and $V^\mathrm{l}\in\mathscr{V}_D^\mathrm{l}(1/4)$ with $\gamma_D<1/2$.
Then, for $0\le\nu_1,\,\nu_2,\,\nu_3\le1$,
there exists a positive constant $C$ such that
\begin{equation}
\begin{split}
&\|\langle x\rangle^2M_D(t)\varPhi_v\|
=\|\langle x\rangle^2M_{D,v}(t)\varPhi_0\|\\
\le{}&
C(1+|v|^{-(2\gamma_D+1)\nu_1}|t|^{4-(2\gamma_D+1)(2-\nu_1)}
+|v|^{-(\gamma_D+1)\nu_2}|t|^{3-(\gamma_D+1)(2-\nu_2)}\\
&\qquad\qquad
+|v|^{-(\gamma_D+1/2)\nu_3}|t|^{2-(\gamma_D+1/2)(2-\nu_3)})
\end{split}
\label{3.3}
\end{equation}
holds as $|v|\to\infty$, where
$M_{D,v}(t)=e^{-iv\cdot x}M_D(t)e^{iv\cdot x}=e^{-i\int_0^tV^\mathrm{l}(ps+vs+e_1s^2/2)\,ds}$.
\end{lem}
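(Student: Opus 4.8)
The plan is to exploit the fact that $M_D(t)=e^{-i\int_0^tV^\mathrm{l}(ps+e_1s^2/2)\,ds}$ is a multiplication operator in the momentum representation, so that conjugating a position operator by it produces only a shift that is bounded on the momentum support of $\hat\varPhi_0$. Since the identity $\|\langle x\rangle^2M_D(t)\varPhi_v\|=\|\langle x\rangle^2M_{D,v}(t)\varPhi_0\|$ is already built into \eqref{3.3} (it follows from $[e^{iv\cdot x},\langle x\rangle^2]=0$), I would work directly with $M_{D,v}(t)=e^{-i\int_0^tV^\mathrm{l}(ps+vs+e_1s^2/2)\,ds}$. Because $M_{D,v}(t)$ is unitary, $\|\langle x\rangle^2M_{D,v}(t)\varPhi_0\|=\|M_{D,v}(t)^*\langle x\rangle^2M_{D,v}(t)\varPhi_0\|$. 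Using $x_j=i\partial_{p_j}$ one computes
\begin{equation*}
M_{D,v}(t)^*x_jM_{D,v}(t)=x_j+a_j(t,p),\qquad
a_j(t,p)=\int_0^ts\,(\partial_jV^\mathrm{l})(ps+vs+e_1s^2/2)\,ds,
\end{equation*}
whence $M_{D,v}(t)^*\langle x\rangle^2M_{D,v}(t)=1+\sum_j(x_j+a_j(t,p))^2$.

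Expanding, and using $[x_j,a_j]=i(\partial_{p_j}a_j)$, gives $(x_j+a_j)^2=x_j^2+2a_jx_j+i(\partial_{p_j}a_j)+a_j^2$. Since $a_j(t,\cdot)$ and $(\partial_{p_j}a_j)(t,\cdot)$ are multiplication operators in $p$, while $\widehat{\varPhi_0}$, $\widehat{x_j\varPhi_0}$ and $\widehat{x_j^2\varPhi_0}$ are all supported in $\{|\xi|<\eta\}$, taking the $L^2$-norm reduces the whole estimate to the three quantities $\sup_{|p|<\eta}|a_j(t,p)|$, $\sup_{|p|<\eta}|(\partial_{p_j}a_j)(t,p)|$ and $\sup_{|p|<\eta}|a_j(t,p)|^2$ (the finite factors $\|\varPhi_0\|$, $\|x_j\varPhi_0\|$, $\|x_j^2\varPhi_0\|$ being absorbed into the leading constant $1$). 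Thus it remains to bound these suprema, where $(\partial_{p_k}a_j)(t,p)=\int_0^ts^2\,(\partial_k\partial_jV^\mathrm{l})(ps+vs+e_1s^2/2)\,ds$.

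The geometric heart is an interpolation estimate for $y(s)=ps+vs+e_1s^2/2=(v+p)s+e_1s^2/2$ that is uniform over $|p|<\eta$ as $|v|\to\infty$. Writing $w=v+p$, one has $|w\cdot e_1|/|w|\le(\delta|v|+\eta)/(|v|-\eta)\to\delta<1$, so for $|v|$ large this ratio is bounded by a fixed $\delta'<1$ uniformly in $p$, and the computation behind \eqref{2.4}--\eqref{2.6} yields $|y(s)|\ge\max\{c|v||s|,c'|s|^2\}$; since $\langle y(s)\rangle\ge|y(s)|$, this gives $\langle y(s)\rangle\ge c^\nu(c')^{1-\nu}|v|^\nu|s|^{2-\nu}$ for all $0\le\nu\le1$ exactly as in \eqref{2.13}. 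Feeding this into $|(\partial^\beta V^\mathrm{l})(y)|\le C_\beta\langle y\rangle^{-\gamma_D-|\beta|/2}$ and integrating $\int_0^{|t|}\!ds$ produces $\sup_{|p|<\eta}|a_j(t,p)|\le C|v|^{-(\gamma_D+1/2)\nu_3}|t|^{2-(\gamma_D+1/2)(2-\nu_3)}$ (the third term), the analogous bound with decay $\gamma_D+1$ and an extra factor $s$ for $\partial_{p_j}a_j$ (the second term), and the square of the first bound for $a_j^2$ (the first term). Each $s$-integral converges at $s=0$ precisely because $\gamma_D<1/2$ (the borderline case $\nu=0$ requires $(\gamma_D+1/2)\cdot2<2$, $(\gamma_D+1)\cdot2<3$, $(2\gamma_D+1)\cdot2<4$), and since all resulting $|t|$-exponents are positive the leading $1$ covers small $|t|$; summing over $j$ gives \eqref{3.3}.

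The step I expect to be the main obstacle is establishing this interpolation estimate \emph{uniformly} in $p$ on the support while $|v|\to\infty$: one must check that $|w\cdot e_1|/|w|$ stays bounded away from $1$ uniformly for $|p|<\eta$ once $|v|$ is large, and that the constants $c,c'$ can be chosen independently of $p$, so that the single bound $\langle y(s)\rangle\ge c^\nu(c')^{1-\nu}|v|^\nu|s|^{2-\nu}$ holds across the whole momentum support. Once that is secured, the remaining work is the bookkeeping of confirming integrability at $s=0$ (where $\gamma_D<1/2$ enters) and assembling the powers of $|v|$ and $|t|$ into the stated three-term form.
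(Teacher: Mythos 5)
Your proof is correct and follows essentially the same route as the paper's: conjugate $\langle x\rangle^2$ by the momentum-space multiplier $M_{D,v}(t)$, expand into the squared-gradient, Laplacian and gradient-times-$x$ terms (which produce exactly the three terms of \eqref{3.3}), and bound each via the interpolation estimate $|ps+vs+e_1s^2/2|\ge c^\nu(c')^{1-\nu}|v|^\nu|s|^{2-\nu}$, with $\gamma_D<1/2$ guaranteeing integrability at $s=0$. The only difference is technical: where you secure uniformity over $|p|<\eta$ directly by setting $w=v+p$ and checking that $|w\cdot e_1|/|w|$ stays below a fixed $\delta'<1$ for large $|v|$, the paper achieves the same thing by replacing $V^\mathrm{l}$ with the cut-off $\tilde{V}_{|v|,t}^\mathrm{l}(x)=V^\mathrm{l}(x+vt+e_1t^2/2)\,g((\lambda_1|v|\langle t\rangle)^{-1}x)$, which agrees with the original on the momentum support of $\varPhi_0$ and whose $L^\infty$ norms obey the same interpolation bounds.
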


\begin{proof}
First of all, we introduce
$$\tilde{V}_{|v|,t}^\mathrm{l}(x)=V^\mathrm{l}(x+vt+e_1t^2/2)
g((\lambda_1|v|\langle t\rangle)^{-1}x),$$
by mimicking the definition of $\tilde{V}_{|v|,t}^\mathrm{s}(x)$
in the proof of Lemma \ref{lem2.4}.
Since $\mathrm{supp}\,\hat{\varPhi}_0\subset
\bigl\{\xi\in\boldsymbol{R}^n\bigm||\xi|<\eta\bigr\}$,
\begin{equation}
M_{D,v}(t)\varPhi_0
=e^{-i\int_0^t\tilde{V}_{|v|,s}^\mathrm{l}(ps)\,ds}\varPhi_0
\label{3.4}
\end{equation}
holds for $|v|\ge\eta/(3\lambda_1)$.
Since $x=i\nabla_p$,
\begin{equation}
e^{i\int_0^t\tilde{V}_{|v|,s}^\mathrm{l}(ps)\,ds}xe^{-i\int_0^t\tilde{V}_{|v|,s}^\mathrm{l}(ps)\,ds}=
x+\int_0^ts(\nabla \tilde{V}_{|v|,s}^\mathrm{l})(ps)\,ds
\label{3.5}
\end{equation}
holds. This yields
\begin{align*}
&e^{i\int_0^t\tilde{V}_{|v|,s}^\mathrm{l}(ps)\,ds}\langle x\rangle^2e^{-i\int_0^t\tilde{V}_{|v|,s}^\mathrm{l}(ps)\,ds}\\
={}&1+\left(x+\int_0^ts(\nabla \tilde{V}_{|v|,s}^\mathrm{l})(ps)\,ds\right)^2\\
={}&\langle x\rangle^2+i\left(\int_0^ts^2(\Delta \tilde{V}_{|v|,s}^\mathrm{l})(ps)\,ds\right)
+2\left(\int_0^ts(\nabla \tilde{V}_{|v|,s}^\mathrm{l})(ps)\,ds\right)\cdot x\\
&\qquad+\left(\int_0^ts(\nabla \tilde{V}_{|v|,s}^\mathrm{l})(ps)\,ds\right)^2.
\end{align*}
Now we will estimate $\|\nabla\tilde{V}_{|v|,t}^\mathrm{l}\|_{L^\infty}$
and $\|\Delta\tilde{V}_{|v|,t}^\mathrm{l}\|_{L^\infty}$.
In the same way as in the proof of Lemma \ref{lem2.4},
for $0\le\nu_1,\,\nu_2,\,\nu_3,\,\nu_4,\,\nu_5\le1$ and $|v|\ge1$,
\begin{equation}
\begin{split}
\|\nabla\tilde{V}_{|v|,t}^\mathrm{l}\|_{L^\infty}
\le{}&
C_1(1+|v|^{\nu_1/(2-\nu_1)}|t|)^{-(\gamma_D+1/2)(2-\nu_1)}\\
&+C_2|v|^{\nu_2/(2-\nu_2)-1}(1+|v|^{\nu_2/(2-\nu_2)}|t|)^{-\gamma_D(2-\nu_2)-1}
\end{split}\label{3.6}
\end{equation}
and
\begin{equation}
\begin{split}
\|\Delta\tilde{V}_{|v|,t}^\mathrm{l}\|_{L^\infty}
\le{}&
C_3(1+|v|^{\nu_3/(2-\nu_3)}|t|)^{-(\gamma_D+1)(2-\nu_3)}\\
&+C_4|v|^{\nu_4/(2-\nu_4)-1}(1+|v|^{\nu_4/(2-\nu_4)}|t|)^{-(\gamma_D+1/2)(2-\nu_4)-1}\\
&\quad+C_5|v|^{2\nu_5/(2-\nu_5)-2}(1+|v|^{\nu_5/(2-\nu_5)}|t|)^{-\gamma_D(2-\nu_5)-2}
\end{split}\label{3.7}
\end{equation}
can be obtained. Noting $1-(\gamma_D+1/2)(2-\nu_1)\ge-2\gamma_D>-1$
and $1-\gamma_D(2-\nu_2)-1\ge-2\gamma_D>-1$,
\begin{align*}
&\left\|\int_0^ts(\nabla \tilde{V}_{|v|,s}^\mathrm{l})(ps)\,ds\right\|_{\mathscr{B}(L^2)}\\
\le{}&C_1\int_0^{|t|}s(|v|^{\nu_1/(2-\nu_1)}s)^{-(\gamma_D+1/2)(2-\nu_1)}\,ds\\
&+C_2|v|^{\nu_2/(2-\nu_2)-1}\int_0^{|t|}s(|v|^{\nu_2/(2-\nu_2)}s)^{-\gamma_D(2-\nu_2)-1}\,ds\\
={}&C_1'|v|^{-(\gamma_D+1/2)\nu_1}|t|^{2-(\gamma_D+1/2)(2-\nu_1)}
+C_2'|v|^{-\gamma_D\nu_2-1}|t|^{1-\gamma_D(2-\nu_2)}\\
={}&C_1'|v|^{1-2\gamma_D}(|v|^{-1}|t|)^{2-(\gamma_D+1/2)(2-\nu_1)}
+C_2'|v|^{-2\gamma_D}(|v|^{-1}|t|)^{1-\gamma_D(2-\nu_2)}
\end{align*}
can be obtained. Taking account of
\begin{align*}
&[1-2\gamma_D,3/2-\gamma_D]=\bigl\{2-(\gamma_D+1/2)(2-\nu_1)\bigm|0\le\nu_1\le1\bigr\}\\
&\qquad\supset[1-2\gamma_D,1-\gamma_D]=\bigl\{1-\gamma_D(2-\nu_2)\bigm|0\le\nu_2\le1\bigr\},
\end{align*}
we obtain the estimate
\begin{equation}
\left\|\int_0^ts(\nabla \tilde{V}_{|v|,s}^\mathrm{l})(ps)\,ds\right\|_{\mathscr{B}(L^2)}\le
C_1''|v|^{-(\gamma_D+1/2)\nu_1}|t|^{2-(\gamma_D+1/2)(2-\nu_1)}
\label{3.8}
\end{equation}
for $0\le\nu_1\le1$. Noting $2-(\gamma_D+1)(2-\nu_3)\ge-2\gamma_D>-1$,
$2-(\gamma_D+1/2)(2-\nu_4)-1\ge-2\gamma_D>-1$ and
$2-\gamma_D(2-\nu_5)-2\ge-2\gamma_D>-1$,
\begin{align*}
&\left\|\int_0^ts^2(\Delta \tilde{V}_{|v|,s}^\mathrm{l})(ps)\,ds\right\|_{\mathscr{B}(L^2)}\\
\le{}&C_3'|v|^{-(\gamma_D+1)\nu_3}|t|^{3-(\gamma_D+1)(2-\nu_3)}
+C_4'|v|^{-(\gamma_D+1/2)\nu_4-1}|t|^{2-(\gamma_D+1/2)(2-\nu_4)}\\
&+C_5'|v|^{-\gamma_D\nu_5-2}|t|^{1-\gamma_D(2-\nu_5)}\\
={}&C_3'|v|^{1-2\gamma_D}(|v|^{-1}|t|)^{3-(\gamma_D+1)(2-\nu_3)}
+C_4'|v|^{-2\gamma_D}(|v|^{-1}|t|)^{2-(\gamma_D+1/2)(2-\nu_4)}\\
&+C_5'|v|^{-1-2\gamma_D}(|v|^{-1}|t|)^{1-\gamma_D(2-\nu_5)}
\end{align*}
can be obtained. Taking account of
\begin{align*}
&[1-2\gamma_D,2-\gamma_D]=\bigl\{3-(\gamma_D+1)(2-\nu_3)\bigm|0\le\nu_3\le1\bigr\}\\
&\qquad\supset[1-2\gamma_D,3/2-\gamma_D]=\bigl\{2-(\gamma_D+1/2)(2-\nu_4)\bigm|0\le\nu_4\le1\bigr\}\\
&\qquad\qquad\supset[1-2\gamma_D,1-\gamma_D]=\bigl\{1-\gamma_D(2-\nu_5)\bigm|0\le\nu_5\le1\bigr\},
\end{align*}
we obtain the estimate
\begin{equation}
\left\|\int_0^ts^2(\Delta \tilde{V}_{|v|,s}^\mathrm{l})(ps)\,ds\right\|_{\mathscr{B}(L^2)}\le
C_3''|v|^{-(\gamma_D+1)\nu_3}|t|^{3-(\gamma_D+1)(2-\nu_3)}
\label{3.9}
\end{equation}
for $0\le\nu_3\le1$.

Based on the above observations, the lemma can be proved.
\end{proof}

Then the following lemma can be shown as in \cite{AM}.

\begin{lem}\label{lem3.3}
Let $v$ and $\varPhi_v$ be as in Theorem \ref{thm3.1},
and $V^\mathrm{l}\in\mathscr{V}_D^\mathrm{l}(1/4)$. Then
\begin{equation}
\int_{-\infty}^\infty
\|V^\mathrm{vs}(x)U_D(t)\varPhi_v\|\,dt
=O(|v|^{-1})
\label{3.10}
\end{equation}
holds as $|v|\to\infty$ for $V^\mathrm{vs}\in\mathscr{V}^\mathrm{vs}$.
\end{lem}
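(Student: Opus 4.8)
The plan is to mimic the reduction used for Lemma \ref{lem2.3}, the only new feature being the Dollard-type modifier $M_D(t)$, whose spreading in configuration space is controlled by Lemma \ref{lem3.2}. First I would strip off the free Stark evolution by the Avron--Herbst formula \eqref{2.8} together with \eqref{2.9}, exactly as in the proof of Lemma \ref{lem2.4}. Since $M_D(t)$ is a function of $p$ alone, it commutes with $e^{-itK_0}$, $e^{-ip_1t^2/2}$ and $e^{-ip\cdot vt}$, while $e^{-iv\cdot x}M_D(t)e^{iv\cdot x}=M_{D,v}(t)$; carrying $M_D(t)$ through the same unitary conjugations that produced $\|V^\mathrm{s}(x+vt+e_1t^2/2)e^{-itK_0}\varPhi_0\|$ in Lemma \ref{lem2.4} yields
$$\|V^\mathrm{vs}(x)U_D(t)\varPhi_v\|=\|V^\mathrm{vs}(x+vt+e_1t^2/2)e^{-itK_0}M_{D,v}(t)\varPhi_0\|.$$
Thus everything reduces to estimating the right-hand side and integrating in $t$, the difference from Lemma \ref{lem2.3} being the extra factor $M_{D,v}(t)$ acting on $\varPhi_0$.

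The next step is to exploit that the shift $y:=vt+e_1t^2/2$ is large: by \eqref{2.6}, $|y|\ge\max\{\sqrt{1-\delta^2}\,|v||t|,(\sqrt{1-\delta^2}/2)|t|^2\}$, so the argument of $V^\mathrm{vs}$ is pushed to infinity at a rate $\gtrsim|v||t|$, while $M_{D,v}(t)\varPhi_0$ keeps its Fourier support in $\{|\xi|<\eta\}$ and hence $e^{-itK_0}M_{D,v}(t)\varPhi_0$ stays essentially concentrated in $\{|x|\lesssim\eta|t|\}$ up to a spread governed by Lemma \ref{lem3.2}. Writing $V^\mathrm{vs}=V_1^\mathrm{vs}+V_2^\mathrm{vs}$ and inserting $1=F(|x|\le|y|/2)+F(|x|>|y|/2)$, I would treat the regular part $V_2^\mathrm{vs}$ on $\{|x|\le|y|/2\}$ by the pointwise bound $|x+y|\ge|y|/2$, giving $\|V_2^\mathrm{vs}(\cdot)F(|\cdot|\ge|y|/2)\|_{\mathscr{B}(L^2)}\|\varPhi_0\|$; since $R\mapsto\|V_2^\mathrm{vs}F(|\cdot|\ge R)\|_{\mathscr{B}(L^2)}$ is nonincreasing and $|y|\ge\sqrt{1-\delta^2}\,|v||t|$, the change of variables $R=\sqrt{1-\delta^2}\,|v||t|/2$ turns its $t$-integral into $O(|v|^{-1})$ by the very short-range condition. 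For the singular part $V_1^\mathrm{vs}$, which is compactly supported, $V_1^\mathrm{vs}(x+y)F(|x|\le|y|/2)\equiv0$ once $|y|$ exceeds twice the radius of the support, so only the far region survives; there I would use that $K_0$ is translation invariant, whence $\|V_1^\mathrm{vs}(\cdot+y)(1+K_0)^{-1}\|_{\mathscr{B}(L^2)}=\|V_1^\mathrm{vs}(1+K_0)^{-1}\|_{\mathscr{B}(L^2)}<\infty$, together with the equivalent form \eqref{1.8}.

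On the far region $\{|x|>|y|/2\}$ the decisive estimate is a Chebyshev bound $\|F(|x|>|y|/2)e^{-itK_0}M_{D,v}(t)\phi\|\le C|y|^{-2}\|\,|x|^2e^{-itK_0}M_{D,v}(t)\phi\|$ for $\phi\in\{\varPhi_0,(1+K_0)\varPhi_0\}$; using $e^{itK_0}xe^{-itK_0}=x+pt$ and $pM_{D,v}(t)=M_{D,v}(t)p$, the right-hand side is bounded by $C|y|^{-2}(\|\langle x\rangle^2M_{D,v}(t)\phi\|+|t|\,\|\langle x\rangle^2M_{D,v}(t)p\phi\|+t^2\|p^2\phi\|)$, and the first two norms are controlled by Lemma \ref{lem3.2} applied to $\phi$ and $p\phi$ respectively (both have Fourier support in $\{|\xi|<\eta\}$). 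Inserting the resulting polynomial-in-$|t|$ bounds and estimating $|y|^{-2}$ by the interpolation $|y|^{-2}\le C|v|^{-2\nu}|t|^{-4+2\nu}$ coming from \eqref{2.6} (cf. \eqref{2.13}), I would choose $\nu$ in each term by the same optimization as in Lemma \ref{lem2.4} to arrive at $O(|v|^{-1})$; here the hypothesis $\gamma_D>1/4$ is exactly what keeps the spreading exponents in Lemma \ref{lem3.2} small enough for the $t$-integrals to converge.

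The main obstacle, and the only point genuinely beyond Lemma \ref{lem2.3}, is this control of the configuration-space spread of $M_{D,v}(t)\varPhi_0$: unlike the unmodified case one cannot simply place $M_{D,v}(t)\varPhi_0$ in a fixed compact set and invoke Proposition \ref{prop2.2} with the right-hand cutoff, and it is precisely the quantitative second-moment bound of Lemma \ref{lem3.2}, balanced against the growth $|y|\gtrsim|v||t|$, that rescues the $O(|v|^{-1})$ rate.
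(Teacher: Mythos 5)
Your reduction via the Avron--Herbst formula and your identification of Lemma \ref{lem3.2} as the decisive new input are exactly what the paper uses, but your localization scheme is genuinely different. The paper inserts $(1+K_0)^{-1}\cdots(1+K_0)$ and splits into three terms with cutoffs at $\lambda_1|v||t|$ and $3\lambda_1|v||t|$: the transition term $I_1$ is killed by the Enss propagation estimate (Proposition \ref{prop2.2}), the near term $I_3$ by \eqref{2.12} together with \eqref{1.8}, and the far term $I_2$ carries its cutoff on the \emph{un-evolved} state $M_{D,v}(t)(1+K_0)\varPhi_0$, so that Chebyshev plus Lemma \ref{lem3.2} applies directly with the prefactor $(1+\lambda_1|v||t|)^{-2}$, which is regular at $t=0$. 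You instead split once at $|y|/2$ with $y=vt+e_1t^2/2$ and dispense with Proposition \ref{prop2.2} entirely; the price is that your far-region cutoff sits after $e^{-itK_0}$, so the Heisenberg expansion $(x+pt)^2$ produces the additional contributions $t^2\|p^2\phi\|$ and a cross term requiring Lemma \ref{lem3.2} for $p\phi$ (legitimate, since only the Fourier support enters its proof), and your weight $|y|^{-2}$ blows up as $t\to0$. Both issues are harmless but must be addressed explicitly: $\int |y|^{-2}t^2\,dt\le\int\min\{C|v|^{-2},C|t|^{-2}\}\,dt=O(|v|^{-1})$ handles the worst new term, and the region $|t|\lesssim|v|^{-1}$ must be discarded beforehand by the trivial bound $\|V^\mathrm{vs}(1+K_0)^{-1}\|_{\mathscr{B}(L^2)}\|(1+K_0)\varPhi_0\|$, since every interpolated power $|v|^{-2\nu}|t|^{2\nu-4}$ of $|y|^{-2}$ is non-integrable at the origin. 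Likewise, for the singular part $V_1^\mathrm{vs}$ the identity $V_1^\mathrm{vs}(x+y)F(|x|\le|y|/2)\equiv0$ is only usable after the sharp cutoffs have been commuted past the resolvent $(1+K_0)^{-1}$, which is precisely where the equivalence of \eqref{1.8} with the unconjugated integrability condition enters; your sketch gestures at this but should spell it out. With those two points written down, your argument closes and yields the same $O(|v|^{-1})$ rate, at the cost of a slightly heavier moment computation in exchange for avoiding the propagation estimate.
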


\begin{proof}
For the sake of brevity, we put $I=\|V^\mathrm{vs}(x)U_D(t)\varPhi_v\|$.
For simplicity, we suppose $\gamma_D<1/2$.
Since by virtue of the Avron-Herbst formula \eqref{2.8},
$I$ can be written as
\begin{align*}
I={}&\|V^\mathrm{vs}(x+vt+e_1t^2/2)e^{-itK_0}M_{D,v}(t)\varPhi_0\|\\
={}&\|V^\mathrm{vs}(x+vt+e_1t^2/2)(1+K_0)^{-1}e^{-itK_0}f(p)M_{D,v}(t)(1+K_0)\varPhi_0\|,
\end{align*}
we estimate this in the same way as in the proof of
Lemma \ref{lem2.3}, which is omitted in this paper (cf. the proof of
Lemma \ref{lem2.4}).
$$\|V^\mathrm{vs}(x)U_D(t)\varPhi_v\|\le
I_1+I_2+I_3,$$
where
\begin{align*}
I_1={}&
\|V^\mathrm{vs}(x+vt+e_1t^2/2)(1+K_0)^{-1}\\
&\times F(|x|\ge3\lambda_1|v||t|)e^{-itK_0}f(p)F(|x|\le\lambda_1|v||t|)M_{D,v}(t)(1+K_0)\varPhi_0\|,\\
I_2={}&
\|V^\mathrm{vs}(x+vt+e_1t^2/2)(1+K_0)^{-1}\\
&\times F(|x|\ge3\lambda_1|v||t|)e^{-itK_0}f(p)F(|x|>\lambda_1|v||t|)M_{D,v}(t)(1+K_0)\varPhi_0\|\\
I_3={}&
\|V^\mathrm{vs}(x+vt+e_1t^2/2)(1+K_0)^{-1}\\
&\times F(|x|<3\lambda_1|v||t|)e^{-itK_0}f(p)M_{D,v}(t)(1+K_0)\varPhi_0\|.
\end{align*}
As for $I_1$,
by virtue of Proposition \ref{prop2.2}
and $\|V^\mathrm{vs}(x+vt+e_1t^2/2)(1+K_0)^{-1}\|_{\mathscr{B}(L^2)}=
\|V^\mathrm{vs}(1+K_0)^{-1}\|_{\mathscr{B}(L^2)}$,
$$I_1\le C(1+\lambda_1|v||t|)^{-2}$$
holds for $|v|>\eta/\lambda_1$,
which yields
$$\int_{-\infty}^\infty I_1\,dt=O(|v|^{-1}).$$
As for $I_3$, by that \eqref{2.12} holds
for $|t|\ge1$ and
$|x|<3\lambda_1|v||t|<4\lambda_1|v|\langle t\rangle$, and \eqref{1.8},
$$\int_{-\infty}^\infty I_3\,dt=O(|v|^{-1})$$
can be obtained.
As for $I_2$, by virtue of Lemma \ref{lem3.2},
\begin{align*}
I_2\le{}&
C(1+\lambda_1|v||t|)^{-2}(1+|v|^{-(2\gamma_D+1)\nu_1}|t|^{4-(2\gamma_D+1)(2-\nu_1)}\\
&\qquad+|v|^{-(\gamma_D+1)\nu_2}|t|^{3-(\gamma_D+1)(2-\nu_2)}
+|v|^{-(\gamma_D+1/2)\nu_3}|t|^{2-(\gamma_D+1/2)(2-\nu_3)})
\end{align*}
holds for $0\le\nu_1,\,\nu_2,\,\nu_3\le1$
(cf. $I_2$ in the proof of Lemma \ref{lem2.3}).
Therefore,
\begin{align*}
\int_{-\infty}^\infty I_2\,dt={}&O(|v|^{-1})+
O(|v|^{-(2\gamma_D+1)\nu_1-\{5-(2\gamma_D+1)(2-\nu_1)\}})\\
&+O(|v|^{-(\gamma_D+1)\nu_2-\{4-(\gamma_D+1)(2-\nu_2)\}})\\
&\quad+
O(|v|^{-(\gamma_D+1/2)\nu_3-\{3-(\gamma_D+1/2)(2-\nu_3)\}})\\
={}&O(|v|^{-1})+
O(|v|^{-3+4\gamma_D-2(2\gamma_D+1)\nu_1})+O(|v|^{-2+2\gamma_D-2(\gamma_D+1)\nu_2})\\
&\quad+
O(|v|^{-2+2\gamma_D-2(\gamma_D+1/2)\nu_3})
\end{align*}
can be obtained by an appropriate change of variables
under the conditions
$-2+4-(2\gamma_D+1)(2-\nu_1)<-1$,
$-2+3-(\gamma_D+1)(2-\nu_2)<-1$ and
$-2+2-(\gamma_D+1/2)(2-\nu_3)<-1$, i.e.
$\nu_1<2-3/(2\gamma_D+1)$,
$\nu_2<2-2/(\gamma_D+1)$
and $\nu_3<2-1/(\gamma_D+1/2)$.
Since $0<2-3/(2\gamma_D+1)<1/2$,
$2/5<2-2/(\gamma_D+1)<2/3$,
$2/3<2-1/(\gamma_D+1/2)<1$, and
\begin{align*}
&\inf_{0\le\nu_1<2-3/(2\gamma_D+1)}(-3+4\gamma_D-2(2\gamma_D+1)\nu_1)=-1-4\gamma_D,\\
&\inf_{0\le\nu_2<2-2/(\gamma_D+1)}(-2+2\gamma_D-2(\gamma_D+1)\nu_2)=-2-2\gamma_D,\\
&\inf_{0\le\nu_3<2-1/(\gamma_D+1/2)}(-2+2\gamma_D-2(\gamma_D+1/2)\nu_3)=-2-2\gamma_D,
\end{align*}
we obtain
$$\int_{-\infty}^\infty I_2\,dt=O(|v|^{-1}).$$

Based on the above observations, the
lemma can be proved.
\end{proof}

The following lemma can be also proved in the same way as in the
proof of Lemmas \ref{lem2.4} and \ref{lem3.3}.

\begin{lem}\label{lem3.4}
Let $v$ and $\varPhi_v$ be as in Theorem \ref{thm3.1},
$\epsilon>0$, and $V^\mathrm{l}\in\mathscr{V}_D^\mathrm{l}(1/4)$.
Then
\begin{equation}
\int_{-\infty}^\infty
\|\{V^\mathrm{s}(x)-V^\mathrm{s}(vt+e_1t^2/2)\}U_D(t)\varPhi_v\|\,dt
=O(|v|^{\max\{-1,-2(\gamma_1-1)+\epsilon\}})
\label{3.11}
\end{equation}
holds as $|v|\to\infty$ for $V^\mathrm{s}\in\mathscr{V}^\mathrm{s}(1/2,1)$.
\end{lem}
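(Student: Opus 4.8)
The plan is to mimic the proofs of Lemmas \ref{lem2.4} and \ref{lem3.3}, the only new ingredient being the Dollard-type modifier $M_D(t)$, whose effect on the spreading of the state is quantified by Lemma \ref{lem3.2}. As in Lemma \ref{lem3.3} I would suppose $\gamma_D<1/2$ for simplicity and put $I=\|\{V^\mathrm{s}(x)-V^\mathrm{s}(vt+e_1t^2/2)\}U_D(t)\varPhi_v\|$. Using the Avron-Herbst formula \eqref{2.8}, the relation \eqref{2.9} and $M_{D,v}(t)=e^{-iv\cdot x}M_D(t)e^{iv\cdot x}$, one rewrites $I$ in the form
\[
I=\|\{V^\mathrm{s}(x+vt+e_1t^2/2)-V^\mathrm{s}(vt+e_1t^2/2)\}e^{-itK_0}M_{D,v}(t)\varPhi_0\|,
\]
exactly as in Lemma \ref{lem2.4} but with $M_{D,v}(t)\varPhi_0$ in place of $\varPhi_0$. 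Introducing the same cut-off $\tilde{V}_{|v|,t}^\mathrm{s}(x)$ and the same $\bar{V}_{|v|,t}^\mathrm{s}(x)=V^\mathrm{s}(x+vt+e_1t^2/2)-\tilde{V}_{|v|,t}^\mathrm{s}(x)$, and using $\varPhi_0=f(p)\varPhi_0$ together with the fact that $M_{D,v}(t)$, being a function of $p$, commutes with $f(p)$, I would split $I$ as in \eqref{2.10} into a first term $\|\bar{V}_{|v|,t}^\mathrm{s}(x)e^{-itK_0}f(p)M_{D,v}(t)\varPhi_0\|$ and a second term $\|\{\tilde{V}_{|v|,t}^\mathrm{s}(x)-\tilde{V}_{|v|,t}^\mathrm{s}(0)\}e^{-itK_0}M_{D,v}(t)\varPhi_0\|$.

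For the first term I would repeat the decomposition into $I_1,I_2,I_3$. Since $M_{D,v}(t)$ is unitary, $I_1$ is handled exactly as in Lemma \ref{lem2.4} by Proposition \ref{prop2.2}, giving $I_1\le C(1+\lambda_1|v||t|)^{-2}$, while $I_3=0$ by the support property of $1-g$. The only place where the modifier intervenes is $I_2$, which I would treat precisely as the term $I_2$ in Lemma \ref{lem3.3}: bounding $F(|x|>\lambda_1|v||t|)$ by $\langle x\rangle^{-2}\langle x\rangle^2$ and invoking Lemma \ref{lem3.2} to control $\|\langle x\rangle^2 M_{D,v}(t)\varPhi_0\|$, one gets $\int_{-\infty}^\infty I_2\,dt=O(|v|^{-1})$ by the very same optimization over $\nu_1,\nu_2,\nu_3$ carried out there. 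Hence the first term contributes $O(|v|^{-1})$.

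The heart of the matter is the second term. Bounding it by $\|\nabla\tilde{V}_{|v|,t}^\mathrm{s}\|_{L^\infty}\|xe^{-itK_0}M_{D,v}(t)\varPhi_0\|$ and using $e^{itK_0}xe^{-itK_0}=x+pt$ together with $[M_{D,v}(t),p]=0$, I would estimate
\[
\|xe^{-itK_0}M_{D,v}(t)\varPhi_0\|\le\|xM_{D,v}(t)\varPhi_0\|+|t|\,\|p\varPhi_0\|,
\]
and then conjugate $x$ through $M_{D,v}(t)$ by means of \eqref{3.5}, so that the bound \eqref{3.8} from the proof of Lemma \ref{lem3.2} yields $\|xM_{D,v}(t)\varPhi_0\|\le\|x\varPhi_0\|+C|v|^{-(\gamma_D+1/2)\nu}|t|^{2-(\gamma_D+1/2)(2-\nu)}$ for $0\le\nu\le1$. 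The contributions of $\|x\varPhi_0\|$ and $|t|\|p\varPhi_0\|$ reproduce verbatim the integrals \eqref{2.15} and \eqref{2.16} of Lemma \ref{lem2.4}, yielding $O(|v|^{\max\{-1,-2(\gamma_1-1)+\epsilon\}})$. The genuinely new object is the cross term
\[
\int_{-\infty}^\infty\|\nabla\tilde{V}_{|v|,t}^\mathrm{s}\|_{L^\infty}\,|v|^{-(\gamma_D+1/2)\nu}|t|^{2-(\gamma_D+1/2)(2-\nu)}\,dt,
\]
which I would estimate by inserting the leading bound of \eqref{2.14}, namely $\|\nabla\tilde{V}_{|v|,t}^\mathrm{s}\|_{L^\infty}\le C(1+|v|^{\mu/(2-\mu)}|t|)^{-\gamma_1(2-\mu)}$, and then optimizing jointly over the interpolation parameters $\mu,\nu\in[0,1]$ by the change of variables $\tau=|v|^{\mu/(2-\mu)}t$.

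The main obstacle is precisely this joint optimization, since one must keep the convergence constraint $\gamma_1(2-\mu)>3-(\gamma_D+1/2)(2-\nu)$ and the target exponent simultaneously satisfied for all admissible $\gamma_0,\gamma_1,\gamma_D$. In the generic case one takes $\nu=1$ and the largest admissible $\mu$, i.e. $\mu\to 2-(5/2-\gamma_D)/\gamma_1$; a power count after the change of variables then gives a $|v|$-exponent of $2-2\gamma_D-2\gamma_1+\epsilon$. Since $\gamma_D>0$, this is strictly smaller than $2-2\gamma_1=-2(\gamma_1-1)$, so it lies safely within the claimed bound $\max\{-1,-2(\gamma_1-1)+\epsilon\}$, with margin $2\gamma_D$; the residual edge cases (where this choice of $\mu$ leaves $[0,1]$) are absorbed by lowering $\nu$, which only improves both the convergence condition and the exponent. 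The subleading part of \eqref{2.14} (carrying the extra factor $|v|^{\nu/(2-\nu)-1}$) and the subleading terms of the Lemma \ref{lem3.2} bound are strictly better behaved and are handled the same way. Combining the first and second terms then gives \eqref{3.11}. I expect everything apart from this two-parameter power-counting to be a direct transcription of Lemmas \ref{lem2.4} and \ref{lem3.3}.
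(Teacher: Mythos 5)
Your proposal is correct and follows essentially the same route as the paper: the same splitting via $\tilde{V}_{|v|,t}^\mathrm{s}$, the same use of Lemma \ref{lem3.2} (through \eqref{3.4}, \eqref{3.5}, \eqref{3.8}) to control the effect of $M_{D,v}(t)$, and the same two-parameter interpolation for the cross term, whose exponent $2-2\gamma_1-2\gamma_D=-2\{(\gamma_1-1)+\gamma_D\}$ matches the paper's \eqref{3.16}. The only loose phrase is that lowering $\nu$ ``only improves \dots the exponent'' (it worsens the prefactor $|v|^{-(\gamma_D+1/2)\nu}$ while relaxing the constraint), but since the target bound has slack $2\gamma_D$ this does not affect the conclusion.
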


\begin{proof}
Since the proof is quite similar to the proof of
Lemma \ref{lem2.4}, we sketch it:
For the sake of brevity, we put $I=\|\{V^\mathrm{s}(x)-V^\mathrm{s}(vt+e_1t^2/2)\}U_D(t)\varPhi_v\|$.
For simplicity, we suppose $\gamma_D<1/2$.
As in the proof of Lemma \ref{lem2.4}, we estimate $I$ as
\begin{equation}
\begin{split}
I={}&\|\{V^\mathrm{s}(x+vt+e_1t^2/2)-V^\mathrm{s}(vt+e_1t^2/2)\}e^{-itK_0}
M_{D,v}(t)\varPhi_0\|\\
\le{}&\|\bar{V}_{|v|,t}^\mathrm{s}(x)e^{-itK_0}
f(p)M_{D,v}(t)\varPhi_0\|\\
&+
\|\{\tilde{V}_{|v|,t}^\mathrm{s}(x)-\tilde{V}_{|v|,t}^\mathrm{s}(0)\}e^{-itK_0}
M_{D,v}(t)\varPhi_0\|.
\end{split}\label{3.12}
\end{equation}
Here we used $M_{D,v}(t)\varPhi_0=f(p)M_{D,v}(t)\varPhi_0$.
As for the first term in the inequality \eqref{3.12},
we estimate it as
$$\|\bar{V}_{|v|,t}^\mathrm{s}(x)e^{-itK_0}
f(p)M_{D,v}(t)\varPhi_0\|\le
I_1+I_2+I_3,$$
where
\begin{align*}
I_1={}&
\|\bar{V}_{|v|,t}^\mathrm{s}(x)F(|x|\ge3\lambda_1|v||t|)e^{-itK_0}f(p)F(|x|\le\lambda_1|v||t|)
M_{D,v}(t)\varPhi_0\|,\\
I_2={}&
\|\bar{V}_{|v|,t}^\mathrm{s}(x)F(|x|\ge3\lambda_1|v||t|)e^{-itK_0}f(p)F(|x|>\lambda_1|v||t|)
M_{D,v}(t)\varPhi_0\|,\\
I_3={}&
\|\bar{V}_{|v|,t}^\mathrm{s}(x)F(|x|<3\lambda_1|v||t|)e^{-itK_0}f(p)M_{D,v}(t)\varPhi_0\|.
\end{align*}
As for $I_1$,
by virtue of Proposition \ref{prop2.2},
$$I_1\le C(1+\lambda_1|v||t|)^{-2}$$
holds for $|v|>\eta/\lambda_1$.
As for $I_2$, by virtue of Lemma \ref{lem3.2},
\begin{align*}
I_2\le{}&
C(1+\lambda_1|v||t|)^{-2}(1+|v|^{-(2\gamma_D+1)\nu_1}|t|^{4-(2\gamma_D+1)(2-\nu_1)}\\
&\qquad+|v|^{-(\gamma_D+1)\nu_2}|t|^{3-(\gamma_D+1)(2-\nu_2)}
+|v|^{-(\gamma_D+1/2)\nu_3}|t|^{2-(\gamma_D+1/2)(2-\nu_3)})
\end{align*}
holds. As for $I_3$, by the definition of
$\tilde{V}_{|v|,t}^\mathrm{s}(x)$, $I_3=0$ holds.
Combining these estimates with the results in the proof of Lemma \ref{lem3.3},
we obtain
\begin{equation}
\int_{-\infty}^\infty\|\bar{V}_{|v|,t}^\mathrm{s}(x)e^{-itK_0}
f(p)M_{D,v}(t)\varPhi_0\|\,dt=O(|v|^{-1}).
\label{3.13}
\end{equation}
As for the second term of the inequality \eqref{3.12},
in the same way as in the proof
of Lemma \ref{lem2.4}, we see that
\begin{align*}
&\|\{\tilde{V}_{|v|,t}^\mathrm{s}(x)-\tilde{V}_{|v|,t}^\mathrm{s}(0)\}e^{-itK_0}
M_{D,v}(t)\varPhi_0\|\\
\le{}&\|\nabla\tilde{V}_{|v|,t}^\mathrm{s}\|_{L^\infty}\|(x+pt)M_{D,v}(t)\varPhi_0\|\\
\le{}&\|\nabla\tilde{V}_{|v|,t}^\mathrm{s}\|_{L^\infty}\left(\|x\varPhi_0\|+\left\|\left(\int_0^ts(\nabla \tilde{V}_{|v|,s}^\mathrm{l})(ps)\,ds\right)\varPhi_0\right\|+|t|\|p\varPhi_0\|\right)
\end{align*}
holds.
Here we used \eqref{3.4} and \eqref{3.5}.
As for the estimates of
$$\int_{-\infty}^\infty\|\nabla\tilde{V}_{|v|,t}^\mathrm{s}\|_{L^\infty}
\|x\varPhi_0\|\,dt,\quad
\int_{-\infty}^\infty\|\nabla\tilde{V}_{|v|,t}^\mathrm{s}\|_{L^\infty}
|t|\|p\varPhi_0\|\,dt,$$
we already obtained \eqref{2.15} and \eqref{2.16} in the proof of
Lemma \ref{lem2.4}. Hence, we have only to estimate
$$\int_{-\infty}^\infty\|\nabla\tilde{V}_{|v|,t}^\mathrm{s}\|_{L^\infty}
\left\|\left(\int_0^ts(\nabla \tilde{V}_{|v|,s}^\mathrm{l})(ps)\,ds\right)\varPhi_0\right\|\,dt.$$
To this end, we consider
\begin{align*}
\tilde{I}_1
={}&\int_{-\infty}^\infty
(1+|v|^{\nu_1/(2-\nu_1)}|t|)^{-\gamma_1(2-\nu_1)}|v|^{-(\gamma_D+1/2)\nu_3}|t|^{2-(\gamma_D+1/2)(2-\nu_3)}\,dt\\
={}&O(|v|^{-(\gamma_D+1/2)\nu_3-\{3-(\gamma_D+1/2)(2-\nu_3)\}\nu_1/(2-\nu_1)}),\\
\tilde{I}_2
={}&\int_{-\infty}^\infty|v|^{\nu_2/(2-\nu_2)-1}(1+|v|^{\nu_2/(2-\nu_2)}|t|)^{-\gamma_0(2-\nu_2)-1}\\
&\qquad\qquad\qquad\times|v|^{-(\gamma_D+1/2)\nu_4}|t|^{2-(\gamma_D+1/2)(2-\nu_4)}\,dt\\
={}&O(|v|^{\nu_2/(2-\nu_2)-1-(\gamma_D+1/2)\nu_4-\{3-(\gamma_D+1/2)(2-\nu_4)\}\nu_2/(2-\nu_2)})\\
={}&O(|v|^{-1-(\gamma_D+1/2)\nu_4-\{2-(\gamma_D+1/2)(2-\nu_4)\}\nu_2/(2-\nu_2)}),
\end{align*}
which can be obtained by an appropriate change of variables
under the conditions 
$-\gamma_1(2-\nu_1)+2-(\gamma_D+1/2)(2-\nu_3)<-1$ and
$-\gamma_0(2-\nu_2)-1+2-(\gamma_D+1/2)(2-\nu_4)<-1$,
i.e. $\gamma_1\nu_1+(\gamma_D+1/2)\nu_3<2\{(\gamma_1-1)+\gamma_D\}$
and $\gamma_0\nu_2+(\gamma_D+1/2)\nu_4<2(\gamma_0+\gamma_D)-1$.
Noting
$$\gamma_1+(\gamma_D+1/2)-2\{(\gamma_1-1)+\gamma_D\}
=5/2-\gamma_1-\gamma_D
>3/2-\gamma_0-1/2\ge0$$
by $\gamma_0\le1$, $\gamma_1\le1+\gamma_0$ and $\gamma_D<1/2$, we see
$$\bigl\{\gamma_1\nu_1+(\gamma_D+1/2)\nu_3\bigm|0\le\nu_1,\,\nu_3\le1\bigr\}
\supset[0,2\{(\gamma_1-1)+\gamma_D\}].$$
We also note that for $\nu_1$ and $\nu_3$ such that
$\gamma_1\nu_1+(\gamma_D+1/2)\nu_3=2\{(\gamma_1-1)+\gamma_D\}$,
\begin{align*}
&-(\gamma_D+1/2)\nu_3-\{3-(\gamma_D+1/2)(2-\nu_3)\}\nu_1/(2-\nu_1)\\
={}&-(\gamma_D+1/2)\nu_3-\{2-2\gamma_D+(\gamma_D+1/2)\nu_3\}\\
&\qquad\qquad\qquad\qquad\qquad\times\frac{2\{(\gamma_1-1)+\gamma_D\}-(\gamma_D+1/2)\nu_3}{2\gamma_1-2\{(\gamma_1-1)+\gamma_D\}+(\gamma_D+1/2)\nu_3}\\
={}&-2\{(\gamma_1-1)+\gamma_D\}
\end{align*}
holds. This yields
\begin{equation}
\begin{split}
&\inf_{\nu_1,\,\nu_3}
(-(\gamma_D+1/2)\nu_3-\{3-(\gamma_D+1/2)(2-\nu_3)\}\nu_1/(2-\nu_1))\\
={}&-2\{(\gamma_1-1)+\gamma_D\}.
\end{split}\label{3.14}
\end{equation}
Noting
$$\gamma_0+(\gamma_D+1/2)-\{2(\gamma_0+\gamma_D)-1\}
=3/2-\gamma_0-\gamma_D>0$$
by $\gamma_0\le1$ and $\gamma_D<1/2$, we see
$$\bigl\{\gamma_0\nu_2+(\gamma_D+1/2)\nu_4\bigm|0\le\nu_2,\,\nu_4\le1\bigr\}
\supset[0,2(\gamma_0+\gamma_D)-1].$$
We also note that for $\nu_2$ and $\nu_4$ such that
$\gamma_0\nu_2+(\gamma_D+1/2)\nu_4=2(\gamma_0+\gamma_D)-1$,
\begin{align*}
&-1-(\gamma_D+1/2)\nu_4-\{2-(\gamma_D+1/2)(2-\nu_4)\}\nu_2/(2-\nu_2)\\
={}&-1-(\gamma_D+1/2)\nu_4-\{1-2\gamma_D+(\gamma_D+1/2)\nu_4\}\\
&\qquad\qquad\qquad\qquad\qquad\qquad\times\frac{\{2(\gamma_0+\gamma_D)-1\}-(\gamma_D+1/2)\nu_4}{2\gamma_0-\{2(\gamma_0+\gamma_D)-1\}+(\gamma_D+1/2)\nu_4}\\
={}&-1-\{2(\gamma_0+\gamma_D)-1\}=-2(\gamma_0+\gamma_D)
\end{align*}
holds. This yields
\begin{equation}
\begin{split}
&\inf_{\nu_2,\,\nu_4}
(-1-(\gamma_D+1/2)\nu_4-\{2-(\gamma_D+1/2)(2-\nu_4)\}\nu_2/(2-\nu_2))\\
={}&-2(\gamma_0+\gamma_D).
\end{split}\label{3.15}
\end{equation}
\eqref{3.14}, \eqref{3.15} and $-2(\gamma_0+\gamma_D)\le-2\{(\gamma_1-1)+\gamma_D\}$ imply
\begin{equation}
\int_{-\infty}^\infty\|\nabla\tilde{V}_{|v|,t}^\mathrm{s}\|_{L^\infty}
\left\|\left(\int_0^ts(\nabla \tilde{V}_{|v|,s}^\mathrm{l})(ps)\,ds\right)\varPhi_0\right\|\,dt=O(|v|^{-2\{(\gamma_1-1)+\gamma_D\}+\epsilon})
\label{3.16}
\end{equation}
with $\epsilon>0$.

By \eqref{3.13}, \eqref{2.15}, \eqref{2.16} and \eqref{3.16}, we obtain
\eqref{3.11}.
\end{proof}

The following lemma is the key in this section.

\begin{lem}\label{lem3.5} Let $v$ and $\varPhi_v$ be as in Theorem \ref{thm3.1},
$\epsilon>0$, and
$V^\mathrm{l}\in\mathscr{V}_D^\mathrm{l}(1/4)$.
Then
\begin{equation}
\int_{-\infty}^\infty
\|\{V^\mathrm{l}(x)-V^\mathrm{l}(pt-e_1t^2/2)\}U_D(t)\varPhi_v\|\,dt
=O(|v|^{-(4\gamma_D-1)+\epsilon})
\label{3.17}
\end{equation}
holds as $|v|\to\infty$.
\end{lem}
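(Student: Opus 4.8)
The plan is to reduce the integrand to a difference of two operators whose arguments differ by exactly $x$, and then to expand this difference to second order, paying the spread of $M_D(t)\varPhi_v$ supplied by Lemma \ref{lem3.2}. First I would use the Avron-Herbst formula \eqref{2.8} together with the Heisenberg relations $e^{itH_0}xe^{-itH_0}=x+pt+e_1t^2/2$ and $e^{itH_0}pe^{-itH_0}=p+e_1t$ for the free Stark evolution. Conjugating through $e^{-itH_0}$ gives $V^\mathrm{l}(x)e^{-itH_0}=e^{-itH_0}V^\mathrm{l}(x+pt+e_1t^2/2)$ and $V^\mathrm{l}(pt-e_1t^2/2)e^{-itH_0}=e^{-itH_0}V^\mathrm{l}((p+e_1t)t-e_1t^2/2)=e^{-itH_0}V^\mathrm{l}(pt+e_1t^2/2)$. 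Since $e^{-itH_0}$ is unitary, the left-hand side of \eqref{3.17} equals $\int_{-\infty}^\infty\|\{V^\mathrm{l}(x+pt+e_1t^2/2)-V^\mathrm{l}(pt+e_1t^2/2)\}M_D(t)\varPhi_v\|\,dt$. The essential point is that the two arguments now differ by exactly $x$, with no growing $pt$ drift: the Dollard phase $V^\mathrm{l}(pt-e_1t^2/2)$ is tailored precisely so that $V^\mathrm{l}(pt+e_1t^2/2)$ reproduces the leading classical motion of $V^\mathrm{l}(x)$ along $U_D(t)\varPhi_v$. After the velocity boost $M_D(t)\varPhi_v=e^{iv\cdot x}M_{D,v}(t)\varPhi_0$ (replacing $p$ by $p+v$, cf. \eqref{2.9}), I am left with $\|\{V^\mathrm{l}(x+(p+v)t+e_1t^2/2)-V^\mathrm{l}((p+v)t+e_1t^2/2)\}M_{D,v}(t)\varPhi_0\|$, where on $\mathrm{supp}\,\hat\varPhi_0\subset\{|\xi|<\eta\}$ the operator argument $(p+v)t+e_1t^2/2$ stays within $\eta|t|$ of the classical trajectory $vt+e_1t^2/2$.

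Since $V^\mathrm{l}\in C^2$, the second step is to expand the difference to second order in the increment $x$, using the Baker-Campbell-Hausdorff device alluded to in the introduction to organise the non-commutativity of $x$ with the argument $(p+v)t+e_1t^2/2$ (note $[x_j,(p_k+v_k)t]=it\delta_{jk}$). This produces a Weyl-symmetrised first-order term built from $x_j$ and $(\partial_jV^\mathrm{l})((p+v)t+e_1t^2/2)$, together with a remainder governed by $\nabla^2V^\mathrm{l}$. The crucial simplification is that the potential factors $(\nabla V^\mathrm{l})((p+v)t+e_1t^2/2)$ and $(\nabla^2V^\mathrm{l})((p+v)t+e_1t^2/2)$ are now functions of $p$ alone; since $\varPhi_0=f(p)\varPhi_0$ and $M_{D,v}(t)$ is a function of $p$, their argument stays comparable to the trajectory, so by \eqref{2.12}--\eqref{2.13} these factors are bounded in operator norm on the relevant states by $(1+|v|^\nu|t|^{2-\nu})^{-(\gamma_D+1/2)}$ and $(1+|v|^\nu|t|^{2-\nu})^{-(\gamma_D+1)}$ respectively, for any $0\le\nu\le1$. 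No propagation estimate like that in Proposition \ref{prop2.2} is needed here, precisely because the troublesome multiplication operator $V^\mathrm{l}(x)$ has been traded for functions of $p$. The moments of $x$ that multiply these factors are exactly the ones supplied by Lemma \ref{lem3.2}: the first moment $\|x_jM_{D,v}(t)\varPhi_0\|$ is controlled through \eqref{3.4}, \eqref{3.5} and \eqref{3.8} by $\|x\varPhi_0\|+C|v|^{-(\gamma_D+1/2)\mu}|t|^{2-(\gamma_D+1/2)(2-\mu)}$, while the remainder calls for the second moment $\|\langle x\rangle^2M_{D,v}(t)\varPhi_0\|$ of \eqref{3.3}. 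This is why Lemma \ref{lem3.2} is phrased with the weight $\langle x\rangle^2$.

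The remaining task, and the main obstacle, is the optimisation over the interpolation parameters $\nu,\mu\in[0,1]$ in each of the resulting $t$-integrals, subject to the integrability constraints at large $|t|$ coming from the $|t|^2$-scale of the trajectory. The part carrying the bare moment $\|x\varPhi_0\|$ integrates to $O(|v|^{-2\gamma_D+\epsilon})$, which is strictly better than the claimed rate for $\gamma_D<1/2$; the dominant contribution comes from the modifier-induced growth of the spread $\|x_jM_{D,v}(t)\varPhi_0\|$, and balancing the gain $|v|^{-(\gamma_D+1/2)\mu}$ against the loss $|t|^{2-(\gamma_D+1/2)(2-\mu)}$ and the trajectory decay, exactly as in the optimisations leading to \eqref{3.14}--\eqref{3.16}, is what yields $O(|v|^{-(4\gamma_D-1)+\epsilon})$. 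I expect the second-order remainder term, estimated against the $\langle x\rangle^2$-bound of Lemma \ref{lem3.2}, to be no worse than this, since it mirrors the analysis of $I_2$ in Lemma \ref{lem3.3}, so that collecting all contributions gives \eqref{3.17}. The delicate point throughout is that $\gamma_D$ may be as small as just above $1/4$, so every estimate must be pushed to the integrability threshold; this is precisely what forces the loss $\epsilon>0$ and pins down the exponent $4\gamma_D-1$.
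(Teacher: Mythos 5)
Your reduction is correct and matches the paper's: via the Avron--Herbst formula and the Heisenberg relations one arrives at $\|\{V^\mathrm{l}(x+vt+e_1t^2/2)-V^\mathrm{l}(pt+vt+e_1t^2/2)\}e^{-itK_0}M_{D,v}(t)\varPhi_0\|$ (equivalently, your form with the increment $x$), and the identification of the dominant contribution --- the modifier-induced spread $\|xM_{D,v}(t)\varPhi_0\|$ paired with the decay of $\nabla V^\mathrm{l}$ along the trajectory, optimized as in \eqref{3.23}--\eqref{3.25} to give the exponent $-(4\gamma_D-1)$ --- is also right. But there is a genuine gap in the middle step, precisely where you claim that ``no propagation estimate like that in Proposition \ref{prop2.2} is needed'' because the derivative factors become ``functions of $p$ alone.'' That is only true for the zeroth-order coefficient $(\nabla V^\mathrm{l})((p+v)t+e_1t^2/2)$ of a Taylor expansion about the pure-$p$ point. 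The remainder of that expansion --- whether you write it as $\int_0^1(\nabla V^\mathrm{l})(A+\theta x)\,d\theta\cdot x$ or push to second order with $(\nabla^2V^\mathrm{l})(A+\theta x)$ --- necessarily carries a derivative of $V^\mathrm{l}$ evaluated at an $x$-dependent argument. Localizing in $p$ does nothing for such a factor: for fixed momentum support, $A+\theta x$ ranges over all of $\boldsymbol{R}^n$, including a neighbourhood of the origin where $\nabla^2V^\mathrm{l}=O(1)$, so the only available operator bound is the constant $\|\nabla^2V^\mathrm{l}\|_{L^\infty}$ with no decay in $|v|$ or $t$. You would then be integrating $C\,\|\langle x\rangle^2M_{D,v}(t)\varPhi_0\|$ over $t\in\boldsymbol{R}$, and by \eqref{3.3} this integrand is bounded below by a constant and grows like $|t|^{2-4\gamma_D}$; the integral diverges. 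The analogy with $I_2$ in Lemma \ref{lem3.3} does not save you, because there the second moment is multiplied by the factor $(1+\lambda_1|v||t|)^{-2}$ coming from Proposition \ref{prop2.2} --- exactly the ingredient you have discarded.

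The paper's proof resolves this by never expanding about a pure-$p$ point. It inserts the cutoff $\tilde{V}^{\mathrm{l}}_{|v|,t}(x)=V^\mathrm{l}(x+vt+e_1t^2/2)\,g((\lambda_1|v|\langle t\rangle)^{-1}x)$, so that the \emph{sup norms} $\|\nabla\tilde{V}^{\mathrm{l}}_{|v|,t}\|_{L^\infty}$ and $\|\Delta\tilde{V}^{\mathrm{l}}_{|v|,t}\|_{L^\infty}$ themselves decay like $(1+|v|^{\nu}|t|^{2-\nu})^{-(\gamma_D+1/2)}$ and $(1+|v|^{\nu}|t|^{2-\nu})^{-(\gamma_D+1)}$ (estimates \eqref{3.6}--\eqref{3.7}); the Baker--Campbell--Hausdorff expansion of $\tilde{V}^{\mathrm{l}}_{|v|,t}(x)-\tilde{V}^{\mathrm{l}}_{|v|,t}(pt)$ can then be bounded crudely by these sup norms times $\|xM_{D,v}(t)\varPhi_0\|$ and $|t|\,\|\varPhi_0\|$, with no second moment needed at all. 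The price of the cutoff is the term $\bar{V}^{\mathrm{l}}_{|v|,t}=V^\mathrm{l}(x+vt+e_1t^2/2)-\tilde{V}^{\mathrm{l}}_{|v|,t}$, supported where $|x|\gtrsim\lambda_1|v||t|$, and it is exactly this term that is handled by the $I_1,I_2,I_3$ decomposition and Proposition \ref{prop2.2} (as in Lemma \ref{lem3.4}), yielding $O(|v|^{-1})$. To repair your argument you must reinstate both the cutoff and the propagation estimate; with those in place, your optimization over the interpolation parameters does reproduce \eqref{3.17}.
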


\begin{proof}
For the sake of brevity, we put $I=
\|\{V^\mathrm{l}(x)-V^\mathrm{l}(pt-e_1t^2/2)\}U_D(t)\varPhi_v\|$.
For simplicity, we suppose $\gamma_D<1/2$.
We first note that by virtue of the Avron-Herbst formula \eqref{2.8},
\begin{equation}
\begin{split}
&e^{-itH_0}i\frac{d}{dt}(M_D(t))=
e^{-itH_0}V^\mathrm{l}(pt+e_1t^2/2)M_D(t)\\
={}&V^\mathrm{l}((p-e_1t)t+e_1t^2/2)e^{-itH_0}M_D(t)
=V^\mathrm{l}(pt-e_1t^2/2)U_D(t)
\label{3.18}
\end{split}
\end{equation}
holds. In the same way as in the proof of Lemma \ref{lem3.4},
$I$ can be written as
\begin{align*}
I={}&\|\{V^\mathrm{l}(x+e_1t^2/2)-V^\mathrm{l}(pt+e_1t^2/2)\}e^{-itK_0}M_D(t)\varPhi_v\|\\
={}&\|\{V^\mathrm{l}(x+vt+e_1t^2/2)-V^\mathrm{l}(pt+vt+e_1t^2/2)\}e^{-itK_0}M_{D,v}(t)\varPhi_0\|.
\end{align*}
Now we will deal with this by using $\tilde{V}_{|v|,t}^\mathrm{l}(x)$ which is
introduced in the proof of Lemma \ref{lem3.2},
and mimicking the argument in the proof of Lemma \ref{lem3.4}.
Hence, we estimate it as
\begin{equation}
\begin{split}
I={}&\|\{V^\mathrm{l}(x+vt+e_1t^2/2)-\tilde{V}_{|v|,t}^\mathrm{l}(pt)\}e^{-itK_0}M_{D,v}(t)\varPhi_0\|\\
\le{}&\|\bar{V}_{|v|,t}^\mathrm{l}(x)e^{-itK_0}
f(p)M_{D,v}(t)\varPhi_0\|\\
&+
\|\{\tilde{V}_{|v|,t}^\mathrm{l}(x)-\tilde{V}_{|v|,t}^\mathrm{l}(pt)\}e^{-itK_0}
M_{D,v}(t)\varPhi_0\|
\end{split}\label{3.19}
\end{equation}
for $|v|\ge\eta/(3\lambda_1)$, where
$\bar{V}_{|v|,t}^\mathrm{l}(x)=V^\mathrm{l}(x+vt+e_1t^2/2)-\tilde{V}_{|v|,t}^\mathrm{l}(x)$.
As for the first term of the inequality \eqref{3.19},
in the same way as in the proof of Lemma \ref{lem3.4},
\begin{equation}
\int_{-\infty}^\infty\|\bar{V}_{|v|,t}^\mathrm{l}(x)e^{-itK_0}
f(p)M_{D,v}(t)\varPhi_0\|\,dt=O(|v|^{-1})
\label{3.20}
\end{equation}
can be obtained, because $I_3=0$ also in the long-range case.
As for the second term of the inequality \eqref{3.19},
we first note that by virtue of the Baker-Campbell-Hausdorff formula,
$\tilde{V}_{|v|,t}^\mathrm{l}(x)-\tilde{V}_{|v|,t}^\mathrm{l}(pt)$
can be written as
\begin{equation}
\begin{split}
\tilde{V}_{|v|,t}^\mathrm{l}(x)-\tilde{V}_{|v|,t}^\mathrm{l}(pt)
={}&\left(\int_0^1(\nabla \tilde{V}_{|v|,t}^\mathrm{l})(\theta x+(1-\theta)pt)\,d\theta\right)\cdot(x-pt)\\
&\quad+\frac{i}{2}\int_0^1t(\Delta \tilde{V}_{|v|,t}^\mathrm{l})(\theta x+(1-\theta)pt)\,d\theta.
\end{split}
\end{equation}
By virtue of this, we see that
\begin{align*}
&\|\{\tilde{V}_{|v|,t}^\mathrm{l}(x)-\tilde{V}_{|v|,t}^\mathrm{l}(pt)\}e^{-itK_0}
M_{D,v}(t)\varPhi_0\|\\
\le{}&\left\|\int_0^1(\nabla \tilde{V}_{|v|,t}^\mathrm{l})(\theta x+(1-\theta)pt)\,d\theta\right\|_{\mathscr{B}(L^2)}\|(x-pt)
e^{-itK_0}M_{D,v}(t)\varPhi_0\|\\
&+\left\|\frac{i}{2}\int_0^1t(\Delta \tilde{V}_{|v|,t}^\mathrm{l})(\theta x+(1-\theta)pt)\,d\theta\right\|_{\mathscr{B}(L^2)}
\|e^{-itK_0}M_{D,v}(t)\varPhi_0\|\\
\le{}&\|\nabla\tilde{V}_{|v|,t}^\mathrm{l}\|_{L^\infty}\|xM_{D,v}(t)\varPhi_0\|+\frac{1}{2}|t|\|\Delta\tilde{V}_{|v|,t}^\mathrm{l}\|_{L^\infty}\|\varPhi_0\|\\
\le{}&\|\nabla\tilde{V}_{|v|,t}^\mathrm{l}\|_{L^\infty}\left(\|x\varPhi_0\|+\left\|\left(\int_0^ts(\nabla \tilde{V}_{|v|,s}^\mathrm{l})(ps)\,ds\right)\varPhi_0\right\|\right)\\
&+\frac{1}{2}|t|\|\Delta\tilde{V}_{|v|,t}^\mathrm{l}\|_{L^\infty}\|\varPhi_0\|
\end{align*}
holds.
Here we used $e^{itK_0}(x-pt)e^{-itK_0}=x$,
\eqref{3.4} and \eqref{3.5}. In order to estimate
$$\int_{-\infty}^\infty\|\nabla\tilde{V}_{|v|,t}^\mathrm{l}\|_{L^\infty}
\|x\varPhi_0\|\,dt,$$
we consider 
\begin{align*}
\tilde{I}_{1,1}={}&\int_{-\infty}^\infty(1+|v|^{\nu_1/(2-\nu_1)}|t|)^{-(\gamma_D+1/2)(2-\nu_1)}\,dt=O(|v|^{-\nu_1/(2-\nu_1)}),\\
\tilde{I}_{1,2}={}&\int_{-\infty}^\infty|v|^{\nu_2/(2-\nu_2)-1}(1+|v|^{\nu_2/(2-\nu_2)}|t|)^{-\gamma_D(2-\nu_2)-1}\,dt\\
={}&O(|v|^{\nu_2/(2-\nu_2)-1-\nu_2/(2-\nu_2)})=O(|v|^{-1})
\end{align*}
by \eqref{3.6}, which can be obtained by an appropriate change of variables
under the conditions 
$-(\gamma_D+1/2)(2-\nu_1)<-1$ and
$-\gamma_D(2-\nu_2)-1<-1$, i.e. $\nu_1<2-1/(\gamma_D+1/2)$
and $\nu_2<2$.
Since $2/3<2-1/(\gamma_D+1/2)<1$,
$$\inf_{0\le\nu_1<2-1/(\gamma_D+1/2)}(-\nu_1/(2-\nu_1))=-2\gamma_D,$$
and $-2\gamma_D>-1$, we obtain
\begin{equation}
\int_{-\infty}^\infty\|\nabla\tilde{V}_{|v|,t}^\mathrm{l}\|_{L^\infty}
\|x\varPhi_0\|\,dt=O(|v|^{-2\gamma_D+\epsilon})
\label{3.22}
\end{equation}
with $\epsilon>0$. In order to estimate
$$\int_{-\infty}^\infty\|\nabla\tilde{V}_{|v|,t}^\mathrm{l}\|_{L^\infty}
\left\|\left(\int_0^ts(\nabla \tilde{V}_{|v|,s}^\mathrm{l})(ps)\,ds\right)\varPhi_0\right\|\,dt,$$
we consider
\begin{align*}
\tilde{I}_{2,1}={}&\int_{-\infty}^\infty(1+|v|^{\nu_1/(2-\nu_1)}|t|)^{-(\gamma_D+1/2)(2-\nu_1)}\\
&\qquad\times|v|^{-(\gamma_D+1/2)\nu_3}|t|^{2-(\gamma_D+1/2)(2-\nu_3)}\,dt\\
={}&O(|v|^{-(\gamma_D+1/2)\nu_3-\{3-(\gamma_D+1/2)(2-\nu_3)\}\nu_1/(2-\nu_1)}),\\
\tilde{I}_{2,2}={}&\int_{-\infty}^\infty|v|^{\nu_2/(2-\nu_2)-1}(1+|v|^{\nu_2/(2-\nu_2)}|t|)^{-\gamma_D(2-\nu_2)-1}\\
&\qquad\times|v|^{-(\gamma_D+1/2)\nu_4}|t|^{2-(\gamma_D+1/2)(2-\nu_4)}\,dt\\
={}&O(|v|^{\nu_2/(2-\nu_2)-1-(\gamma_D+1/2)\nu_4-\{3-(\gamma_D+1/2)(2-\nu_4)\}\nu_2/(2-\nu_2)})\\
={}&O(|v|^{-1-(\gamma_D+1/2)\nu_4-\{2-(\gamma_D+1/2)(2-\nu_4)\}\nu_2/(2-\nu_2)})
\end{align*}
by \eqref{3.6} and \eqref{3.8},
which can be obtained under the conditions
$-(\gamma_D+1/2)(2-\nu_1)+2-(\gamma_D+1/2)(2-\nu_3)<-1$ and
$-\gamma_D(2-\nu_2)-1+2-(\gamma_D+1/2)(2-\nu_4)<-1$, i.e.
$(\gamma_D+1/2)(\nu_1+\nu_3)<4\gamma_D-1$ and
$\gamma_D\nu_2+(\gamma_D+1/2)\nu_4<4\gamma_D-1$.
Noting
$$2(\gamma_D+1/2)-(4\gamma_D-1)=2-2\gamma_D>1>0$$
by $\gamma_D<1/2$, we see
$$\bigl\{(\gamma_D+1/2)(\nu_1+\nu_3)\bigm|0\le\nu_1,\,\nu_3\le1\bigr\}
\supset[0,4\gamma_D-1].$$
We also note that for $\nu_1$ and $\nu_3$ such that
$(\gamma_D+1/2)(\nu_1+\nu_3)=4\gamma_D-1$,
\begin{align*}
&-(\gamma_D+1/2)\nu_3-\{3-(\gamma_D+1/2)(2-\nu_3)\}\nu_1/(2-\nu_1)\\
={}&-(\gamma_D+1/2)\nu_3-\{2-2\gamma_D+(\gamma_D+1/2)\nu_3\}\\
&\qquad\qquad\qquad\qquad\qquad\times\frac{(4\gamma_D-1)-(\gamma_D+1/2)\nu_3}{2(\gamma_D+1/2)-(4\gamma_D-1)+(\gamma_D+1/2)\nu_3}\\
={}&-(4\gamma_D-1)
\end{align*}
holds. This yields
\begin{equation}
\begin{split}
&\inf_{\nu_1,\,\nu_3}
(-(\gamma_D+1/2)\nu_3-\{3-(\gamma_D+1/2)(2-\nu_3)\}\nu_1/(2-\nu_1))\\
={}&-(4\gamma_D-1).
\end{split}\label{3.23}
\end{equation}
Noting
$$\gamma_D+(\gamma_D+1/2)-(4\gamma_D-1)=3/2-2\gamma_D>1/2>0$$
by $\gamma_D<1/2$, we see
$$\bigl\{\gamma_D\nu_2+(\gamma_D+1/2)\nu_4\bigm|0\le\nu_2,\,\nu_4\le1\bigr\}
\supset[0,4\gamma_D-1].$$
We also note that for $\nu_2$ and $\nu_4$ such that
$\gamma_D\nu_2+(\gamma_D+1/2)\nu_4=4\gamma_D-1$,
\begin{align*}
&-1-(\gamma_D+1/2)\nu_4-\{2-(\gamma_D+1/2)(2-\nu_4)\}\nu_2/(2-\nu_2)\\
={}&-1-(\gamma_D+1/2)\nu_4-\{1-2\gamma_D+(\gamma_D+1/2)\nu_4\}\\
&\qquad\qquad\qquad\qquad\qquad\qquad\times\frac{(4\gamma_D-1)-(\gamma_D+1/2)\nu_4}{2\gamma_D-(4\gamma_D-1)+(\gamma_D+1/2)\nu_4}\\
={}&-1-(4\gamma_D-1)=-4\gamma_D
\end{align*}
holds. This yields
\begin{equation}
\begin{split}
&\inf_{\nu_2,\,\nu_4}
(-1-(\gamma_D+1/2)\nu_4-\{2-(\gamma_D+1/2)(2-\nu_4)\}\nu_2/(2-\nu_2))\\
={}&-4\gamma_D.
\end{split}\label{3.24}
\end{equation}
\eqref{3.23} and \eqref{3.24} imply
\begin{equation}
\int_{-\infty}^\infty\|\nabla\tilde{V}_{|v|,t}^\mathrm{l}\|_{L^\infty}
\left\|\left(\int_0^ts(\nabla \tilde{V}_{|v|,s}^\mathrm{l})(ps)\,ds\right)\varPhi_0\right\|\,dt=O(|v|^{-(4\gamma_D-1)+\epsilon})
\label{3.25}
\end{equation}
with $\epsilon>0$.
In order to estimate
$$\int_{-\infty}^\infty|t|\|\Delta\tilde{V}_{|v|,t}^\mathrm{l}\|_{L^\infty}\|\varPhi_0\|\,dt,$$
we consider
\begin{align*}
\tilde{I}_{3,1}={}&\int_{-\infty}^\infty(1+|v|^{\nu_1/(2-\nu_1)}|t|)^{-(\gamma_D+1)(2-\nu_1)}|t|\,dt=O(|v|^{-2\nu_1/(2-\nu_1)}),\\
\tilde{I}_{3,2}={}&\int_{-\infty}^\infty|v|^{\nu_2/(2-\nu_2)-1}(1+|v|^{\nu_2/(2-\nu_2)}|t|)^{-(\gamma_D+1/2)(2-\nu_2)-1}|t|\,dt\\
={}&O(|v|^{\nu_2/(2-\nu_2)-1-2\nu_2/(2-\nu_2)})=O(|v|^{-1-\nu_2/(2-\nu_2)}),\\
\tilde{I}_{3,3}={}&\int_{-\infty}^\infty|v|^{2\nu_3/(2-\nu_3)-2}(1+|v|^{\nu_3/(2-\nu_3)}|t|)^{-\gamma_D(2-\nu_3)-2}|t|\,dt\\
={}&O(|v|^{2\nu_3/(2-\nu_3)-2-2\nu_3/(2-\nu_3)})=O(|v|^{-2})
\end{align*}
by \eqref{3.7}, which can be obtained by an appropriate change of variables
under the conditions $-(\gamma_D+1)(2-\nu_1)+1<-1$,
$-(\gamma_D+1/2)(2-\nu_2)-1+1<-1$ and
$-\gamma_D(2-\nu_3)-2+1<-1$, i.e.
$\nu_1<2-2/(\gamma_D+1)$,
$\nu_2<2-1/(\gamma_D+1/2)$
and $\nu_3<2$. Since $2/5<2-2/(\gamma_D+1)<2/3$,
$2/3<2-1/(\gamma_D+1/2)<1$,
\begin{align*}
&\inf_{0\le\nu_1<2-2/(\gamma_D+1)}(-2\nu_1/(2-\nu_1))=-2\gamma_D,\\
&\inf_{0\le\nu_2<2-1/(\gamma_D+1/2)}(-1-\nu_2/(2-\nu_2))=-1-2\gamma_D,
\end{align*}
and $-2<-1-2\gamma_D<-2\gamma_D$,
we obtain
\begin{equation}
\int_{-\infty}^\infty|t|\|\Delta\tilde{V}_{|v|,t}^\mathrm{l}\|_{L^\infty}\|\varPhi_0\|\,dt=O(|v|^{-2\gamma_D+\epsilon})
\label{3.26}
\end{equation}
with $\epsilon>0$.
By \eqref{3.20}, \eqref{3.22}, \eqref{3.25} and \eqref{3.26}, we finally obtain
\eqref{3.17} because of $-1<-2\gamma_D<-(4\gamma_D-1)$.
\end{proof}

In the same way as in \cite{AM}, we introduce auxiliary wave operators
$$\Omega_{D,G,v}^{\mathrm{s},\pm}=\slim_{t\to\pm\infty}
e^{itH}U_{D,G,v}^\mathrm{s}(t),$$
where $U_{D,G,v}^\mathrm{s}(t)=U_D(t)M_{G,v}^\mathrm{s}(t)$
and $M_{G,v}^\mathrm{s}(t)=e^{-i\int_0^t V^\mathrm{s}(vs+e_1s^2/2)\,ds}$
as in \S2. Then we see that
$$\Omega_{D,G,v}^{\mathrm{s},\pm}=W_D^\pm I_{G,v}^{\mathrm{s},\pm},\quad
I_{G,v}^{\mathrm{s},\pm}=\slim_{t\to\pm\infty}M_{G,v}^\mathrm{s}(t)$$
exist. Therefore, by Lemmas \ref{lem3.3}, \ref{lem3.4} and \ref{lem3.5},
the following lemma can be
obtained as Lemma \ref{lem2.5}. Thus we omit the proof.

\begin{lem}\label{lem3.6}
Let $v$ and $\varPhi_v$ be as in Theorem \ref{thm3.1}, and $\epsilon>0$.
Then
\begin{equation}
\sup_{t\in\boldsymbol{R}}\|(e^{-itH}\Omega_{D,G,v}^{\mathrm{s},-}-U_{D,G,v}^\mathrm{s}(t))\varPhi_v\|=
O(|v|^{\max\{-1,-2(\gamma_1-1)+\epsilon,-(4\gamma_D-1)+\epsilon\}})
\label{3.27}
\end{equation}
holds as $|v|\to\infty$ for $V^\mathrm{vs}\in\mathscr{V}^\mathrm{vs}$,
$V^\mathrm{s}\in\mathscr{V}^\mathrm{s}(1/2,1)$,
and $V^\mathrm{l}\in\mathscr{V}_D^\mathrm{l}(1/4)$.
\end{lem}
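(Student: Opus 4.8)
The plan is to prove \eqref{3.27} by Cook's method, exactly as in the proof of Lemma \ref{lem2.5}, now also picking up the long-range contribution controlled by Lemma \ref{lem3.5}. Set $A(\tau)=e^{i\tau H}U_{D,G,v}^\mathrm{s}(\tau)$, so that $\Omega_{D,G,v}^{\mathrm{s},-}=\slim_{\tau\to-\infty}A(\tau)$ while $e^{-itH}A(t)=U_{D,G,v}^\mathrm{s}(t)$. Since $e^{-itH}$ is unitary,
\[
\|(e^{-itH}\Omega_{D,G,v}^{\mathrm{s},-}-U_{D,G,v}^\mathrm{s}(t))\varPhi_v\|=\lim_{s\to-\infty}\|(A(s)-A(t))\varPhi_v\|\le\int_{-\infty}^\infty\|A'(\tau)\varPhi_v\|\,d\tau,
\]
and the bound on the right is independent of $t$ and $s$. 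Hence it suffices to estimate $\int_{-\infty}^\infty\|A'(\tau)\varPhi_v\|\,d\tau$, after which taking $\sup_t$ and $\lim_{s\to-\infty}$ yields \eqref{3.27}.

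First I would compute the strong derivative $A'(\tau)$ on $\varPhi_v$. Differentiating the three factors of $U_{D,G,v}^\mathrm{s}(\tau)=e^{-i\tau H_0}M_D(\tau)M_{G,v}^\mathrm{s}(\tau)$, using the Avron--Herbst identity \eqref{3.18} to rewrite $e^{-i\tau H_0}V^\mathrm{l}(p\tau+e_1\tau^2/2)M_D(\tau)$ as $V^\mathrm{l}(p\tau-e_1\tau^2/2)U_D(\tau)$, and using that the scalar phase $M_{G,v}^\mathrm{s}(\tau)$ commutes with every operator, one finds
\[
A'(\tau)=ie^{i\tau H}\bigl[V^\mathrm{vs}(x)+\{V^\mathrm{s}(x)-V^\mathrm{s}(v\tau+e_1\tau^2/2)\}+\{V^\mathrm{l}(x)-V^\mathrm{l}(p\tau-e_1\tau^2/2)\}\bigr]U_{D,G,v}^\mathrm{s}(\tau),
\]
since $H-H_0=V^\mathrm{vs}+V^\mathrm{s}+V^\mathrm{l}$. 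Here $V^\mathrm{s}(v\tau+e_1\tau^2/2)$ is the scalar produced by $M_{G,v}^\mathrm{s}(\tau)$ and $V^\mathrm{l}(p\tau-e_1\tau^2/2)$ is the operator produced by $M_D(\tau)$, so that \emph{precisely} the two differences appearing in Lemmas \ref{lem3.4} and \ref{lem3.5} are generated. Because $e^{i\tau H}$ is unitary and $M_{G,v}^\mathrm{s}(\tau)$ is a scalar of modulus one, each resulting norm equals the same norm with $U_{D,G,v}^\mathrm{s}(\tau)\varPhi_v$ replaced by $U_D(\tau)\varPhi_v$, giving
\begin{align*}
\|A'(\tau)\varPhi_v\|\le{}&\|V^\mathrm{vs}(x)U_D(\tau)\varPhi_v\|+\|\{V^\mathrm{s}(x)-V^\mathrm{s}(v\tau+e_1\tau^2/2)\}U_D(\tau)\varPhi_v\|\\
&+\|\{V^\mathrm{l}(x)-V^\mathrm{l}(p\tau-e_1\tau^2/2)\}U_D(\tau)\varPhi_v\|.
\end{align*}

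Integrating in $\tau$ and invoking Lemmas \ref{lem3.3}, \ref{lem3.4} and \ref{lem3.5} bounds the three terms by $O(|v|^{-1})$, $O(|v|^{\max\{-1,-2(\gamma_1-1)+\epsilon\}})$ and $O(|v|^{-(4\gamma_D-1)+\epsilon})$, respectively, whose sum is $O(|v|^{\max\{-1,-2(\gamma_1-1)+\epsilon,-(4\gamma_D-1)+\epsilon\}})$, exactly the claimed bound. The routine part is these three estimates, which are already in hand. The main obstacle I anticipate is the rigorous justification of the Duhamel step and of the derivative computation: one must verify that $U_{D,G,v}^\mathrm{s}(\tau)\varPhi_v$ lies in $D(H)$ for each $\tau$, that $\tau\mapsto A(\tau)\varPhi_v$ is strongly $C^1$ with the derivative above, and that the application of \eqref{3.18} validly exposes the tractable difference $V^\mathrm{l}(x)-V^\mathrm{l}(p\tau-e_1\tau^2/2)$ rather than an untreatable $V^\mathrm{l}(p\tau+e_1\tau^2/2)$ term. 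Since $\hat\varPhi_0\in C_0^\infty(\boldsymbol{R}^n)$, $\varPhi_v$ is Schwartz and the requisite regularity and decay are preserved under $e^{-i\tau H_0}$, $M_D(\tau)$ and the scalar $M_{G,v}^\mathrm{s}(\tau)$, so these points settle as in \cite{AM}; I would therefore keep the verification brief and refer to the argument for Lemma \ref{lem2.5}.
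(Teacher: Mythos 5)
Your proof is correct and follows exactly the route the paper intends: the paper omits the proof of Lemma \ref{lem3.6} precisely because it is the standard Cook-method argument combining Lemmas \ref{lem3.3}, \ref{lem3.4} and \ref{lem3.5}, with the derivative of $e^{i\tau H}U_{D,G,v}^{\mathrm{s}}(\tau)$ producing the effective potential $V_\tau^D$ (the same one displayed in the proof of Theorem \ref{thm3.1}) via \eqref{3.18} and the scalar nature of $M_{G,v}^{\mathrm{s}}(\tau)$. No gaps.
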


Now we will show Theorem \ref{thm3.1}:

\begin{proof}[Proof of Theorem \ref{thm3.1}]
Since the proof is quite similar to the one of
Theorem \ref{thm2.1}, we give its sketch only.

Suppose that $V^\mathrm{vs}\in\mathscr{V}^\mathrm{vs}$,
$V^\mathrm{s}\in\mathscr{V}^\mathrm{s}(1/2,5/4)$ and
$V^\mathrm{l}\in\mathscr{V}_D^\mathrm{l}(3/8)$.
We first note that $S_D$ is represented as
\begin{align*}
&S_D=(W_D^+)^*W_D^-=I_{G,v}^\mathrm{s}(\Omega_{D,G,v}^{\mathrm{s},+})^*\Omega_{D,G,v}^{\mathrm{s},-},\\
&I_{G,v}^\mathrm{s}=I_{G,v}^{\mathrm{s},+}(I_{G,v}^{\mathrm{s},-})^*=e^{-i\int_{-\infty}^\infty V^\mathrm{s}(vs+e_1s^2/2)\,ds}.
\end{align*}
Noting $[S_D,p_j]=[S_D-I_{G,v}^\mathrm{s},p_j-v_j]$, $(p_j-v_j)\varPhi_v=(p_j\varPhi_0)_v$ and
\begin{align*}
i(S_D-I_{G,v}^\mathrm{s})\varPhi_v={}&I_{G,v}^\mathrm{s}i(\Omega_{D,G,v}^{\mathrm{s},+}-\Omega_{D,G,v}^{\mathrm{s},-})^*
\Omega_{D,G,v}^{\mathrm{s},-}\varPhi_v\\
={}&I_{G,v}^\mathrm{s}\int_{-\infty}^\infty U_{D,G,v}^\mathrm{s}(t)^*V_t^De^{-itH}\Omega_{D,G,v}^{\mathrm{s},-}\varPhi_v\,dt
\end{align*}
with 
$$V_t^D=V^\mathrm{vs}(x)+V^\mathrm{s}(x)-V^\mathrm{s}(vt+e_1t^2/2)
+V^\mathrm{l}(x)-V^\mathrm{l}(pt-e_1t^2/2),$$
we have
$$|v|(i[S_D,p_j]\varPhi_v,\varPsi_v)=I_{G,v}^\mathrm{s}\{I^D(v)+R^D(v)\}$$
with
\begin{align*}
I^D(v)={}&|v|\int_{-\infty}^\infty
[(V_t^DU_{D,G,v}^\mathrm{s}(t)(p_j\varPhi_0)_v,U_{D,G,v}^\mathrm{s}(t)\varPsi_v)\\
{}&\qquad\qquad-(V_t^DU_{D,G,v}^\mathrm{s}(t)\varPhi_v,U_{D,G,v}^\mathrm{s}(t)(p_j\varPsi_0)_v)]\,dt,\\
R^D(v)={}&|v|\int_{-\infty}^\infty
[((e^{-itH}\Omega_{D,G,v}^{\mathrm{s},-}-U_{D,G,v}^\mathrm{s}(t))
(p_j\varPhi_0)_v,V_t^DU_{D,G,v}^\mathrm{s}(t)\varPsi_v)\\
{}&\qquad\qquad-((e^{-itH}\Omega_{D,G,v}^{\mathrm{s},-}-U_{D,G,v}^\mathrm{s}(t))\varPhi_v,V_t^DU_{D,G,v}^\mathrm{s}(t)(p_j\varPsi_0)_v)]\,dt.
\end{align*}
By Lemmas \ref{lem3.3}, \ref{lem3.4}, \ref{lem3.5} and \ref{lem3.6},
we have
\begin{equation}
\begin{split}
|R^D(v)|={}&O(|v|^{1+2\max\{-1,-2(\gamma_1-1)+\epsilon,-(4\gamma_D-1)+\epsilon\}})\\
={}&O(|v|^{\max\{-1,5-4\gamma_1+2\epsilon,3-8\gamma_D+2\epsilon\}}).
\end{split}\label{3.28}
\end{equation}\
In the same way as in the
proof of Theorem \ref{thm2.1}, we need the conditions
$5-4\gamma_1<0$ and $3-8\gamma_D<0$,
which are equivalent to
$\gamma_1>5/4$ and $\gamma_D>3/8$,
in order to get $\lim_{|v|\to\infty}R^D(v)=0$.

The rest of the proof is the same as in \cite{We1} and \cite{AM}.
So we omit it.
\end{proof}

By virtue of Theorem \ref{thm3.1},
Theorem \ref{thm1.2} can be shown in the same way as in the
proof of Theorem \ref{thm1.1}. Thus we omit its proof.

\section{The case where $V^\mathrm{s}\in\tilde{\mathscr{V}}^\mathrm{s}(1/2,1,5/4)$}

Throughout this section, we suppose
$V^\mathrm{s}\in\tilde{\mathscr{V}}^\mathrm{s}(1/2,1,1)$.
Then the following reconstruction
formulas, which are Theorems \ref{thm2.1} and \ref{thm3.1}
with replacing $V^\mathrm{s}\in\mathscr{V}^\mathrm{s}(1/2,5/4)$
by $V^\mathrm{s}\in\tilde{\mathscr{V}}^\mathrm{s}(1/2,1,5/4)$,
can be obtained:

\begin{thm}\label{thm4.1} Let the notation in this theorem be the same
as in Theorem \ref{thm2.1}.
Let $V^\mathrm{vs}\in\mathscr{V}^\mathrm{vs}$,
$V^\mathrm{s}\in\tilde{\mathscr{V}}^\mathrm{s}(1/2,1,5/4)$.
Then \eqref{2.1} holds for $1\le j\le n$.
\end{thm}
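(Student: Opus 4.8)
The plan is to run the scheme of the proof of Theorem~\ref{thm2.1}, but with the $v$-dependent Graf-type modifier $M_{G,v}^\mathrm{s}(t)$ replaced throughout by the Dollard-type modifier $M_D^\mathrm{s}(t)=e^{-i\int_0^tV^\mathrm{s}(ps+e_1s^2/2)\,ds}$, exploiting the extra $C^2$-regularity of $V^\mathrm{s}$ via the Baker-Campbell-Hausdorff formula exactly as in the treatment of $V^\mathrm{l}$ in \S3. Writing $U_D^\mathrm{s}(t)=e^{-itH_0}M_D^\mathrm{s}(t)$ and $M_{D,v}^\mathrm{s}(t)=e^{-iv\cdot x}M_D^\mathrm{s}(t)e^{iv\cdot x}$, the Avron-Herbst formula \eqref{2.8} gives the analogue of \eqref{3.18}, namely $e^{-itH_0}i\frac{d}{dt}M_D^\mathrm{s}(t)=V^\mathrm{s}(pt-e_1t^2/2)U_D^\mathrm{s}(t)$. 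Since $M_D^\mathrm{s}(t)$ is a function of $p$ alone, the limits $I_D^{\mathrm{s},\pm}=\slim_{t\to\pm\infty}M_D^\mathrm{s}(t)$ (which exist by the short-range decay of $V^\mathrm{s}$, cf.\ \eqref{2.5}) are also functions of $p$; hence the auxiliary wave operators $\Omega_{D,v}^{\mathrm{s},\pm}=\slim_{t\to\pm\infty}e^{itH}U_D^\mathrm{s}(t)=W^\pm I_D^{\mathrm{s},\pm}$ exist, $I_D^\mathrm{s}=I_D^{\mathrm{s},+}(I_D^{\mathrm{s},-})^*$ commutes with $p_j$, and the identity $[S,p_j]=[S-I_D^\mathrm{s},p_j-v_j]$ continues to hold.

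First I would establish the preparatory estimates under the weaker hypothesis $V^\mathrm{s}\in\tilde{\mathscr{V}}^\mathrm{s}(1/2,1,1)$. A bound on $\|\langle x\rangle^2M_D^\mathrm{s}(t)\varPhi_v\|$ (the analogue of Lemma~\ref{lem3.2}, now with the $\nabla V^\mathrm{s}$-decay $\gamma_1$ and the $\Delta V^\mathrm{s}$-decay $\gamma_2$ in place of $\gamma_D+1/2$ and $\gamma_D+1$) yields, as in Lemma~\ref{lem3.3}, that $\int_{-\infty}^\infty\|V^\mathrm{vs}(x)U_D^\mathrm{s}(t)\varPhi_v\|\,dt=O(|v|^{-1})$. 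The central estimate is the analogue of Lemmas~\ref{lem2.4} and \ref{lem3.5}: with $I=\|\{V^\mathrm{s}(x)-V^\mathrm{s}(pt-e_1t^2/2)\}U_D^\mathrm{s}(t)\varPhi_v\|$, I would pass to $\tilde{V}_{|v|,t}^\mathrm{s}$ and split off $\bar{V}_{|v|,t}^\mathrm{s}$ as before (the $\bar{V}$-part again contributing $O(|v|^{-1})$, with $I_3=0$), and expand $\tilde{V}_{|v|,t}^\mathrm{s}(x)-\tilde{V}_{|v|,t}^\mathrm{s}(pt)$ by the Baker-Campbell-Hausdorff formula into a first-order term $\left(\int_0^1(\nabla\tilde{V}_{|v|,t}^\mathrm{s})(\theta x+(1-\theta)pt)\,d\theta\right)\cdot(x-pt)$ and a second-order term $\frac{i}{2}t\int_0^1(\Delta\tilde{V}_{|v|,t}^\mathrm{s})(\theta x+(1-\theta)pt)\,d\theta$. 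Using $e^{itK_0}(x-pt)e^{-itK_0}=x$ together with the analogues of \eqref{3.4} and \eqref{3.5}, the first term is dominated by $\|\nabla\tilde{V}_{|v|,t}^\mathrm{s}\|_{L^\infty}(\|x\varPhi_0\|+\|(\int_0^ts(\nabla\tilde{V}_{|v|,s}^\mathrm{s})(ps)\,ds)\varPhi_0\|)$ and the second by $\frac{1}{2}|t|\|\Delta\tilde{V}_{|v|,t}^\mathrm{s}\|_{L^\infty}\|\varPhi_0\|$.

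The rates then follow from the interpolation estimate \eqref{2.13} and the same $\nu$-optimisation as in \S2 and \S3. Because the first-order multiplier $(x-pt)$ becomes the $t$-independent operator $x$ under conjugation, the piece $\int\|\nabla\tilde{V}_{|v|,t}^\mathrm{s}\|_{L^\infty}\|x\varPhi_0\|\,dt$ is $O(|v|^{-1})$ (as in \eqref{2.15}); the Dollard cross term $\int\|\nabla\tilde{V}_{|v|,t}^\mathrm{s}\|_{L^\infty}\|(\int_0^ts(\nabla\tilde{V}_{|v|,s}^\mathrm{s})(ps)\,ds)\varPhi_0\|\,dt$ is $O(|v|^{-(4\gamma_1-3)+\epsilon})$ by the computation \eqref{3.23}, hence dominated by $|v|^{-1}$ since $\gamma_1>1$; and the linear-in-$t$ growth which in the $C^1$ theory sat on $\int|t|\|\nabla\tilde{V}_{|v|,t}^\mathrm{s}\|_{L^\infty}\|p\varPhi_0\|\,dt$ and produced the binding $O(|v|^{-2(\gamma_1-1)+\epsilon})$ of \eqref{2.16} is here carried instead by the second-order term $\int|t|\|\Delta\tilde{V}_{|v|,t}^\mathrm{s}\|_{L^\infty}\,dt=O(|v|^{-2(\gamma_2-1)+\epsilon})$, by the computation leading to \eqref{3.26}. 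Thus the central estimate is $O(|v|^{\max\{-1,-2(\gamma_2-1)+\epsilon\}})$. With the corresponding uniform bound (the analogue of Lemmas~\ref{lem2.5} and \ref{lem3.6}), the decomposition of $|v|(i[S,p_j]\varPhi_v,\varPsi_v)$ into a main term $I(v)$ and a remainder $R(v)$ from the proof of Theorem~\ref{thm2.1} goes through, up to the $p$-multiplier $I_D^\mathrm{s}$; one obtains $|R(v)|=O(|v|^{1+2\max\{-1,-2(\gamma_2-1)+\epsilon\}})=O(|v|^{\max\{-1,5-4\gamma_2+2\epsilon\}})$, which vanishes precisely because $\gamma_2>5/4$, while $I(v)\to$ the right-hand side of \eqref{2.1} and $I_D^\mathrm{s}$ contributes a factor tending to $1$ in the relevant matrix elements (exactly as $I_{G,v}^\mathrm{s}\to1$ in \S2), as in \cite{We1} and \cite{AM}.

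The main obstacle is the central estimate, and within it the second-order term together with the Dollard cross term: one must verify, through the optimisation of the interpolation parameters as in \eqref{3.23}--\eqref{3.26}, that the achievable exponent ranges cover the required intervals and that the minimisation genuinely produces the binding rate $-2(\gamma_2-1)$ rather than something worse. The conceptual point is that the Baker-Campbell-Hausdorff expansion transfers the troublesome linear-in-$t$ growth from the first-order term (decay $\gamma_1$), which forced $\gamma_1>5/4$ in the $C^1$ theory of Lemma~\ref{lem2.4}, onto the faster-decaying second-order term (decay $\gamma_2$); consequently only $\gamma_1>1$ and $\gamma_2>5/4$ are needed, which is exactly the class $\tilde{\mathscr{V}}^\mathrm{s}(1/2,1,5/4)$.
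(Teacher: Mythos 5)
Your proposal is correct and follows essentially the same route as the paper: replace the Graf-type modifier by $M_D^\mathrm{s}(t)$, use the Baker-Campbell-Hausdorff expansion to shift the linear-in-$t$ growth onto the $\Delta V^\mathrm{s}$ term of decay $\gamma_2$, obtain the central rate $O(|v|^{\max\{-1,-2(\gamma_2-1)+\epsilon\}})$, and close with $\gamma_2>5/4$; the paper additionally spells out that, since $I_D^{\mathrm{s},\pm}$ are functions of $p$ not commuting with $\Omega_D^{\mathrm{s},\pm}$, the main term must be written with the vectors $((I_{D,v}^{\mathrm{s},\pm})^*\varPhi_0)_v$ and one needs $\slim_{|v|\to\infty}(I_{D,v}^{\mathrm{s},\pm})^*=\mathrm{Id}$ plus dominated convergence, which is the content of your remark that the factor tends to $1$. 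The only other cosmetic difference is that the paper absorbs the Dollard cross term via the uniform bound \eqref{4.7} rather than computing its rate $-(4\gamma_1-3)$; both give $O(|v|^{-1})$.
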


\begin{thm}\label{thm4.2}
Let the notation in this theorem be the same
as in Theorem \ref{thm3.1}.
Let $V^\mathrm{vs}\in\mathscr{V}^\mathrm{vs}$,
$V^\mathrm{s}\in\tilde{\mathscr{V}}^\mathrm{s}(1/2,1,5/4)$,
$V^\mathrm{l}\in\mathscr{V}_D^\mathrm{l}(3/8)$.
Then \eqref{3.2} holds for $1\le j\le n$.
\end{thm}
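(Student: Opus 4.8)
The plan is to run the argument of Theorem \ref{thm3.1} essentially verbatim, changing only the way the short-range part $V^{\mathrm{s}}$ is regularized. Where \S3 used the $v$-dependent Graf-type modifier $M_{G,v}^{\mathrm{s}}(t)$ (which commutes with everything and factors out as the scalar phase $I_{G,v}^{\mathrm{s}}$), here I would use the Dollard-type modifier $M_D^{\mathrm{s}}(t)=e^{-i\int_0^tV^{\mathrm{s}}(ps+e_1s^2/2)\,ds}$ and form the combined modifier $U_D^{\mathrm{s,l}}(t)=U_D(t)M_D^{\mathrm{s}}(t)=e^{-itH_0}M_D(t)M_D^{\mathrm{s}}(t)$, together with the auxiliary wave operators $\Omega_D^{\mathrm{s,l},\pm}=\slim_{t\to\pm\infty}e^{itH}U_D^{\mathrm{s,l}}(t)=W_D^\pm I_D^{\mathrm{s},\pm}$, where $I_D^{\mathrm{s},\pm}=\slim_{t\to\pm\infty}M_D^{\mathrm{s}}(t)$ exists by $\gamma_0>1/2$. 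Differentiating $e^{itH}U_D^{\mathrm{s,l}}(t)$ and conjugating with the Avron--Herbst formula \eqref{2.8} as in \eqref{3.18}, the effective potential driving the wave-operator difference becomes $V^{\mathrm{vs}}(x)+\{V^{\mathrm{s}}(x)-V^{\mathrm{s}}(pt-e_1t^2/2)\}+\{V^{\mathrm{l}}(x)-V^{\mathrm{l}}(pt-e_1t^2/2)\}$, i.e. the Graf term $V^{\mathrm{s}}(vt+e_1t^2/2)$ of \S3 is replaced by the operator $V^{\mathrm{s}}(pt-e_1t^2/2)$.

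Then I would re-prove the key integrated bounds with $U_D(t)$ replaced by $U_D^{\mathrm{s,l}}(t)$. The $V^{\mathrm{vs}}$-bound (Lemma \ref{lem3.3}) and the $V^{\mathrm{l}}$-bound (Lemma \ref{lem3.5}) go through once the $\langle x\rangle^2$-estimate of Lemma \ref{lem3.2} is upgraded to the combined modifier, and this is where the standing assumptions $\gamma_1>1$ and $\gamma_2>1$ are consumed: bounding $\langle x\rangle^2M_D(t)M_D^{\mathrm{s}}(t)\varPhi_v$ produces the extra inner integrals $\int_0^ts(\nabla V^{\mathrm{s}})(\cdots)\,ds$ and $\int_0^ts^2(\Delta V^{\mathrm{s}})(\cdots)\,ds$, whose convergence and $|v|$-decay require exactly $\gamma_1>1$ and $\gamma_2>1$. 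The genuinely new estimate is the Dollard analogue of Lemma \ref{lem3.5} for the short-range part,
\[
\int_{-\infty}^\infty\|\{V^{\mathrm{s}}(x)-V^{\mathrm{s}}(pt-e_1t^2/2)\}U_D^{\mathrm{s,l}}(t)\varPhi_v\|\,dt,
\]
obtained by expanding $V^{\mathrm{s}}(x)-V^{\mathrm{s}}(pt-e_1t^2/2)$ through the Baker--Campbell--Hausdorff identity used in Lemma \ref{lem3.5}: the gradient term $(\nabla\tilde V^{\mathrm{s}})\cdot(x-pt)$ is controlled, via $e^{itK_0}(x-pt)e^{-itK_0}=x$, by $\|\nabla\tilde V^{\mathrm{s}}\|_{L^\infty}$ (decay $\gamma_1$), while the commutator correction $\tfrac{i}{2}t(\Delta\tilde V^{\mathrm{s}})$ is controlled by $|t|\,\|\Delta\tilde V^{\mathrm{s}}\|_{L^\infty}$ (decay $\gamma_2$). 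Running the interpolation-and-optimization scheme on \eqref{2.13}, the Laplacian term yields the dominant contribution of order $O(|v|^{-2(\gamma_2-1)+\epsilon})$, exactly paralleling the $O(|v|^{-2\gamma_D+\epsilon})$ of \eqref{3.26} under $\gamma_D+1\leftrightarrow\gamma_2$.

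Feeding these bounds into the remainder $R^D(v)=O(|v|^{1+2\max\{\cdots\}})$ as in \eqref{3.28}, the binding conditions become $1-4(\gamma_2-1)<0$ and $3-8\gamma_D<0$ (the gradient-type conditions being all milder than $\gamma_1>1$), i.e. $\gamma_2>5/4$ and $\gamma_D>3/8$, which is why the classes $\tilde{\mathscr{V}}^{\mathrm{s}}(1/2,1,5/4)$ and $\mathscr{V}_D^{\mathrm{l}}(3/8)$ appear. The one feature absent in \S3 is that $I_D^{\mathrm{s},\pm}$ is now a function of $p$ rather than a scalar, so after writing $[S_D,p_j]=I_D^{\mathrm{s},+}[(\Omega_D^{\mathrm{s,l},+})^*\Omega_D^{\mathrm{s,l},-},p_j](I_D^{\mathrm{s},-})^*$, using that $I_D^{\mathrm{s},\pm}$ commutes with $p_j$, one must show that $(I_D^{\mathrm{s},\pm})^*$ acts on the wave packets $\varPhi_v,\varPsi_v$ as the scalar Graf phase $e^{i\int_0^{\pm\infty}V^{\mathrm{s}}(vs+e_1s^2/2)\,ds}$ up to an error vanishing after multiplication by $|v|$; here $\gamma_1>1$ is precisely what makes $\int_0^{\pm\infty}s(\nabla V^{\mathrm{s}})(vs+e_1s^2/2)\,ds$ convergent and $|v|$-decaying, taming this error. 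With the overall scalar phase extracted, the reconstruction concludes exactly as in Theorems \ref{thm2.1} and \ref{thm3.1}, and \eqref{3.2} follows.

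The main obstacle I expect is twofold. First, carrying the interpolation optimization of the Baker--Campbell--Hausdorff expansion through with the combined modifier so that the Laplacian term really saturates at $-2(\gamma_2-1)$, while verifying that none of the mixed $V^{\mathrm{s}}$--$V^{\mathrm{l}}$ cross terms (products of the $\gamma_1$-decay with the $(\gamma_D+1/2)$-decay, of the type appearing in \eqref{3.16}) degrade the threshold below $\gamma_2>5/4$. Second, the control of the now operator-valued $I_D^{\mathrm{s},\pm}$: one must show that the $p$-dependence of the Dollard phase collapses to the Graf scalar on $\varPhi_v$ with an error that is $o(|v|^{-1})$, which is exactly the point at which the hypothesis $\gamma_1>1$ is indispensable.
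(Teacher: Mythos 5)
Your overall architecture coincides with the paper's: the combined modifier $U_{D,D}^\mathrm{s}(t)=e^{-itH_0}M_D(t)M_D^\mathrm{s}(t)$, the Baker--Campbell--Hausdorff expansion of $V^\mathrm{s}(x)-V^\mathrm{s}(pt-e_1t^2/2)$ into a gradient term (decay $\gamma_1$) plus a $\tfrac{i}{2}t\Delta$ correction (decay $\gamma_2$), the interpolation/optimization on \eqref{2.13}, and the resulting thresholds $\gamma_2>5/4$, $\gamma_D>3/8$. The cross term you flag as an unresolved obstacle is in fact harmless: as in \eqref{3.16}, the pairing of $\|\nabla\tilde{V}_{|v|,t}^\mathrm{s}\|_{L^\infty}$ with $\int_0^ts(\nabla\tilde{V}_{|v|,s}^\mathrm{l})(ps)\,ds$ contributes $O(|v|^{-2\{(\gamma_1-1)+\gamma_D\}+\epsilon})$ to Lemma \ref{lem4.9}, hence the condition $\gamma_1+\gamma_D>5/4$, which is automatic once $\gamma_1>1$ and $\gamma_D>3/8$ (indeed even for $\gamma_D>1/4$).

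The genuine gap is in your last step, the handling of the operator-valued $I_D^{\mathrm{s},\pm}$. You propose to replace $(I_{D,v}^{\mathrm{s},\pm})^*$ acting on the wave packets by the scalar Graf phase $e^{i\int_0^{\pm\infty}V^\mathrm{s}(vs+e_1s^2/2)\,ds}$ ``up to an error vanishing after multiplication by $|v|$.'' That quantitative claim fails for $1<\gamma_1\le3/2$: on states with $|p|<\eta$ the phase difference is controlled by $\int_0^{\infty}s\,\sup_{|\xi|<\eta}|(\nabla V^\mathrm{s})(\xi s+vs+e_1s^2/2)|\,ds$, and the optimization over \eqref{2.13} gives only $O(|v|^{-2(\gamma_1-1)+\epsilon})$ (exactly as in \eqref{2.16}); this is $o(|v|^{-1})$ only when $\gamma_1>3/2$, whereas the class $\tilde{\mathscr{V}}^\mathrm{s}(1/2,1,5/4)$ permits $\gamma_1$ arbitrarily close to $1$. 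The fix — and what the paper actually does — is not to replace anything: since $I_D^{\mathrm{s},\pm}$ is a function of $p$ it commutes with $p_j$, so one writes $[S_D,p_j]=[S_D-I_D^\mathrm{s},p_j-v_j]$ with $I_D^\mathrm{s}=I_D^{\mathrm{s},+}(I_D^{\mathrm{s},-})^*$ and carries the factors $(I_{D,v}^{\mathrm{s},\pm})^*$ attached to $\varPhi_0,\varPsi_0$ through the whole argument (their Fourier supports are unchanged, so Lemmas \ref{lem4.8}--\ref{lem4.11} apply verbatim to the modified wave packets). Only in the final passage to the limit does one use $\slim_{|v|\to\infty}(I_{D,v}^{\mathrm{s},\pm})^*=\mathrm{Id}$ — an $o(1)$ statement proved by dominated convergence using merely $\gamma_0>1/2$ — together with dominated convergence in the $\tau$-integral. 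An $o(1)$ error suffices there because it multiplies an already convergent quantity, so the $o(|v|^{-1})$ bound you try to force (and cannot obtain) is not needed.
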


In order to prove Theorems \ref{thm4.1} and \ref{thm4.2}, we would like to
improve a series of lemmas in \S2 and \S3,
for $V^\mathrm{s}\in\tilde{\mathscr{V}}^\mathrm{s}(1/2,1,1)$. To this end, we will introduce
\begin{equation}
U_D^\mathrm{s}(t)=e^{-itH_0}M_D^\mathrm{s}(t),\quad
M_D^\mathrm{s}(t)=e^{-i\int_0^tV^\mathrm{s}(ps+e_1s^2/2)\,ds}.
\end{equation}
In \cite{N1}, instead of $M_D^\mathrm{s}(t)$,
$e^{-i\int_0^tV^\mathrm{s}(p_\perp s+e_1s^2/2)\,ds}$ was used,
as mentioned in \S1.
$M_D^\mathrm{s}(t)$ seems more appropriate
for the problem considered in this paper
than $e^{-i\int_0^tV^\mathrm{s}(p_\perp s+e_1s^2/2)\,ds}$.
We first give the following lemma:

\begin{lem}\label{lem4.3}
Let $v$ and $\varPhi_v$ be as in Theorem \ref{thm4.1},
and $V^\mathrm{s}\in\tilde{\mathscr{V}}^\mathrm{s}(1/2,1,1)$.
If $\gamma_2>3/2$,
then there exists a positive constant $C$ such that
\begin{equation}
\|\langle x\rangle^2M_D^\mathrm{s}(t)\varPhi_v\|
=\|\langle x\rangle^2M_{D,v}^\mathrm{s}(t)\varPhi_0\|
\le C
\label{4.2}
\end{equation}
holds as $|v|\to\infty$, where
$M_{D,v}^\mathrm{s}(t)=e^{-iv\cdot x}M_D^\mathrm{s}(t)e^{iv\cdot x}=e^{-i\int_0^tV^\mathrm{s}(ps+vs+e_1s^2/2)\,ds}$.
On the other hand, if $\gamma_2\le3/2$, then, for $0\le\nu_1\le1$,
there exists a positive constant $C$ such that
\begin{equation}
\|\langle x\rangle^2M_D^\mathrm{s}(t)\varPhi_v\|
=\|\langle x\rangle^2M_{D,v}^\mathrm{s}(t)\varPhi_0\|
\le C(1+|v|^{-\gamma_2\nu_1}|t|^{3-\gamma_2(2-\nu_1)})
\end{equation}
holds as $|v|\to\infty$, where only when $\gamma_2=3/2$,
we assume $\nu_1\not=0$ additionally.
\end{lem}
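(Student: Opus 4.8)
The plan is to mimic the proof of Lemma \ref{lem3.2} almost verbatim, with $V^\mathrm{l}$ replaced by $V^\mathrm{s}$, exploiting that $V^\mathrm{s}\in\tilde{\mathscr{V}}^\mathrm{s}(1/2,1,1)$ satisfies the stronger decay rates $\gamma_0>1/2$, $\gamma_1>1$, $\gamma_2>1$. First I would reuse the cut-off $\tilde{V}_{|v|,t}^\mathrm{s}(x)=V^\mathrm{s}(x+vt+e_1t^2/2)g((\lambda_1|v|\langle t\rangle)^{-1}x)$ from the proof of Lemma \ref{lem2.4}. Since $\mathrm{supp}\,\hat{\varPhi}_0\subset\{|\xi|<\eta\}$, for $|v|\ge\eta/(3\lambda_1)$ one has $M_{D,v}^\mathrm{s}(t)\varPhi_0=e^{-i\int_0^t\tilde{V}_{|v|,s}^\mathrm{s}(ps)\,ds}\varPhi_0$, exactly as in \eqref{3.4}. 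Using $x=i\nabla_p$ and the conjugation identity \eqref{3.5} now applied to $\tilde{V}^\mathrm{s}$, I would expand $e^{i\int_0^t\tilde{V}_{|v|,s}^\mathrm{s}(ps)\,ds}\langle x\rangle^2e^{-i\int_0^t\tilde{V}_{|v|,s}^\mathrm{s}(ps)\,ds}$ into the same four pieces as in the proof of Lemma \ref{lem3.2}: $\langle x\rangle^2$, the Laplacian integral $\int_0^ts^2(\Delta\tilde{V}_{|v|,s}^\mathrm{s})(ps)\,ds$, the cross term $2(\int_0^ts(\nabla\tilde{V}_{|v|,s}^\mathrm{s})(ps)\,ds)\cdot x$, and the square of the gradient integral.

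Next I would bound $\|\nabla\tilde{V}_{|v|,t}^\mathrm{s}\|_{L^\infty}$ and $\|\Delta\tilde{V}_{|v|,t}^\mathrm{s}\|_{L^\infty}$ through the interpolation estimate \eqref{2.13}, obtaining the analogues of \eqref{3.6} and \eqref{3.7} in which the rates $\gamma_D+1,\gamma_D+1/2,\gamma_D$ are replaced by $\gamma_2,\gamma_1,\gamma_0$ respectively: the gradient yields two pieces governed by $\gamma_1$ and $\gamma_0$, the Laplacian three pieces governed by $\gamma_2,\gamma_1,\gamma_0$. Integrating in $s$ as in \eqref{3.8}--\eqref{3.9}, the gradient integral is controlled through its dominant $\gamma_1$-piece by $C|v|^{-\gamma_1\nu}|t|^{2-\gamma_1(2-\nu)}$, and the Laplacian integral splits into a $\gamma_2$-piece of size $C|v|^{-\gamma_2\nu}|t|^{3-\gamma_2(2-\nu)}$ together with a $\gamma_1$- and a $\gamma_0$-piece carrying extra factors $|v|^{-1}$ and $|v|^{-2}$.

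The decisive observation, and the point where the present hypotheses enter, is that every contribution except the $\gamma_2$-piece of the Laplacian integral is uniformly bounded. At $\nu=0$ the gradient integral behaves like the globally convergent $\int_0^{|t|}s(1+s)^{-2\gamma_1}\,ds$ because $\gamma_1>1$, so both the cross term $\|\int_0^ts(\nabla\tilde{V}_{|v|,s}^\mathrm{s})(ps)\,ds\|_{\mathscr{B}(L^2)}\|x\varPhi_0\|$ and the square term are $O(1)$. Likewise, at $\nu=0$ the $\gamma_1$- and $\gamma_0$-pieces of the Laplacian integral are $O(|v|^{-1}\int_0^{|t|}s^2(1+s)^{-2\gamma_1-1}\,ds)$ and $O(|v|^{-2}\int_0^{|t|}s^2(1+s)^{-2\gamma_0-2}\,ds)$, whose integrands decay like $s^{1-2\gamma_1}$ and $s^{-2\gamma_0}$ at infinity and are hence integrable since $\gamma_1>1$ and $\gamma_0>1/2$; the regularity of $\tilde{V}_{|v|,s}^\mathrm{s}$ at $s=0$ guaranteed by the $\langle s\rangle$ in the cut-off (cf. Remark \ref{rem2.1}) removes any singularity of these $s$-integrals at the origin. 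All these terms are absorbed into the constant $C$, so the only possible source of growth is the $\gamma_2$-piece $C|v|^{-\gamma_2\nu_1}|t|^{3-\gamma_2(2-\nu_1)}$, which produces precisely the bounds claimed.

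Finally I would carry out the case split at the threshold $\gamma_2=3/2$, which I expect to be the only delicate point. For $\gamma_2>3/2$ I take $\nu_1=0$: the integrand $s^2(1+s)^{-2\gamma_2}$ decays like $s^{2-2\gamma_2}$ with $2-2\gamma_2<-1$, so $\int_0^\infty s^2(1+s)^{-2\gamma_2}\,ds<\infty$ and the whole expression is bounded uniformly in $t$, giving \eqref{4.2}. For $\gamma_2<3/2$ the same integral is instead controlled by its upper endpoint, producing $|t|^{3-\gamma_2(2-\nu_1)}$ for each admissible $\nu_1$. The borderline $\gamma_2=3/2$ must be excluded from $\nu_1=0$, since there $\int_0^{|t|}s^2(1+s)^{-3}\,ds\sim\log|t|$ diverges; taking $\nu_1\neq0$ gives $2-\gamma_2(2-\nu_1)=-1+\tfrac{3}{2}\nu_1>-1$, restoring upper-endpoint dominance and keeping the bound in the stated form, which is exactly why $\nu_1\neq0$ is imposed when $\gamma_2=3/2$.
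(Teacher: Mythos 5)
Your proposal is correct and follows essentially the same route as the paper's own proof: the cut-off $\tilde{V}_{|v|,t}^\mathrm{s}$, the identity $M_{D,v}^\mathrm{s}(t)\varPhi_0=e^{-i\int_0^t\tilde{V}_{|v|,s}^\mathrm{s}(ps)\,ds}\varPhi_0$, the four-term expansion of the conjugated $\langle x\rangle^2$, the observation that $\gamma_1>1$ and $\gamma_0>1/2$ make every contribution except the $\gamma_2$-piece of the Laplacian integral uniformly bounded at $\nu=0$, and the case split at $\gamma_2=3/2$ with the logarithmic borderline forcing $\nu_1\neq0$ are all exactly the paper's argument (cf.\ \eqref{4.5}--\eqref{4.7}). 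No gaps.
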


\begin{proof}
Since the proof is quite similar to the one of Lemma \ref{lem3.2},
we will sketch it.

We first note that
since $\mathrm{supp}\,\hat{\varPhi}_0\subset
\bigl\{\xi\in\boldsymbol{R}^n\bigm||\xi|<\eta\bigr\}$,
\begin{equation}
M_{D,v}^\mathrm{s}(t)\varPhi_0
=e^{-i\int_0^t\tilde{V}_{|v|,s}^\mathrm{s}(ps)\,ds}\varPhi_0
\label{4.4}
\end{equation}
holds for $|v|\ge\eta/(3\lambda_1)$.
Hence, as in the proof of Lemma \ref{lem3.2},
\begin{align*}
&\|\langle x\rangle^2M_{D,v}^\mathrm{s}(t)\varPhi_0\|\\
\le{}&\|\langle x\rangle^2\varPhi_0\|+
\left\|\left(\int_0^ts^2(\Delta \tilde{V}_{|v|,s}^\mathrm{s})(ps)\,ds\right)\varPhi_0\right\|\\
&+2\left\|\left(\int_0^ts(\nabla \tilde{V}_{|v|,s}^\mathrm{s})(ps)\,ds\right)\cdot x\varPhi_0\right\|+
\left\|\left(\int_0^ts(\nabla \tilde{V}_{|v|,s}^\mathrm{s})(ps)\,ds\right)^2\varPhi_0\right\|
\end{align*}
can be obtained.
Now we will estimate $\|\nabla\tilde{V}_{|v|,t}^\mathrm{s}\|_{L^\infty}$
and $\|\Delta\tilde{V}_{|v|,t}^\mathrm{s}\|_{L^\infty}$.
In the same way as in the proof of Lemma \ref{lem3.2},
for $0\le\nu_1,\,\nu_2,\,\nu_3,\,\nu_4,\,\nu_5\le1$ and $|v|\ge1$,
\begin{equation}
\begin{split}
\|\nabla\tilde{V}_{|v|,t}^\mathrm{s}\|_{L^\infty}
\le{}&
C_1'(1+|v|^{\nu_1/(2-\nu_1)}|t|)^{-\gamma_1(2-\nu_1)}\\
&+C_2'|v|^{\nu_2/(2-\nu_2)-1}(1+|v|^{\nu_2/(2-\nu_2)}|t|)^{-\gamma_0(2-\nu_2)-1}
\end{split}\label{4.5}
\end{equation}
and
\begin{equation}
\begin{split}
\|\Delta\tilde{V}_{|v|,t}^\mathrm{s}\|_{L^\infty}
\le{}&
C_3'(1+|v|^{\nu_3/(2-\nu_3)}|t|)^{-\gamma_2(2-\nu_3)}\\
&+C_4'|v|^{\nu_4/(2-\nu_4)-1}(1+|v|^{\nu_4/(2-\nu_4)}|t|)^{-\gamma_1(2-\nu_4)-1}\\
&\quad+C_5'|v|^{2\nu_5/(2-\nu_5)-2}(1+|v|^{\nu_5/(2-\nu_5)}|t|)^{-\gamma_0(2-\nu_5)-2}
\end{split}\label{4.6}
\end{equation}
can be obtained (cf. \eqref{3.6} and \eqref{3.7}).
Since $-2\gamma_1+1<-1$ and $-2\gamma_0-1+1<-1$
by assumption, the estimate
\begin{equation}
\left\|\int_0^ts(\nabla \tilde{V}_{|v|,s}^\mathrm{s})(ps)\,ds\right\|_{\mathscr{B}(L^2)}\le C
\label{4.7}
\end{equation}
can be obtained immediately by \eqref{4.4} with $\nu_1=\nu_2=0$.
Since $-2\gamma_1-1+2<-1$ and $-2\gamma_0-2+2<-1$
by assumption, we see that
$|t|^2\times$(the second and third terms of the right-hand side
of \eqref{4.6} with $\nu_4=\nu_5=0$) are integrable
in $\boldsymbol{R}$. Hence,
we have only to watch
$$\tilde{I}=C_3'\int_0^{|t|}s^2(1+|v|^{\nu_3/(2-\nu_3)}s)^{-\gamma_2(2-\nu_3)}\,ds.$$
If $\gamma_2>3/2$, then $-2\gamma_2+2<-1$ holds, which implies that
there exists $C>0$ independent of $t$ such that $\tilde{I}\le C$ holds,
by taking $\nu_3=0$. On the other hand, if $\gamma_2\le3/2$,
then
$$\tilde{I}\le C_3''|v|^{-\gamma_2\nu_3}|t|^{3-\gamma_2(2-\nu_3)}$$
can be obtained easily, where only when $\gamma_2=3/2$,
we assume $\nu_3\not=0$ additionally.

Based on the above observations, the lemma can be proved.
\end{proof}

By virtue of Lemma \ref{lem4.3},
the following lemma can be obtained in the same way as in the proof of Lemma \ref{lem3.3}.

\begin{lem}\label{lem4.4} Let $v$ and $\varPhi_v$ be as
in Theorem \ref{thm4.1}, and $V^\mathrm{s}\in\tilde{\mathscr{V}}^\mathrm{s}(1/2,1,1)$. Then
\begin{equation}
\int_{-\infty}^\infty
\|V^\mathrm{vs}(x)M_D^\mathrm{s}(t)\varPhi_v\|\,dt
=O(|v|^{-1})
\end{equation}
holds as $|v|\to\infty$ for $V^\mathrm{vs}\in\mathscr{V}^\mathrm{vs}$.
\end{lem}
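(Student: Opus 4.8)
The plan is to follow the proof of Lemma~\ref{lem3.3} almost verbatim, with the long-range Dollard modifier $M_D(t)$ replaced by the short-range one $M_D^\mathrm{s}(t)$ (so that $U_D(t)$ is replaced by $U_D^\mathrm{s}(t)=e^{-itH_0}M_D^\mathrm{s}(t)$) and with Lemma~\ref{lem3.2} replaced by its analogue Lemma~\ref{lem4.3}. Writing $I=\|V^\mathrm{vs}(x)U_D^\mathrm{s}(t)\varPhi_v\|$ for the integrand, I would first use the Avron--Herbst formula \eqref{2.8} together with \eqref{2.9} to transfer the evolution onto the translated potential; inserting $(1+K_0)^{-1}(1+K_0)$, using $\varPhi_0=f(p)\varPhi_0$, and noting that $M_{D,v}^\mathrm{s}(t)$, being a function of $p$, commutes with $f(p)$, this gives
\[
I=\|V^\mathrm{vs}(x+vt+e_1t^2/2)(1+K_0)^{-1}e^{-itK_0}f(p)M_{D,v}^\mathrm{s}(t)(1+K_0)\varPhi_0\|.
\]

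Next I would split $I\le I_1+I_2+I_3$ exactly as in Lemma~\ref{lem3.3}, inserting $F(|x|\le\lambda_1|v||t|)$ or $F(|x|>\lambda_1|v||t|)$ just to the right of $f(p)$ and $F(|x|\ge3\lambda_1|v||t|)$ or $F(|x|<3\lambda_1|v||t|)$ just to the right of $(1+K_0)^{-1}$. The term $I_1$, which joins the spatially disjoint regions $\{|x|\le\lambda_1|v||t|\}$ and $\{|x|\ge3\lambda_1|v||t|\}$ across the free flow $e^{-itK_0}$, is handled by the Enss propagation estimate of Proposition~\ref{prop2.2} and bounded by $C(1+\lambda_1|v||t|)^{-2}$, whence $\int I_1\,dt=O(|v|^{-1})$. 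The term $I_3$ is treated by the geometric lower bound \eqref{2.12} on $|x+vt+e_1t^2/2|$, valid for $|t|\ge1$ on the support $|x|<4\lambda_1|v|\langle t\rangle$, combined with the short-range condition \eqref{1.8} on $V^\mathrm{vs}$; since $M_{D,v}^\mathrm{s}(t)$ is unitary this step is identical to the corresponding one in Lemma~\ref{lem3.3} and yields $\int I_3\,dt=O(|v|^{-1})$.

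The genuinely new ingredient, and the step I expect to be the main obstacle, is $I_2$. Here I would factor out $\|V^\mathrm{vs}(1+K_0)^{-1}\|_{\mathscr{B}(L^2)}$ and write the right-hand end as $F(|x|>\lambda_1|v||t|)\langle x\rangle^{-2}\cdot\langle x\rangle^2 M_{D,v}^\mathrm{s}(t)(1+K_0)\varPhi_0$, so that $\|F(|x|>\lambda_1|v||t|)\langle x\rangle^{-2}\|_{\mathscr{B}(L^2)}$ supplies the decay $C(1+\lambda_1|v||t|)^{-2}$ while the remaining weighted factor is controlled by Lemma~\ref{lem4.3}. When $\gamma_2>3/2$ Lemma~\ref{lem4.3} gives a $t$-uniform bound, so $I_2\le C(1+\lambda_1|v||t|)^{-2}$ integrates at once to $O(|v|^{-1})$. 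The delicate regime is $\gamma_2\le3/2$, where Lemma~\ref{lem4.3} only furnishes the growing bound $C(1+|v|^{-\gamma_2\nu_1}|t|^{3-\gamma_2(2-\nu_1)})$; substituting this and changing variables $u=\lambda_1|v|t$, the growing part contributes $C|v|^{2\gamma_2(1-\nu_1)-4}$, valid under the integrability constraint $\nu_1<2-2/\gamma_2$ forced by the tail in $u$. Since $\gamma_2>1$, choosing $\nu_1=0$ (or a small $\nu_1>0$ at the borderline $\gamma_2=3/2$, where Lemma~\ref{lem4.3} requires $\nu_1\neq0$) makes the exponent $2\gamma_2(1-\nu_1)-4\le-1$, so that $\int I_2\,dt=O(|v|^{-1})$ as well. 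The only real care is this interplay between the admissible range of the interpolation parameter $\nu_1$ and the target $|v|^{-1}$ decay in the borderline case; combining the three bounds then gives the claim.
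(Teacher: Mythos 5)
Your proposal is correct and follows essentially the same route as the paper: the paper's proof simply reuses the decomposition $I\le I_1+I_2+I_3$ from Lemma~\ref{lem3.3} and replaces the weighted bound of Lemma~\ref{lem3.2} by that of Lemma~\ref{lem4.3} in the $I_2$ term, arriving at the same exponent $2\gamma_2(1-\nu_1)-4$ under the same constraint $\nu_1<2-2/\gamma_2$. The only remark is that your ``$\le-1$'' conclusion at $\nu_1=0$ rests on $\gamma_2\le3/2$ (the regime you are in) rather than on $\gamma_2>1$ alone, which instead is what makes $\nu_1=0$ admissible; your explicit treatment of the borderline $\gamma_2=3/2$ via a small $\nu_1>0$ is if anything slightly more careful than the paper, which restricts to $\gamma_2<3/2$.
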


\begin{proof}
We have only to 
mention some changes compared to the proof of Lemma \ref{lem3.3}:
We consider the case where $\gamma_2<3/2$ only.
The estimate $I_2$ in the proof of Lemma \ref{lem3.3} has to be
replaced by
$$I_2\le C(1+\lambda_1|v||t|)^{-2}(1+|v|^{-\gamma_2\nu_1}|t|^{3-\gamma_2(2-\nu_1)})$$
for $0\le\nu_1\le1$.
Therefore
$$\int_{-\infty}^\infty I_2\,dt=O(|v|^{-1})+
O(|v|^{-\gamma_2\nu_1-\{4-\gamma_2(2-\nu_1)\}})$$
can be obtained by an appropriate change of variables
under the condition
$-2+3-\gamma_2(2-\nu_1)<-1$, i.e.
$\nu_1<2-2/\gamma_2$.
Since $0<2-2/\gamma_2<2/3$ and
$$\inf_{0\le\nu_1<2-2/\gamma_2}(-\gamma_2\nu_1-\{4-\gamma_2(2-\nu_1)\})=-2\gamma_2,$$
we obtain
$$\int_{-\infty}^\infty I_2\,dt=O(|v|^{-1}).$$

Based on the above observations, the
lemma can be proved.
\end{proof}

The following lemma can be obtained as in the proof of Lemma \ref{lem3.5}:

\begin{lem}\label{lem4.5}
Let $v$ and $\varPhi_v$ be as in Theorem \ref{thm4.1}, $\epsilon>0$,
and $V^\mathrm{s}\in\tilde{\mathscr{V}}^\mathrm{s}(1/2,1,1)$.
Then
\begin{equation}
\int_{-\infty}^\infty
\|\{V^\mathrm{s}(x)-V^\mathrm{s}(pt-e_1t^2/2)\}U_D^\mathrm{s}(t)\varPhi_v\|\,dt
=O(|v|^{\max\{-1,-2(\gamma_2-1)+\epsilon\}})
\label{4.9}
\end{equation}
holds as $|v|\to\infty$.
\end{lem}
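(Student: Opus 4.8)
The plan is to follow the proof of Lemma \ref{lem3.5} line by line, with $V^\mathrm{s}$ and its Dollard-type modifier $M_D^\mathrm{s}(t)$ playing the roles of $V^\mathrm{l}$ and $M_D(t)$. First I would record the analog of \eqref{3.18}: by the Avron-Herbst formula \eqref{2.8} together with $e^{-itH_0}F(p)=F(p-e_1t)e^{-itH_0}$, one has $e^{-itH_0}i\frac{d}{dt}(M_D^\mathrm{s}(t))=V^\mathrm{s}(pt-e_1t^2/2)U_D^\mathrm{s}(t)$, which singles out $V^\mathrm{s}(pt-e_1t^2/2)$ as the comparison term. Writing $I=\|\{V^\mathrm{s}(x)-V^\mathrm{s}(pt-e_1t^2/2)\}U_D^\mathrm{s}(t)\varPhi_v\|$ and applying \eqref{2.8}, I would recast $I$ as $\|\{V^\mathrm{s}(x+vt+e_1t^2/2)-V^\mathrm{s}(pt+vt+e_1t^2/2)\}e^{-itK_0}M_{D,v}^\mathrm{s}(t)\varPhi_0\|$, then insert the cut-off potential $\tilde{V}_{|v|,t}^\mathrm{s}(x)$ from the proof of Lemma \ref{lem2.4} to obtain, for $|v|\ge\eta/(3\lambda_1)$, the splitting $I\le\|\bar{V}_{|v|,t}^\mathrm{s}(x)e^{-itK_0}f(p)M_{D,v}^\mathrm{s}(t)\varPhi_0\|+\|\{\tilde{V}_{|v|,t}^\mathrm{s}(x)-\tilde{V}_{|v|,t}^\mathrm{s}(pt)\}e^{-itK_0}M_{D,v}^\mathrm{s}(t)\varPhi_0\|$.

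The first term I would treat exactly as in Lemmas \ref{lem3.3}--\ref{lem3.5}: the $I_1,I_2$ pieces are controlled by Proposition \ref{prop2.2} and the moment bound of Lemma \ref{lem4.3}, while $I_3=0$ by the support property of $1-g$, so its $t$-integral is $O(|v|^{-1})$. For the second term I would use the Baker-Campbell-Hausdorff formula to write $\tilde{V}_{|v|,t}^\mathrm{s}(x)-\tilde{V}_{|v|,t}^\mathrm{s}(pt)$ as a gradient piece contracted with $x-pt$ plus $\tfrac{i}{2}t$ times a Laplacian piece, exactly as in the display preceding \eqref{3.20}. Using $e^{itK_0}(x-pt)e^{-itK_0}=x$ together with \eqref{4.4} and the conjugation identity of \eqref{3.5} now applied to $\tilde{V}^\mathrm{s}$, this bounds the second term by $\|\nabla\tilde{V}_{|v|,t}^\mathrm{s}\|_{L^\infty}(\|x\varPhi_0\|+\|(\int_0^ts(\nabla\tilde{V}_{|v|,s}^\mathrm{s})(ps)\,ds)\varPhi_0\|)+\tfrac12|t|\|\Delta\tilde{V}_{|v|,t}^\mathrm{s}\|_{L^\infty}\|\varPhi_0\|$. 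The decisive simplification relative to the long-range case is that, since $\gamma_1>1$, estimate \eqref{4.7} shows $\int_0^ts(\nabla\tilde{V}_{|v|,s}^\mathrm{s})(ps)\,ds$ is bounded in $\mathscr{B}(L^2)$; hence the gradient contribution is governed solely by $\int\|\nabla\tilde{V}_{|v|,t}^\mathrm{s}\|_{L^\infty}\,dt$, which by \eqref{4.5} and the usual change of variables is $O(|v|^{-1})$ (taking $\nu_1=1$, admissible since $\gamma_1>1$).

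The crux, and the only step needing real care, is the Laplacian integral $\int|t|\|\Delta\tilde{V}_{|v|,t}^\mathrm{s}\|_{L^\infty}\,dt$. Feeding \eqref{4.6} into it and optimizing over the interpolation parameters produces three contributions: the $\gamma_2$-term yields $O(|v|^{-2\nu_3/(2-\nu_3)})$ with $\nu_3<2-2/\gamma_2$, whose infimum is $-2(\gamma_2-1)$; the $\gamma_1$-term yields $O(|v|^{-1-\nu_4/(2-\nu_4)})$ with $\nu_4<2-1/\gamma_1$, whose infimum is $-2$; and the $\gamma_0$-term yields $O(|v|^{-2})$. I would then verify that the $\gamma_2$-term is the slowest-decaying, i.e. $-2(\gamma_2-1)\ge-2$, which is exactly $\gamma_2\le2$ (and in the regime $\gamma_2>2$ the surviving rate $-2$ already beats the claimed $-1$). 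Thus the Laplacian integral is $O(|v|^{-2(\gamma_2-1)+\epsilon})$, and combining with the $O(|v|^{-1})$ contributions above yields \eqref{4.9}. The main obstacle is purely the bookkeeping in this optimization---checking that the admissible ranges of the $\nu_j$'s contain the values forcing the stated infima and that the $\gamma_2$-term dominates---since the operator-theoretic scaffolding is inherited unchanged from Lemma \ref{lem3.5}.
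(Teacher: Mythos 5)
Your proposal is correct and follows essentially the same route as the paper's proof: the Avron--Herbst reduction, the cut-off splitting with $\bar{V}_{|v|,t}^\mathrm{s}$ and $\tilde{V}_{|v|,t}^\mathrm{s}$, the Baker--Campbell--Hausdorff expansion with the gradient contribution tamed by the uniform bound \eqref{4.7}, and the $\nu$-optimization of the Laplacian integral yielding the rate $-2(\gamma_2-1)+\epsilon$. Your explicit remark that for $\gamma_2>2$ the rate saturates at $-2$ (which still beats $-1$) is a small bookkeeping point the paper glosses over, but it does not change the argument.
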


\begin{proof}
We have only to 
mention some changes compared to the proof of Lemma \ref{lem3.5}:
For the sake of brevity, we put $I=\|\{V^\mathrm{s}(x)-V^\mathrm{s}(pt-e_1t^2/2)\}U_D^\mathrm{s}(t)\varPhi_v\|$.
$I$ can be estimated as
\begin{equation}
\begin{split}
I={}&\|\{V^\mathrm{s}(x+vt+e_1t^2/2)-V^\mathrm{s}(pt+vt+e_1t^2/2)\}e^{-itK_0}M_{D,v}^\mathrm{s}(t)\varPhi_0\|\\
\le{}&\|\bar{V}_{|v|,t}^\mathrm{s}(x)e^{-itK_0}
f(p)M_{D,v}^\mathrm{s}(t)\varPhi_0\|\\
&+
\|\{\tilde{V}_{|v|,t}^\mathrm{s}(x)-\tilde{V}_{|v|,t}^\mathrm{s}(pt)\}e^{-itK_0}
M_{D,v}^\mathrm{s}(t)\varPhi_0\|
\end{split}\label{4.10}
\end{equation}
for $|v|\ge\eta/(3\lambda_1)$.
As for the first term in the inequality \eqref{4.10}, the estimate
\begin{equation}
\int_{-\infty}^\infty\|\bar{V}_{|v|,t}^\mathrm{s}(x)e^{-itK_0}
f(p)M_{D,v}^\mathrm{s}(t)\varPhi_0\|\,dt=O(|v|^{-1})
\label{4.11}
\end{equation}
can be obtained as in the proof of Lemma \ref{lem3.5}.
On the other hand, as for the second term of the inequality \eqref{4.10},
we see that
\begin{align*}
&\|\{\tilde{V}_{|v|,t}^\mathrm{s}(x)-\tilde{V}_{|v|,t}^\mathrm{s}(pt)\}e^{-itK_0}
M_{D,v}^\mathrm{s}(t)\varPhi_0\|\\
\le{}&\|\nabla\tilde{V}_{|v|,t}^\mathrm{s}\|_{L^\infty}\left(\|x\varPhi_0\|+\left\|\left(\int_0^ts(\nabla \tilde{V}_{|v|,s}^\mathrm{s})(ps)\,ds\right)\varPhi_0\right\|\right)\\
&+\frac{1}{2}|t|\|\Delta\tilde{V}_{|v|,t}^\mathrm{s}\|_{L^\infty}\|\varPhi_0\|\\\le{}&C_1\|\nabla\tilde{V}_{|v|,t}^\mathrm{s}\|_{L^\infty}
+C_2|t|\|\Delta\tilde{V}_{|v|,t}^\mathrm{s}\|_{L^\infty}
\end{align*}
holds, by virtue of the Baker-Campbell-Hausdorff formula.
Here we used \eqref{4.7}.
The estimate
\begin{equation}
\int_{-\infty}^\infty C_1\|\nabla\tilde{V}_{|v|,t}^\mathrm{s}\|_{L^\infty}\,dt=O(|v|^{-1})
\label{4.12}
\end{equation}
can be obtained in the same way as in the proof of Lemma \ref{lem2.4}
(cf. \eqref{2.15}).
In order to estimate
$$\int_{-\infty}^\infty C_2|t|\|\Delta\tilde{V}_{|v|,t}^\mathrm{s}\|_{L^\infty}
\,dt,$$
we consider
\begin{align*}
\tilde{I}_{2,1}={}&\int_{-\infty}^\infty(1+|v|^{\nu_1/(2-\nu_1)}|t|)^{-\gamma_2(2-\nu_1)}|t|\,dt=O(|v|^{-2\nu_1/(2-\nu_1)}),\\
\tilde{I}_{2,2}={}&\int_{-\infty}^\infty|v|^{\nu_2/(2-\nu_2)-1}(1+|v|^{\nu_2/(2-\nu_2)}|t|)^{-\gamma_1(2-\nu_2)-1}|t|\,dt\\
={}&O(|v|^{\nu_2/(2-\nu_2)-1-2\nu_2/(2-\nu_2)})=O(|v|^{-1-\nu_2/(2-\nu_2)}),\\
\tilde{I}_{2,3}={}&\int_{-\infty}^\infty|v|^{2\nu_3/(2-\nu_3)-2}(1+|v|^{\nu_3/(2-\nu_3)}|t|)^{-\gamma_0(2-\nu_3)-2}|t|\,dt\\
={}&O(|v|^{2\nu_3/(2-\nu_3)-2-2\nu_3/(2-\nu_3)})=O(|v|^{-2})
\end{align*}
by \eqref{4.6}, which can be obtained by an appropriate
change of variables under the conditions
$-\gamma_2(2-\nu_1)+1<-1$,
$-\gamma_1(2-\nu_2)-1+1<-1$ and
$-\gamma_0(2-\nu_3)-2+1<-1$, i.e.
$\nu_1<2-2/\gamma_2$, $\nu_2<2-1/\gamma_1$ and
$\nu_3<2$. Since $0<2-2/\gamma_2<2/3$,
$2-1/\gamma_1>1$,
$$\inf_{0\le\nu_1<2-2/\gamma_2}(-2\nu_1/(2-\nu_1))=-2(\gamma_2-1),$$
and $-2(\gamma_2-1)\ge-2\gamma_1>-2$, we have
\begin{equation}
\int_{-\infty}^\infty|t|\|\Delta\tilde{V}_{|v|,t}^\mathrm{s}\|_{L^\infty}\|\varPhi_0\|\,dt=O(|v|^{-2(\gamma_2-1)+\epsilon})
\label{4.13}
\end{equation}
with $\epsilon>0$.
By \eqref{4.11}, \eqref{4.12} and \eqref{4.13}, \eqref{4.9} can be obtained.
\end{proof}

By modifying the argument in \cite{N1}, we introduce auxiliary wave operators
$$\Omega_D^{\mathrm{s},\pm}=\slim_{t\to\pm\infty}
e^{itH}U_D^\mathrm{s}(t)$$
with $H=H_0+V^\mathrm{vs}+V^\mathrm{s}$.
Then we see that
$$\Omega_D^{\mathrm{s},\pm}=W^\pm I_D^{\mathrm{s},\pm},\quad
I_D^{\mathrm{s},\pm}=\slim_{t\to\pm\infty}M_D^\mathrm{s}(t)=
e^{-i\int_0^{\pm\infty} V^\mathrm{s}(ps+e_1s^2/2)\,ds}$$
exist. Therefore, by Lemmas \ref{lem4.4} and \ref{lem4.5},
the following lemma can be
obtained as an improvement of Lemma \ref{lem2.5}. Thus we omit the proof.

\begin{lem}\label{lem4.6}
Let $v$ and $\varPhi_v$ be as in Theorem \ref{thm4.1} and $\epsilon>0$.
Then
\begin{equation}
\sup_{t\in\boldsymbol{R}}\|(e^{-itH}\Omega_D^{\mathrm{s},-}-U_D^\mathrm{s}(t))\varPhi_v\|=
O(|v|^{\max\{-1,-2(\gamma_2-1)+\epsilon\}})
\end{equation}
holds as $|v|\to\infty$ for $V^\mathrm{vs}\in\mathscr{V}^\mathrm{vs}$
and
$V^\mathrm{s}\in\tilde{\mathscr{V}}^\mathrm{s}(1/2,1,1)$.
\end{lem}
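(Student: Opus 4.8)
The plan is to prove this exactly as the corresponding Cook's-method (Duhamel) estimate behind Lemma~\ref{lem2.5}, the only structural difference being that the scalar Graf-type modifier $M_{G,v}^\mathrm{s}(t)$ is now replaced by the operator-valued Dollard-type modifier $M_D^\mathrm{s}(t)$, so the generator of $U_D^\mathrm{s}(t)$ must be computed with a little care. Since $\Omega_D^{\mathrm{s},-}=\slim_{s\to-\infty}e^{isH}U_D^\mathrm{s}(s)$ exists (as noted just before the lemma), I would set $g(\tau)=e^{i(\tau-t)H}U_D^\mathrm{s}(\tau)\varPhi_v$, so that $g(t)=U_D^\mathrm{s}(t)\varPhi_v$ and $\lim_{\tau\to-\infty}g(\tau)=e^{-itH}\Omega_D^{\mathrm{s},-}\varPhi_v$, and then write $g(t)-\lim_{\tau\to-\infty}g(\tau)=\int_{-\infty}^t g'(\tau)\,d\tau$.

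First I would compute the generator. From $U_D^\mathrm{s}(\tau)=e^{-i\tau H_0}M_D^\mathrm{s}(\tau)$ and $\frac{d}{d\tau}M_D^\mathrm{s}(\tau)=-iV^\mathrm{s}(p\tau+e_1\tau^2/2)M_D^\mathrm{s}(\tau)$ one gets $\frac{d}{d\tau}U_D^\mathrm{s}(\tau)=-iH_0U_D^\mathrm{s}(\tau)-ie^{-i\tau H_0}V^\mathrm{s}(p\tau+e_1\tau^2/2)M_D^\mathrm{s}(\tau)$. The crucial step is the Avron--Herbst identity, applied exactly as in \eqref{3.18}, namely $e^{-i\tau H_0}V^\mathrm{s}(p\tau+e_1\tau^2/2)=V^\mathrm{s}(p\tau-e_1\tau^2/2)e^{-i\tau H_0}$, which turns the shifted-momentum multiplier into the operator $V^\mathrm{s}(p\tau-e_1\tau^2/2)$ that is precisely what Lemma~\ref{lem4.5} estimates. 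Combining this with $H=H_0+V^\mathrm{vs}+V^\mathrm{s}$, the free and Stark terms cancel and
\[
iHU_D^\mathrm{s}(\tau)+\frac{d}{d\tau}U_D^\mathrm{s}(\tau)
=i\{V^\mathrm{vs}(x)+V^\mathrm{s}(x)-V^\mathrm{s}(p\tau-e_1\tau^2/2)\}U_D^\mathrm{s}(\tau).
\]

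Integrating $g'(\tau)=e^{i(\tau-t)H}\cdot i\{V^\mathrm{vs}(x)+V^\mathrm{s}(x)-V^\mathrm{s}(p\tau-e_1\tau^2/2)\}U_D^\mathrm{s}(\tau)\varPhi_v$ from $-\infty$ to $t$ and using that $e^{i(\tau-t)H}$ is unitary, I obtain
\[
\|(e^{-itH}\Omega_D^{\mathrm{s},-}-U_D^\mathrm{s}(t))\varPhi_v\|
\le\int_{-\infty}^t\|\{V^\mathrm{vs}(x)+V^\mathrm{s}(x)-V^\mathrm{s}(p\tau-e_1\tau^2/2)\}U_D^\mathrm{s}(\tau)\varPhi_v\|\,d\tau.
\]
Bounding the integral over $(-\infty,t]$ by the one over all of $\boldsymbol{R}$ makes the right-hand side independent of $t$, so the supremum is controlled by $\int_{-\infty}^\infty\|V^\mathrm{vs}(x)U_D^\mathrm{s}(\tau)\varPhi_v\|\,d\tau+\int_{-\infty}^\infty\|\{V^\mathrm{s}(x)-V^\mathrm{s}(p\tau-e_1\tau^2/2)\}U_D^\mathrm{s}(\tau)\varPhi_v\|\,d\tau$. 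Applying Lemma~\ref{lem4.4} to the first integral (giving $O(|v|^{-1})$) and Lemma~\ref{lem4.5} to the second (giving $O(|v|^{\max\{-1,-2(\gamma_2-1)+\epsilon\}})$), and noting $-2(\gamma_2-1)<0$ because $\gamma_2>1$, the two contributions combine to the asserted bound.

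The main point to get right -- indeed essentially the only content beyond bookkeeping -- is the generator computation, specifically the use of the Avron--Herbst identity to replace $V^\mathrm{s}(p\tau+e_1\tau^2/2)$ by the multiplication-free $V^\mathrm{s}(p\tau-e_1\tau^2/2)$ matching Lemma~\ref{lem4.5}. Unlike the scalar Graf modifier of Lemma~\ref{lem2.5}, $M_D^\mathrm{s}(t)$ is a function of $p$ and does not commute with $x$, so it is exactly this cancellation that makes the sharper $\gamma_2$-gain (rather than only the $\gamma_1$-gain of Lemma~\ref{lem2.5}) available here. Once the integrand is correctly identified, all the decay is inherited from Lemmas~\ref{lem4.4} and \ref{lem4.5}, and no further optimization argument is required at this stage; this is why the conclusion may be stated as an improvement of Lemma~\ref{lem2.5}.
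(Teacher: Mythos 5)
Your proposal is correct and follows exactly the route the paper intends: the paper omits the proof but states that the lemma follows from Lemmas~\ref{lem4.4} and \ref{lem4.5} as an improvement of Lemma~\ref{lem2.5}, i.e.\ by Cook's method, and your generator computation (in particular the Avron--Herbst conjugation $e^{-i\tau H_0}V^\mathrm{s}(p\tau+e_1\tau^2/2)=V^\mathrm{s}(p\tau-e_1\tau^2/2)e^{-i\tau H_0}$, the exact analogue of \eqref{3.18}) produces precisely the integrand those two lemmas estimate. No gaps.
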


Now we will show Theorem \ref{thm4.1}:

\begin{proof}[Proof of Theorem \ref{thm4.1}]
Since the proof is quite similar to the one of
Theorem \ref{thm2.1}, we give its sketch only.

Suppose that $V^\mathrm{vs}\in\mathscr{V}^\mathrm{vs}$
and $V^\mathrm{s}\in\tilde{\mathscr{V}}^\mathrm{s}(1/2,1,5/4)$.
We first note that $S$ is represented as
$$S=(W^+)^*W^-=I_D^{\mathrm{s},+}(\Omega_D^{\mathrm{s},+})^*\Omega_D^{\mathrm{s},-}(I_D^{\mathrm{s},-})^*$$
Unlike $I_{G,v}^{\mathrm{s},\pm}$, $I_D^{\mathrm{s},\pm}$ do not
commute with $\Omega_D^{\mathrm{s},\pm}$.
Putting
$$I_D^\mathrm{s}=I_D^{\mathrm{s},+}(I_D^{\mathrm{s},-})^*=
e^{-i\int_{-\infty}^\infty V^\mathrm{s}(ps+e_1s^2/2)\,ds}$$
and noting $[S,p_j]=[S-I_D^\mathrm{s},p_j-v_j]$, $(p_j-v_j)\varPhi_v=(p_j\varPhi_0)_v$ and
\begin{align*}
i(S-I_D^\mathrm{s})\varPhi_v={}&I_D^{\mathrm{s},+}i(\Omega_D^{\mathrm{s},+}-\Omega_D^{\mathrm{s},-})^*\Omega_D^{\mathrm{s},-}(I_D^{\mathrm{s},-})^*\varPhi_v\\
={}&I_D^{\mathrm{s},+}\int_{-\infty}^\infty U_D^\mathrm{s}(t)^*V_{t,D}e^{-itH}\Omega_D^{\mathrm{s},-}((I_{D,v}^{\mathrm{s},-})^*\varPhi_0)_v\,dt
\end{align*}
with
\begin{align*}
&V_{t,D}=V^\mathrm{vs}(x)+V^\mathrm{s}(x)-V^\mathrm{s}(pt-e_1t^2/2),\\
&(I_{D,v}^{\mathrm{s},\pm})^*=e^{-iv\cdot x}(I_D^{\mathrm{s},\pm})^*e^{iv\cdot x}=
e^{i\int_0^{\pm\infty}V^\mathrm{s}(ps+vs+e_1s^2/2)\,ds},
\end{align*}
we have
$$|v|(i[S,p_j]\varPhi_v,\varPsi_v)=I_D(v)+R_D(v)$$
with
\begin{align*}
&I_D(v)\\
={}&|v|\int_{-\infty}^\infty
[(V_{t,D}U_D^\mathrm{s}(t)(p_j(I_{D,v}^{\mathrm{s},-})^*\varPhi_0)_v,U_D^\mathrm{s}(t)((I_{D,v}^{\mathrm{s},+})^*\varPsi_0)_v)\\
{}&\qquad-(V_{t,D}U_D^\mathrm{s}(t)((I_{D,v}^{\mathrm{s},-})^*\varPhi_0)_v,U_D^\mathrm{s}(t)(p_j(I_{D,v}^{\mathrm{s},+})^*\varPsi_0)_v]\,dt,\\
&R_D(v)\\
={}&|v|\int_{-\infty}^\infty
[(e^{-itH}\Omega_D^{\mathrm{s},-}-U_D^\mathrm{s}(t))(p_j(I_{D,v}^{\mathrm{s},-})^*\varPhi_0)_v,V_{t,D}U_D^\mathrm{s}(t)((I_{D,v}^{\mathrm{s},+})^*\varPsi_0)_v)\\
{}&\qquad-(e^{-itH}\Omega_D^{\mathrm{s},-}-U_D^\mathrm{s}(t))((I_{D,v}^{\mathrm{s},-})^*\varPhi_0)_v,V_{t,D}U_D^\mathrm{s}(t)(p_j(I_{D,v}^{\mathrm{s},+})^*\varPsi_0)_v]\,dt.
\end{align*}
Here we used that $p_j$ commutes with $(I_{D,v}^{\mathrm{s},\pm})^*$.
By Lemmas \ref{lem4.4}, \ref{lem4.5} and \ref{lem4.6},
we have 
\begin{equation}
|R_D(v)|=O(|v|^{1+2\max\{-1,-2(\gamma_2-1)+\epsilon\}})
=O(|v|^{\max\{-1,5-4\gamma_2+2\epsilon\}}).
\label{4.16}
\end{equation}
Here we used $\mathrm{supp}\,\mathscr{F}[(I_{D,v}^{\mathrm{s},-})^*\varPhi_0]
=\mathrm{supp}\,\mathscr{F}[\varPhi_0]$ and
$\mathrm{supp}\,\mathscr{F}[(I_{D,v}^{\mathrm{s},+})^*\varPsi_0]
=\mathrm{supp}$
$\mathscr{F}[\varPsi_0]$.
Then we need the condition $5-4\gamma_2<0$
in order to get $\lim_{|v|\to\infty}R_D(v)=0$.
Using the Avron-Herbst formula \eqref{2.8} and \eqref{2.9},
$I_D(v)$ is rewritten as
\begin{align*}
I_D(v)
={}&|v|\int_{-\infty}^\infty
[(\hat{V}_{t,D}e^{-itK_0}M_{D,v}^\mathrm{s}(t)p_j(I_{D,v}^{\mathrm{s},-})^*\varPhi_0,e^{-itK_0}M_{D,v}^\mathrm{s}(t)(I_{D,v}^{\mathrm{s},+})^*\varPsi_0)\\
{}&\qquad-(\hat{V}_{t,D}e^{-itK_0}M_{D,v}^\mathrm{s}(t)(I_{D,v}^{\mathrm{s},-})^*\varPhi_0,e^{-itK_0}M_{D,v}^\mathrm{s}(t)p_j(I_{D,v}^{\mathrm{s},+})^*\varPsi_0]\,dt
\end{align*}
with
$$\hat{V}_{t,D}
=V^\mathrm{vs}(x+vt+e_1t^2/2)+V^\mathrm{s}(x+vt+e_1t^2/2)-V^\mathrm{s}(pt+vt+e_1t^2/2).$$
Since
$$[V^\mathrm{s}(x+vt+e_1t^2/2)-V^\mathrm{s}(pt+vt+e_1t^2/2),p_j]\\
=i(\partial_jV^\mathrm{s})(x+vt+e_1t^2/2)$$
and $-\gamma_1<-1$,
$I_D(v)$ is rewritten as
\begin{align*}
I_D(v)
={}&|v|\int_{-\infty}^\infty
[(V^\mathrm{vs}(x+vt+e_1t^2/2)e^{-itK_0}M_{D,v}^\mathrm{s}(t)p_j(I_{D,v}^{\mathrm{s},-})^*\varPhi_0,\\
&\qquad\qquad\qquad\qquad\qquad\qquad\qquad e^{-itK_0}M_{D,v}^\mathrm{s}(t)(I_{D,v}^{\mathrm{s},+})^*\varPsi_0)\\
{}&\qquad-(V^\mathrm{vs}(x+vt+e_1t^2/2)e^{-itK_0}M_{D,v}^\mathrm{s}(t)(I_{D,v}^{\mathrm{s},-})^*\varPhi_0,\\
&\qquad\qquad\qquad\qquad\qquad\qquad\qquad e^{-itK_0}M_{D,v}^\mathrm{s}(t)p_j(I_{D,v}^{\mathrm{s},+})^*\varPsi_0)]\,dt\\
&+|v|\int_{-\infty}^\infty
(i(\partial_jV^\mathrm{s})(x+vt+e_1t^2/2)e^{-itK_0}M_{D,v}^\mathrm{s}(t)(I_{D,v}^{\mathrm{s},-})^*\varPhi_0,\\
&\qquad\qquad\qquad\qquad\qquad\qquad\qquad e^{-itK_0}M_{D,v}^\mathrm{s}(t)(I_{D,v}^{\mathrm{s},+})^*\varPsi_0)\,dt\\
={}&\int_{-\infty}^\infty l_{D,v}(\tau)\,d\tau
\end{align*}
with
\begin{align*}
&l_{D,v}(\tau)\\
={}&
(V^\mathrm{vs}(x+\hat{v}\tau+e_1(\tau/|v|)^2/2)e^{-i(\tau/|v|)K_0}M_{D,v}^\mathrm{s}(\tau/|v|)p_j(I_{D,v}^{\mathrm{s},-})^*\varPhi_0,\\
&\qquad\qquad\qquad\qquad\qquad\qquad\qquad e^{-i(\tau/|v|)K_0}M_{D,v}^\mathrm{s}(\tau/|v|)(I_{D,v}^{\mathrm{s},+})^*\varPsi_0)\\
{}&-(V^\mathrm{vs}(x+\hat{v}\tau+e_1(\tau/|v|)^2/2)e^{-i(\tau/|v|)K_0}M_{D,v}^\mathrm{s}(\tau/|v|)(I_{D,v}^{\mathrm{s},-})^*\varPhi_0,\\
&\qquad\qquad\qquad\qquad\qquad\qquad\qquad\quad e^{-i(\tau/|v|)K_0}M_{D,v}^\mathrm{s}(\tau/|v|)p_j(I_{D,v}^{\mathrm{s},+})^*\varPsi_0)\\
{}&\quad+i((\partial_j V^\mathrm{s})(x+\hat{v}\tau+e_1(\tau/|v|)^2/2)e^{-i(\tau/|v|)K_0}M_{D,v}^\mathrm{s}(\tau/|v|)(I_{D,v}^{\mathrm{s},-})^*\varPhi_0,\\
&\qquad\qquad\qquad\qquad\qquad\qquad\qquad\qquad e^{-i(\tau/|v|)K_0}M_{D,v}^\mathrm{s}(\tau/|v|)(I_{D,v}^{\mathrm{s},+})^*\varPsi_0).
\end{align*}
Here we note that
\begin{equation}
\slim_{|v|\to\infty}(I_{D,v}^{\mathrm{s},\pm})^*=\mathrm{Id}.
\label{4.17}
\end{equation}
In fact, for $|\xi|<\eta$ and $|v|\ge\eta/(4\lambda_1)$,
$$|\xi s+vs+e_1s^2/2|\ge\max\{c_1|v||s|,c_2|s|^2\}$$
holds by \eqref{2.12}. This yields that $V^\mathrm{s}(\xi s+vs+e_1s^2/2)$ is integrable
on $\boldsymbol{R}$ because of $\gamma_0>1/2$, and that
$\lim_{|v|\to\infty}V^\mathrm{s}(\xi s+vs+e_1s^2/2)=0$ for any
$s\not=0$. Hence, the Lebesgue dominated convergence theorem
yields
$$\lim_{|v|\to\infty}\int_0^{\pm\infty}V^\mathrm{s}(\xi s+vs+e_1s^2/2)\,ds=0,$$
which implies \eqref{4.17}. We also note
$$\slim_{|v|\to\infty}e^{-i(\tau/|v|)K_0}M_{D,v}^\mathrm{s}(\tau/|v|)=\mathrm{Id},$$
which can be shown easily. Since in the proof of Lemma \ref{lem2.3},
$|l_{D,v}(\tau)|$ can be estimated as
\begin{align*}
|l_{D,v}(\tau)|\le{}&C\{\|V^\mathrm{vs}(x)(1+K_0)^{-1}
F(|x|\ge\lambda_1|\tau|)\|_{\mathscr{B}(L^2)}\\
&\qquad\qquad\qquad+(1+|\tau|)^{-2}+(1+|\tau|)^{-\gamma_1}\},
\end{align*}
whose right-hand side is $|v|$-independent and integrable on $\boldsymbol{R}$.
Therefore we obtain
\begin{align*}
\lim_{|v|\to\infty}I_D(v)=
\int_{-\infty}^\infty[{}&
(V^\mathrm{vs}(x+\hat{v}\tau)p_j\varPhi_0,\varPsi_0)-
(V^\mathrm{vs}(x+\hat{v}\tau)\varPhi_0,p_j\varPsi_0)\\
{}&\quad+i((\partial_j V^\mathrm{s})(x+\hat{v}\tau)\varPhi_0,\varPsi_0)]\,d\tau
\end{align*}
by the Lebesgue dominated convergence theorem. This yields the theorem.
\end{proof}

As for Theorem \ref{thm4.2}, we need the following lemmas,
which are the versions of Lemmas \ref{lem3.2}, \ref{lem3.3},
\ref{lem3.4}, \ref{lem3.5} and \ref{lem3.6}
in the case where $V^\mathrm{s}\in\tilde{\mathscr{V}}^\mathrm{s}(1/2,1,1)$.
Since their proofs are yielded by the above observations,
we omit them: We will introduce
\begin{equation}
U_{D,D}^\mathrm{s}(t)=e^{-itH_0}M_{D,D}^\mathrm{s}(t),\quad
M_{D,D}^\mathrm{s}(t)=M_D(t)M_D^\mathrm{s}(t).
\end{equation}
Here we note that $M_D(t)$ does commute with $M_D^\mathrm{s}(t)$.

\begin{lem}\label{lem4.7}
Let $v$ and $\varPhi_v$ be as in Theorem \ref{thm4.2},
$V^\mathrm{s}\in\tilde{\mathscr{V}}^\mathrm{s}(1/2,1,1)$,
and $V^\mathrm{l}\in\mathscr{V}_D^\mathrm{l}(1/4)$ with $\gamma_D<1/2$.
If $\gamma_2>3/2$,
then, for $0\le\nu_1,\,\nu_2,\,\nu_3\le1$,
there exists a positive constant $C$ such that
\begin{equation}
\begin{split}
&\|\langle x\rangle^2M_{D,D}^\mathrm{s}(t)\varPhi_v\|
=\|\langle x\rangle^2M_{D,D,v}^\mathrm{s}(t)\varPhi_0\|\\
\le{}&
C(1+|v|^{-(2\gamma_D+1)\nu_1}|t|^{4-(2\gamma_D+1)(2-\nu_1)}
+|v|^{-(\gamma_D+1)\nu_2}|t|^{3-(\gamma_D+1)(2-\nu_2)}\\
&\qquad\qquad
+|v|^{-(\gamma_D+1/2)\nu_3}|t|^{2-(\gamma_D+1/2)(2-\nu_3)})
\end{split}
\end{equation}
holds as $|v|\to\infty$, where
$M_{D,D,v}^\mathrm{s}(t)=e^{-iv\cdot x}M_{D,D}^\mathrm{s}(t)e^{iv\cdot x}$.
On the other hand, if $\gamma_2\le3/2$, then, for $0\le\nu_1,\,\nu_2,\,\nu_3,\,\nu_4\le1$,
there exists a positive constant $C$ such that
\begin{equation}
\begin{split}
&\|\langle x\rangle^2M_{D,D}^\mathrm{s}(t)\varPhi_v\|
=\|\langle x\rangle^2M_{D,D,v}^\mathrm{s}(t)\varPhi_0\|\\
\le{}&
C(1+|v|^{-(2\gamma_D+1)\nu_1}|t|^{4-(2\gamma_D+1)(2-\nu_1)}
+|v|^{-(\gamma_D+1)\nu_2}|t|^{3-(\gamma_D+1)(2-\nu_2)}\\
&\qquad\qquad
+|v|^{-(\gamma_D+1/2)\nu_3}|t|^{2-(\gamma_D+1/2)(2-\nu_3)}
+|v|^{-\gamma_2\nu_4}|t|^{3-\gamma_2(2-\nu_4)})
\end{split}
\end{equation}
holds as $|v|\to\infty$, where only when $\gamma_2=3/2$,
we assume $\nu_1\not=0$ additionally.
\end{lem}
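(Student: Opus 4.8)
The plan is to exploit that $M_D(t)$ and $M_D^\mathrm{s}(t)$ are both functions of $p$ and $t$ only, so they commute and their product is again a single Dollard-type modifier whose phase is the \emph{sum} of the two phases. Conjugating by $e^{-iv\cdot x}$, I would write $M_{D,D,v}^\mathrm{s}(t)=M_{D,v}(t)M_{D,v}^\mathrm{s}(t)=e^{-i\int_0^t(V^\mathrm{l}+V^\mathrm{s})(ps+vs+e_1s^2/2)\,ds}$, and then, exactly as in the proofs of Lemmas \ref{lem3.2} and \ref{lem4.3}, replace $V^\mathrm{l}$ and $V^\mathrm{s}$ inside the phase by their cut-offs $\tilde V_{|v|,s}^\mathrm{l}$ and $\tilde V_{|v|,s}^\mathrm{s}$; this is legitimate for $|v|\ge\eta/(3\lambda_1)$ on $\mathrm{supp}\,\hat\varPhi_0$ by \eqref{3.4} and \eqref{4.4}, so that $M_{D,D,v}^\mathrm{s}(t)\varPhi_0=e^{-i\int_0^t(\tilde V_{|v|,s}^\mathrm{l}+\tilde V_{|v|,s}^\mathrm{s})(ps)\,ds}\varPhi_0$. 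Applying the conjugation identity \eqref{3.5} to this summed phase reproduces verbatim the four-term expansion of $e^{i\int\cdots}\langle x\rangle^2e^{-i\int\cdots}$ from the proof of Lemma \ref{lem3.2}, with $\nabla\tilde V_{|v|,s}^\mathrm{l}$ and $\Delta\tilde V_{|v|,s}^\mathrm{l}$ everywhere replaced by $\nabla(\tilde V_{|v|,s}^\mathrm{l}+\tilde V_{|v|,s}^\mathrm{s})$ and $\Delta(\tilde V_{|v|,s}^\mathrm{l}+\tilde V_{|v|,s}^\mathrm{s})$.

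The next step is to split each of these integrals into its $V^\mathrm{l}$-part and its $V^\mathrm{s}$-part by linearity and estimate them separately. For the $V^\mathrm{l}$-parts I would quote the bounds \eqref{3.8} for $\|\int_0^t s(\nabla\tilde V_{|v|,s}^\mathrm{l})(ps)\,ds\|_{\mathscr{B}(L^2)}$ and \eqref{3.9} for $\|\int_0^t s^2(\Delta\tilde V_{|v|,s}^\mathrm{l})(ps)\,ds\|_{\mathscr{B}(L^2)}$ established in the proof of Lemma \ref{lem3.2}; squaring the former yields the $\nu_1$-term $|v|^{-(2\gamma_D+1)\nu_1}|t|^{4-(2\gamma_D+1)(2-\nu_1)}$, pairing \eqref{3.9} with $\|\varPhi_0\|$ yields the $\nu_2$-term $|v|^{-(\gamma_D+1)\nu_2}|t|^{3-(\gamma_D+1)(2-\nu_2)}$, and pairing \eqref{3.8} with $\|x\varPhi_0\|$ yields the $\nu_3$-term $|v|^{-(\gamma_D+1/2)\nu_3}|t|^{2-(\gamma_D+1/2)(2-\nu_3)}$, reproducing the three terms of Lemma \ref{lem3.2}. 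For the $V^\mathrm{s}$-parts I would invoke \eqref{4.7}, which gives $\|\int_0^t s(\nabla\tilde V_{|v|,s}^\mathrm{s})(ps)\,ds\|_{\mathscr{B}(L^2)}\le C$ uniformly, together with the estimate of $\tilde I$ in the proof of Lemma \ref{lem4.3}, which controls $\|\int_0^t s^2(\Delta\tilde V_{|v|,s}^\mathrm{s})(ps)\,ds\|_{\mathscr{B}(L^2)}$ by a constant when $\gamma_2>3/2$ and by $C|v|^{-\gamma_2\nu_4}|t|^{3-\gamma_2(2-\nu_4)}$ when $\gamma_2\le3/2$ (with $\nu_4\ne0$ forced at $\gamma_2=3/2$).

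Assembling the pieces via the triangle inequality and $\|\langle x\rangle^2\varPhi_0\|<\infty$ then gives the two stated bounds, the extra term $|v|^{-\gamma_2\nu_4}|t|^{3-\gamma_2(2-\nu_4)}$ surviving only in the $\gamma_2\le3/2$ case. The one point requiring care — and the only place I expect genuine bookkeeping, though it is routine — is the mixed $V^\mathrm{l}$--$V^\mathrm{s}$ contributions arising because the gradient and Laplacian integrals now contain both potentials: in the quadratic term the cross product of the $V^\mathrm{l}$- and $V^\mathrm{s}$-gradient integrals must be shown not to exceed the $\nu_3$-term, and the square of the $V^\mathrm{s}$-gradient integral together with the bounded $V^\mathrm{s}$-contributions to the linear and Laplacian terms must be absorbed into the additive constant. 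Since every $V^\mathrm{s}$-factor is uniformly bounded by \eqref{4.7} and the $\tilde I$ estimate of Lemma \ref{lem4.3}, each such cross term is merely a $V^\mathrm{l}$-bound times a constant, so no new decay estimate is needed, and the borderline $\gamma_2=3/2$ again forces $\nu_4\ne0$ exactly as in Lemma \ref{lem4.3}.
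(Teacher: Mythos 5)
Your proposal is correct and follows exactly the route the paper intends: the paper omits this proof, noting only that it is "yielded by the above observations", i.e.\ by combining the expansion of $e^{i\int_0^t(\tilde V^{\mathrm{l}}_{|v|,s}+\tilde V^{\mathrm{s}}_{|v|,s})(ps)\,ds}\langle x\rangle^2e^{-i\int_0^t(\cdots)\,ds}$ from the proof of Lemma \ref{lem3.2} with the bounds \eqref{3.8}--\eqref{3.9} for the $V^{\mathrm{l}}$-part and \eqref{4.7} together with the $\tilde I$ estimate of Lemma \ref{lem4.3} for the $V^{\mathrm{s}}$-part, using that $M_D(t)$ and $M_D^{\mathrm{s}}(t)$ commute. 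Your treatment of the cross terms --- bounding the mixed gradient product by a constant times the \eqref{3.8} bound, hence of the same form as the $\nu_3$-term, and absorbing all purely $V^{\mathrm{s}}$ contributions into the constant (or the $\nu_4$-term when $\gamma_2\le 3/2$) --- is precisely the bookkeeping the paper leaves implicit.
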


\begin{lem}\label{lem4.8}
Let $v$ and $\varPhi_v$ be as in Theorem \ref{thm4.2},
$V^\mathrm{s}\in\tilde{\mathscr{V}}^\mathrm{s}(1/2,1,1)$,
and $V^\mathrm{l}\in\mathscr{V}_D^\mathrm{l}(1/4)$. Then
\begin{equation}
\int_{-\infty}^\infty
\|V^\mathrm{vs}(x)U_{D,D}^\mathrm{s}(t)\varPhi_v\|\,dt
=O(|v|^{-1})
\end{equation}
holds as $|v|\to\infty$ for $V^\mathrm{vs}\in\mathscr{V}^\mathrm{vs}$.
\end{lem}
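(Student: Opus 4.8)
The plan is to follow the proof of Lemma~\ref{lem3.3} line by line, replacing the modifier $M_{D,v}(t)$ by the combined modifier $M_{D,D,v}^\mathrm{s}(t)$ and invoking Lemma~\ref{lem4.7} wherever Lemma~\ref{lem3.2} was used. First I would assume $\gamma_D<1/2$ for simplicity and put $I=\|V^\mathrm{vs}(x)U_{D,D}^\mathrm{s}(t)\varPhi_v\|$. Since $M_{D,D}^\mathrm{s}(t)=M_D(t)M_D^\mathrm{s}(t)$ is a function of $p$ alone, the Avron--Herbst formula \eqref{2.8} and \eqref{2.9} recast $I$, exactly as in Lemma~\ref{lem3.3}, as
\begin{equation*}
I=\|V^\mathrm{vs}(x+vt+e_1t^2/2)(1+K_0)^{-1}e^{-itK_0}f(p)M_{D,D,v}^\mathrm{s}(t)(1+K_0)\varPhi_0\|,
\end{equation*}
where I have used that $M_{D,D,v}^\mathrm{s}(t)$ commutes with $K_0$ and $f(p)$ and that $\varPhi_0=f(p)\varPhi_0$. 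Splitting $I\le I_1+I_2+I_3$ by inserting characteristic functions in $x$ precisely as in the proof of Lemma~\ref{lem3.3}, I reduce the proof to three integrals.

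The terms $I_1$ and $I_3$ require no new idea: $M_{D,D,v}^\mathrm{s}(t)$ is unitary and commutes with $K_0$ and $f(p)$, so the extra factor $M_D^\mathrm{s}(t)$ is invisible to these estimates. Proposition~\ref{prop2.2} gives $I_1\le C(1+\lambda_1|v||t|)^{-2}$, hence $\int_{-\infty}^\infty I_1\,dt=O(|v|^{-1})$; and \eqref{2.12}, valid for $|t|\ge1$ on the support $|x|<3\lambda_1|v||t|<4\lambda_1|v|\langle t\rangle$, together with \eqref{1.8}, gives $\int_{-\infty}^\infty I_3\,dt=O(|v|^{-1})$, exactly as in Lemma~\ref{lem3.3}. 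The only place where the short-range modifier enters is $I_2$, which I would bound by
\begin{equation*}
I_2\le C(1+\lambda_1|v||t|)^{-2}\|\langle x\rangle^2M_{D,D,v}^\mathrm{s}(t)(1+K_0)\varPhi_0\|
\end{equation*}
and then control by Lemma~\ref{lem4.7}, applied with $(1+K_0)\varPhi_0$ in place of $\varPhi_0$ (its Fourier transform still being supported in $\{|\xi|<\eta\}$).

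When $\gamma_2>3/2$ the bound of Lemma~\ref{lem4.7} is identical to that of Lemma~\ref{lem3.2}, so integrating $I_2$ reproduces the computation of Lemma~\ref{lem3.3} verbatim and yields $O(|v|^{-1})$. When $1<\gamma_2\le3/2$ there is exactly one extra summand $|v|^{-\gamma_2\nu_4}|t|^{3-\gamma_2(2-\nu_4)}$, which I would dispatch as in Lemma~\ref{lem4.4}: integrating $(1+\lambda_1|v||t|)^{-2}$ against it under the condition $\nu_4<2-2/\gamma_2$ gives $O(|v|^{-\gamma_2\nu_4-\{4-\gamma_2(2-\nu_4)\}})$, an exponent whose infimum over this range (attained as $\nu_4\uparrow2-2/\gamma_2$) is $-2\gamma_2$. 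The hard part is really only this bookkeeping, and the one thing worth verifying is that the admissible range of $\nu_4$ is nonempty --- which is guaranteed precisely by the standing hypothesis $\gamma_2>1$. Since then $-2\gamma_2<-1$, this summand contributes $O(|v|^{-2\gamma_2+\epsilon})=o(|v|^{-1})$, so $\int_{-\infty}^\infty I_2\,dt=O(|v|^{-1})$ in both cases. Collecting $I_1$, $I_2$, $I_3$ finishes the proof.
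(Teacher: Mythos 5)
Your proposal is correct and is essentially the proof the paper intends (the paper omits it, stating that it follows from "the above observations"): the decomposition $I\le I_1+I_2+I_3$ as in Lemma \ref{lem3.3}, with Lemma \ref{lem4.7} replacing Lemma \ref{lem3.2} in the bound for $I_2$, and the extra summand $|v|^{-\gamma_2\nu_4}|t|^{3-\gamma_2(2-\nu_4)}$ for $\gamma_2\le3/2$ handled exactly as in Lemma \ref{lem4.4}. Your bookkeeping for that summand (admissible range $\nu_4<2-2/\gamma_2$, nonempty since $\gamma_2>1$, infimum $-2\gamma_2<-1$) matches the paper's optimization argument.
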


\begin{lem}\label{lem4.9}
Let $v$ and $\varPhi_v$ be as in Theorem \ref{thm4.2},
$\epsilon>0$,
$V^\mathrm{s}\in\tilde{\mathscr{V}}^\mathrm{s}(1/2,1,1)$,
and $V^\mathrm{l}\in\mathscr{V}_D^\mathrm{l}(1/4)$.
Then
\begin{equation}
\begin{split}
&\int_{-\infty}^\infty
\|\{V^\mathrm{s}(x)-V^\mathrm{s}(pt-e_1t^2/2)\}U_{D,D}^\mathrm{s}(t)\varPhi_v\|\,dt\\
={}&O(|v|^{\max\{-1,-2(\gamma_2-1)+\epsilon,-2\{(\gamma_1-1)+\gamma_D\}+\epsilon\}})
\end{split}
\end{equation}
holds as $|v|\to\infty$.
\end{lem}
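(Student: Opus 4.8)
The plan is to follow the proof of Lemma \ref{lem4.5} line by line, inserting the extra drift produced by $M_D(t)$ exactly as in the proof of Lemma \ref{lem3.4}. The point is that $M_D(t)$ commutes with $M_D^\mathrm{s}(t)$, and, for $|v|\ge\eta/(3\lambda_1)$, both factors of $M_{D,D,v}^\mathrm{s}(t)=M_{D,v}(t)M_{D,v}^\mathrm{s}(t)$ admit the representations $e^{-i\int_0^t\tilde{V}_{|v|,s}^\mathrm{l}(ps)\,ds}$ and $e^{-i\int_0^t\tilde{V}_{|v|,s}^\mathrm{s}(ps)\,ds}$ by \eqref{3.4} and \eqref{4.4}, so conjugating $x$ through the product produces the \emph{sum} of the two drift terms. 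First I would apply the Avron-Herbst formula \eqref{2.8} (together with the analogue of \eqref{3.18} for $M_D^\mathrm{s}(t)$) to write $I=\|\{V^\mathrm{s}(x+vt+e_1t^2/2)-V^\mathrm{s}(pt+vt+e_1t^2/2)\}e^{-itK_0}M_{D,D,v}^\mathrm{s}(t)\varPhi_0\|$, and then split it exactly as in \eqref{4.10}, using $M_{D,D,v}^\mathrm{s}(t)\varPhi_0=f(p)M_{D,D,v}^\mathrm{s}(t)\varPhi_0$:
\begin{equation*}
I\le\|\bar{V}_{|v|,t}^\mathrm{s}(x)e^{-itK_0}f(p)M_{D,D,v}^\mathrm{s}(t)\varPhi_0\|+\|\{\tilde{V}_{|v|,t}^\mathrm{s}(x)-\tilde{V}_{|v|,t}^\mathrm{s}(pt)\}e^{-itK_0}M_{D,D,v}^\mathrm{s}(t)\varPhi_0\|.
\end{equation*}

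For the first term I would argue as in the proof of Lemma \ref{lem4.8}: the cut-off in $\tilde{V}_{|v|,t}^\mathrm{s}$ annihilates the inner contribution ($I_3=0$), Proposition \ref{prop2.2} supplies the $(1+\lambda_1|v||t|)^{-2}$ decay of the outer pieces, and the polynomial growth of the weight $\langle x\rangle^2M_{D,D,v}^\mathrm{s}(t)\varPhi_0$ is now controlled by Lemma \ref{lem4.7} in place of Lemma \ref{lem3.2}. The optimization over the $\nu_i$'s is identical to the one performed there and yields $O(|v|^{-1})$.

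For the second term I would use the Baker-Campbell-Hausdorff expansion of $\tilde{V}_{|v|,t}^\mathrm{s}(x)-\tilde{V}_{|v|,t}^\mathrm{s}(pt)$ exactly as in the proof of Lemma \ref{lem4.5}, producing a $(\nabla\tilde{V}_{|v|,t}^\mathrm{s})\cdot(x-pt)$ part and an $\tfrac{i}{2}t(\Delta\tilde{V}_{|v|,t}^\mathrm{s})$ part. After conjugating $x-pt$ by $e^{-itK_0}$ into $x$, and then conjugating $x$ through $M_{D,D,v}^\mathrm{s}(t)$ via \eqref{3.5} and \eqref{4.4}, I obtain
\begin{equation*}
\|xM_{D,D,v}^\mathrm{s}(t)\varPhi_0\|\le\|x\varPhi_0\|+\left\|\left(\int_0^ts(\nabla\tilde{V}_{|v|,s}^\mathrm{l})(ps)\,ds\right)\varPhi_0\right\|+\left\|\left(\int_0^ts(\nabla\tilde{V}_{|v|,s}^\mathrm{s})(ps)\,ds\right)\varPhi_0\right\|.
\end{equation*}
The $\|x\varPhi_0\|$-contribution and, thanks to the boundedness \eqref{4.7} of the $\tilde{V}^\mathrm{s}$-drift, the $\tilde{V}^\mathrm{s}$-contribution together give $O(|v|^{-1})$ after integration, just as in \eqref{4.12}. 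The $\tilde{V}^\mathrm{l}$-drift contribution (whose growth is governed by \eqref{3.8}) is precisely the cross term already optimized in \eqref{3.16}, hence $O(|v|^{-2\{(\gamma_1-1)+\gamma_D\}+\epsilon})$. Finally the Laplacian part reproduces \eqref{4.13} verbatim, giving $O(|v|^{-2(\gamma_2-1)+\epsilon})$. Collecting these bounds with the $O(|v|^{-1})$ from the first term yields the asserted rate.

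No new integral estimate is actually required: every optimization has already been carried out in Lemmas \ref{lem3.4} and \ref{lem4.5}. Consequently the only genuinely delicate point is the \emph{additive} appearance of both drift terms $\int_0^ts(\nabla\tilde{V}_{|v|,s}^\mathrm{l})(ps)\,ds$ and $\int_0^ts(\nabla\tilde{V}_{|v|,s}^\mathrm{s})(ps)\,ds$ upon conjugating $x$ through the product modifier, together with the bookkeeping that routes the $\tilde{V}^\mathrm{l}$-drift to the $(\gamma_1,\gamma_D)$-estimate \eqref{3.16} and the Laplacian term to the $\gamma_2$-estimate \eqref{4.13}; once this decomposition is set up correctly, the rest is a direct assembly.
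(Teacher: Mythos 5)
Your proposal is correct and is precisely the assembly the paper intends: the paper omits the proof of Lemma \ref{lem4.9}, stating that it follows from the preceding observations, namely combining the decomposition and BCH argument of Lemma \ref{lem4.5} with the extra $\tilde{V}^\mathrm{l}$-drift handled as in Lemma \ref{lem3.4} (via \eqref{3.16}) and the weight estimate of Lemma \ref{lem4.7} in place of Lemma \ref{lem3.2}. Your routing of the three exponents ($-1$ from the cut-off and bounded $\tilde{V}^\mathrm{s}$-drift, $-2\{(\gamma_1-1)+\gamma_D\}+\epsilon$ from the $\tilde{V}^\mathrm{l}$-drift cross term, $-2(\gamma_2-1)+\epsilon$ from the Laplacian term) matches the stated rate exactly.
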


\begin{lem}\label{lem4.10}
Let $v$ and $\varPhi_v$ be as in Theorem \ref{thm4.2},
$\epsilon>0$,
$V^\mathrm{s}\in\tilde{\mathscr{V}}^\mathrm{s}(1/2,1,1)$, and
$V^\mathrm{l}\in\mathscr{V}_D^\mathrm{l}(1/4)$.
Then
\begin{equation}
\int_{-\infty}^\infty
\|\{V^\mathrm{l}(x)-V^\mathrm{l}(pt-e_1t^2/2)\}U_{D,D}^\mathrm{s}(t)\varPhi_v\|\,dt
=O(|v|^{-(4\gamma_D-1)+\epsilon})
\end{equation}
holds as $|v|\to\infty$.
\end{lem}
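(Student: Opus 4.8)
The plan is to follow the proof of Lemma~\ref{lem3.5} almost verbatim, the only structural change being that the long-range modifier $M_D(t)$ is replaced throughout by the combined modifier $M_{D,D}^\mathrm{s}(t)=M_D(t)M_D^\mathrm{s}(t)$. Writing $I=\|\{V^\mathrm{l}(x)-V^\mathrm{l}(pt-e_1t^2/2)\}U_{D,D}^\mathrm{s}(t)\varPhi_v\|$ and supposing $\gamma_D<1/2$ for simplicity, I would first use the Avron-Herbst formula \eqref{2.8} and the conjugation \eqref{2.9}, exactly as around \eqref{3.18}, to rewrite
\begin{equation*}
I=\|\{V^\mathrm{l}(x+vt+e_1t^2/2)-V^\mathrm{l}(pt+vt+e_1t^2/2)\}e^{-itK_0}M_{D,D,v}^\mathrm{s}(t)\varPhi_0\|.
\end{equation*}
Introducing $\tilde{V}_{|v|,t}^\mathrm{l}(x)$ as in the proof of Lemma~\ref{lem3.2} and splitting as in \eqref{3.19}, the first term is treated by $I_3=0$ and produces $O(|v|^{-1})$ precisely as in \eqref{3.20}.

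For the second term I would invoke the Baker-Campbell-Hausdorff decomposition of $\tilde{V}_{|v|,t}^\mathrm{l}(x)-\tilde{V}_{|v|,t}^\mathrm{l}(pt)$ used in the proof of Lemma~\ref{lem3.5}, reducing matters to $\|\nabla\tilde{V}_{|v|,t}^\mathrm{l}\|_{L^\infty}\|xM_{D,D,v}^\mathrm{s}(t)\varPhi_0\|$ and $|t|\|\Delta\tilde{V}_{|v|,t}^\mathrm{l}\|_{L^\infty}\|\varPhi_0\|$, after using $e^{itK_0}(x-pt)e^{-itK_0}=x$. The single new feature is the conjugation of $x$ by the combined modifier. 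Since $M_D(t)$ and $M_D^\mathrm{s}(t)$ commute, both being functions of $p$ and $t$ only, the generator of $M_{D,D,v}^\mathrm{s}(t)$ on $\mathrm{supp}\,\hat{\varPhi}_0$ is the sum of those of $M_{D,v}(t)$ and $M_{D,v}^\mathrm{s}(t)$ by \eqref{3.4} and \eqref{4.4}, so that
\begin{equation*}
\|xM_{D,D,v}^\mathrm{s}(t)\varPhi_0\|\le\|x\varPhi_0\|+\left\|\left(\int_0^ts(\nabla\tilde{V}_{|v|,s}^\mathrm{l})(ps)\,ds\right)\varPhi_0\right\|+\left\|\left(\int_0^ts(\nabla\tilde{V}_{|v|,s}^\mathrm{s})(ps)\,ds\right)\varPhi_0\right\|.
\end{equation*}

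The first two summands together with the Laplacian term are exactly those already estimated in \eqref{3.22}, \eqref{3.25} and \eqref{3.26}, all of which are $O(|v|^{-(4\gamma_D-1)+\epsilon})$ or better. The only genuinely new integral is the one arising from the short-range generator, namely $\int_{-\infty}^\infty\|\nabla\tilde{V}_{|v|,t}^\mathrm{l}\|_{L^\infty}\|(\int_0^ts(\nabla\tilde{V}_{|v|,s}^\mathrm{s})(ps)\,ds)\varPhi_0\|\,dt$. Here I would apply the uniform operator-norm bound \eqref{4.7} from the proof of Lemma~\ref{lem4.3}, which rests only on $\gamma_1>1$ and $\gamma_0>1/2$ and is therefore insensitive to $\gamma_2$; this collapses the integral to $C\int_{-\infty}^\infty\|\nabla\tilde{V}_{|v|,t}^\mathrm{l}\|_{L^\infty}\,dt=O(|v|^{-2\gamma_D+\epsilon})$, the same rate as in \eqref{3.22}. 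Since $-1<-2\gamma_D<-(4\gamma_D-1)$ for $\gamma_D<1/2$, this contribution is strictly dominated, and assembling it with \eqref{3.20}, \eqref{3.22}, \eqref{3.25} and \eqref{3.26} yields the asserted bound.

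I expect the main obstacle to be bookkeeping rather than analysis. One must check carefully that passing from $M_D(t)$ to $M_{D,D}^\mathrm{s}(t)$ introduces exactly one extra, operator-norm-bounded, term in the conjugation of $x$, and in particular that the second-order weight subtleties appearing in Lemma~\ref{lem4.7}, such as the extra factor $|v|^{-\gamma_2\nu_4}|t|^{3-\gamma_2(2-\nu_4)}$ present when $\gamma_2\le3/2$, never enter here: this lemma uses only the first-derivative estimate \eqref{4.7}, not the full weight bound on $\|\langle x\rangle^2M_{D,D,v}^\mathrm{s}(t)\varPhi_0\|$. Once this is verified, the conclusion follows mechanically from the estimates of \S3 together with Lemma~\ref{lem4.3}.
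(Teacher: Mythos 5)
Your proposal follows exactly the route the paper intends (the paper omits this proof, asserting that it follows from the preceding observations), and the genuinely new ingredient you isolate is the right one: conjugating $x$ by the combined modifier produces exactly one extra summand $\bigl\|\bigl(\int_0^ts(\nabla\tilde{V}_{|v|,s}^\mathrm{s})(ps)\,ds\bigr)\varPhi_0\bigr\|$, which the uniform bound \eqref{4.7} (resting only on $\gamma_1>1$, $\gamma_0>1/2$) reduces to a constant, so that the new integral is absorbed into the $O(|v|^{-2\gamma_D+\epsilon})$ rate of \eqref{3.22} and hence dominated by $O(|v|^{-(4\gamma_D-1)+\epsilon})$.

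One claim in your closing paragraph is, however, incorrect, although it does not sink the proof. You assert that the full weight bound on $\|\langle x\rangle^2M_{D,D,v}^\mathrm{s}(t)\varPhi_0\|$ from Lemma \ref{lem4.7} never enters. It does enter: the first term of the splitting \eqref{3.19} is estimated, following the proof of Lemma \ref{lem3.4}, by the decomposition $I_1+I_2+I_3$, and the piece $I_2$ is controlled by pairing $F(|x|>\lambda_1|v||t|)\langle x\rangle^{-2}$ against $\langle x\rangle^2M_{D,D,v}^\mathrm{s}(t)\varPhi_0$. In \S3 this is exactly where Lemma \ref{lem3.2} is invoked, and in the present setting it must be replaced by Lemma \ref{lem4.7}, including the additional term $|v|^{-\gamma_2\nu_4}|t|^{3-\gamma_2(2-\nu_4)}$ when $\gamma_2\le3/2$. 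The repair is immediate: as in the proof of Lemma \ref{lem4.4}, that additional term contributes $O(|v|^{-\gamma_2\nu_4-\{4-\gamma_2(2-\nu_4)\}})=O(|v|^{-1})$ for admissible $\nu_4$ (using $\gamma_2>1$), so \eqref{3.20} survives and your final assembly stands. But as written, your argument leaves the $I_2$ piece of the $\bar{V}_{|v|,t}^\mathrm{l}$ term unjustified for the new modifier; you should route it through Lemma \ref{lem4.7} rather than claim that lemma is not needed.
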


Here we introduce auxiliary wave operators
$$\Omega_{D,D}^{\mathrm{s},\pm}=\slim_{t\to\pm\infty}
e^{itH}U_{D,D}^\mathrm{s}(t)$$
with $H=H_0+V^\mathrm{vs}+V^\mathrm{s}+V^\mathrm{l}$.
Then we see that
$$\Omega_{D,D}^{\mathrm{s},\pm}=W_D^\pm I_D^{\mathrm{s},\pm},\quad
I_D^{\mathrm{s},\pm}=\slim_{t\to\pm\infty}M_D^\mathrm{s}(t)$$
exist.

\begin{lem}\label{lem4.11}
Let $v$ and $\varPhi_v$ be as in Theorem \ref{thm4.2}, and $\epsilon>0$.
Then
\begin{equation}
\begin{split}
&\sup_{t\in\boldsymbol{R}}\|(e^{-itH}\Omega_{D,D}^{\mathrm{s},-}-U_{D,D}^\mathrm{s}(t))\varPhi_v\|\\
={}&O(|v|^{\max\{-1,-2(\gamma_2-1)+\epsilon,-2\{(\gamma_1-1)+\gamma_D\}+\epsilon,-(4\gamma_D-1)+\epsilon\}})
\end{split}
\end{equation}
holds as $|v|\to\infty$ for $V^\mathrm{vs}\in\mathscr{V}^\mathrm{vs}$,
$V^\mathrm{s}\in\tilde{\mathscr{V}}^\mathrm{s}(1/2,1,1)$,
and $V^\mathrm{l}\in\mathscr{V}_D^\mathrm{l}(1/4)$.
\end{lem}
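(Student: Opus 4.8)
The plan is to derive the bound by the Cook--Kuroda argument, exactly as Lemmas \ref{lem2.5}, \ref{lem3.6} and \ref{lem4.6} were obtained, but now driven by the three integral estimates of Lemmas \ref{lem4.8}, \ref{lem4.9} and \ref{lem4.10}. Since $e^{-itH}$ is unitary and $\Omega_{D,D}^{\mathrm{s},-}\varPhi_v=\slim_{s\to-\infty}e^{isH}U_{D,D}^\mathrm{s}(s)\varPhi_v$, I would first write, by the fundamental theorem of calculus applied to $\tau\mapsto e^{i\tau H}U_{D,D}^\mathrm{s}(\tau)\varPhi_v$ between $\tau=t$ and $\tau\to-\infty$,
$$\|(e^{-itH}\Omega_{D,D}^{\mathrm{s},-}-U_{D,D}^\mathrm{s}(t))\varPhi_v\|=\left\|\int_{-\infty}^t\frac{d}{d\tau}\bigl(e^{i\tau H}U_{D,D}^\mathrm{s}(\tau)\bigr)\varPhi_v\,d\tau\right\|.$$

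The next step is to compute the integrand. Here I would use that $M_D(t)$ commutes with $M_D^\mathrm{s}(t)$, whence $i\,d(M_{D,D}^\mathrm{s}(t))/dt=\{V^\mathrm{l}(pt+e_1t^2/2)+V^\mathrm{s}(pt+e_1t^2/2)\}M_{D,D}^\mathrm{s}(t)$, together with the Avron--Herbst commutation relation of the form \eqref{3.18}, namely $e^{-itH_0}\{V^\mathrm{l}(pt+e_1t^2/2)+V^\mathrm{s}(pt+e_1t^2/2)\}=\{V^\mathrm{l}(pt-e_1t^2/2)+V^\mathrm{s}(pt-e_1t^2/2)\}e^{-itH_0}$. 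Since $H=H_0+V^\mathrm{vs}+V^\mathrm{s}+V^\mathrm{l}$, a direct computation gives
$$\frac{d}{d\tau}\bigl(e^{i\tau H}U_{D,D}^\mathrm{s}(\tau)\bigr)=e^{i\tau H}\,iV_\tau^{D,D}\,U_{D,D}^\mathrm{s}(\tau),$$
where the generator is exactly
$$V_\tau^{D,D}=V^\mathrm{vs}(x)+\{V^\mathrm{s}(x)-V^\mathrm{s}(p\tau-e_1\tau^2/2)\}+\{V^\mathrm{l}(x)-V^\mathrm{l}(p\tau-e_1\tau^2/2)\}.$$
Discarding the unitary $e^{i\tau H}$ and enlarging the interval of integration to all of $\boldsymbol{R}$ then yields
$$\sup_{t\in\boldsymbol{R}}\|(e^{-itH}\Omega_{D,D}^{\mathrm{s},-}-U_{D,D}^\mathrm{s}(t))\varPhi_v\|\le\int_{-\infty}^\infty\|V_\tau^{D,D}U_{D,D}^\mathrm{s}(\tau)\varPhi_v\|\,d\tau.$$

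Finally I would split $V_\tau^{D,D}$ into its three summands and estimate each integral separately by Lemmas \ref{lem4.8}, \ref{lem4.9} and \ref{lem4.10}, which contribute $O(|v|^{-1})$, $O(|v|^{\max\{-1,-2(\gamma_2-1)+\epsilon,-2\{(\gamma_1-1)+\gamma_D\}+\epsilon\}})$ and $O(|v|^{-(4\gamma_D-1)+\epsilon})$ respectively; taking the maximum of the four resulting exponents produces the asserted bound. The existence of $\Omega_{D,D}^{\mathrm{s},\pm}$ required to justify the limit $s\to-\infty$ has been recorded immediately before the statement. Because all of the substantive work already lives in the three preceding lemmas, there is no real obstacle; the only point demanding care is the derivative computation, specifically verifying that the modifier derivative cancels precisely the $V^\mathrm{s}(p\tau-e_1\tau^2/2)$ and $V^\mathrm{l}(p\tau-e_1\tau^2/2)$ contributions, so that $V_\tau^{D,D}$ carries no leftover momentum-dependent term beyond the two differences handled by Lemmas \ref{lem4.9} and \ref{lem4.10}.
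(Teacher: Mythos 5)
Your proposal is correct and is exactly the argument the paper intends (and omits): the Cook-method identity $\sup_t\|(e^{-itH}\Omega_{D,D}^{\mathrm{s},-}-U_{D,D}^\mathrm{s}(t))\varPhi_v\|\le\int_{-\infty}^{\infty}\|V_\tau^{D,D}U_{D,D}^\mathrm{s}(\tau)\varPhi_v\|\,d\tau$ with the generator computed via \eqref{3.18} and the commutativity of $M_D(t)$ with $M_D^\mathrm{s}(t)$, followed by termwise application of Lemmas \ref{lem4.8}, \ref{lem4.9} and \ref{lem4.10}. Your derivative computation and the resulting $V_\tau^{D,D}$ are right, and the four exponents combine to the stated maximum.
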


Based on Lemmas \ref{lem4.8}, \ref{lem4.9}, \ref{lem4.10} and \ref{lem4.11},
Theorem \ref{thm4.2} can be shown by the additional conditions
\begin{equation}
\gamma_2>5/4,\quad
\gamma_1+\gamma_D>5/4,\quad
\gamma_D>3/8
\end{equation}
in the same way as in the proof of Theorem \ref{thm3.1}.
So we omit its proof.
Here we note that $\gamma_1+\gamma_D>5/4$ is satsified
even when $\gamma_1>1$ and $\gamma_D>1/4$.
\bigskip

\leftline{\textbf{Acknowledgement}}
The first author
is partially supported by the Grant-in-Aid for Scientific Research
(C) \#17K05319 from JSPS.

\end{document}